
\documentclass[12pt,notitlepage]{amsart}%
\usepackage{amssymb}
\usepackage{amsfonts}
\usepackage{graphicx}
\usepackage{amscd}
\usepackage{graphicx}
\usepackage{amsmath}
\usepackage{hyperref}%
\setcounter{MaxMatrixCols}{30}
\newtheorem{theorem}{Theorem}
\theoremstyle{plain}

\newtheorem{corollary}{Corollary}

\newtheorem{definition}{Definition}
\newtheorem{example}{Example}

\newtheorem{lemma}{Lemma}
\newtheorem{notation}{Notation}

\newtheorem{proposition}{Proposition}
\newtheorem{remark}{Remark}

\numberwithin{equation}{section}
\numberwithin{theorem}{section}
\numberwithin{lemma}{section}
\numberwithin{proposition}{section}
\numberwithin{corollary}{section}

\textwidth 165mm
\textheight 230.7mm
\oddsidemargin=0mm
\evensidemargin=0mm
\addtolength{\topmargin}{-1in}

\ifx\pdfoutput\relax\let\pdfoutput=\undefined\fi
\newcount\msipdfoutput
\ifx\pdfoutput\undefined\else
\ifcase\pdfoutput\else
\msipdfoutput=1
\ifx\paperwidth\undefined\else
\ifdim\paperheight=0pt\relax\else\pdfpageheight\paperheight\fi
\ifdim\paperwidth=0pt\relax\else\pdfpagewidth\paperwidth\fi
\fi\fi\fi
\begin{document}
\title{Non-Archimedean Statistical Field Theory}
\author{W. A. Z\'{u}\~{n}iga-Galindo}
\address{University of Texas Rio Grande Valley\\
School of Mathematical \& Statistical Sciences\\
One West University Blvd\\
Brownsville, TX 78520, United States.}
\email{wilson.zunigagalindo@utrgv.edu.}
\thanks{The author was partially supported by the Debnath Endowed Professorship.}

\begin{abstract}
We construct in a rigorous mathematical way interacting quantum field theories
on a $p$-adic spacetime. The main result is the construction of a measure on a
function space which allows a rigorous definition of the partition function.
The advantage of the approach presented here is that all the perturbation
calculations can be carried out in the standard way using functional
derivatives, but in a mathematically rigorous way.

\end{abstract}
\keywords{statistical field theory, quantum field theory, quantum field theory on
lattices, $p$-adic numbers.}
\subjclass{Primary: 81T25, 81T28. Secondary: 82B26, 82B20, 60G15.}
\date{18/04/2022}
\maketitle
\tableofcontents

\section{Introduction}

In this article we construct (in a rigorous mathematical way) interacting
quantum field theories over a $p$-adic spacetime in an arbitrary dimension. We
provide a large family of energy functionals $E(\varphi,J)$ admitting natural
discretizations in finite-dimensional vector spaces such that the partition
function
\begin{equation}
Z^{\text{phys}}(J)=\int D(\varphi)e^{-\frac{1}{K_{B}T}E(\varphi,J)}
\label{Eq_0}%
\end{equation}
can be defined rigorously as the limit of the mentioned discretizations. Our
main result is the construction of a measure on a function space such that
(\ref{Eq_0}) makes mathematical sense, and the calculations \ of the $n$-point
correlation functions can be carried out using perturbation expansions via
functional derivatives, in a rigorous mathematical way. Our results include
$\varphi^{4}$-theories. In this case, we show that the $n$-point correlation
functions admit a convergent expansion in the coupling parameter in a certain
space of distributions. By the Wick theorem all of the distributions appearing
in the mentioned series can be expressed as a sum of product of Green
functions, which have singularities. Consequently a renormalization procedure
is required, we expect \ to study the renormalization of the Feynman integrals
attached to (\ref{Eq_0}) in a forthcoming publication.

From now on $p$ denotes a fixed prime number. A $p$-adic number is a series of
the form%
\begin{equation}
x=x_{-k}p^{-k}+x_{-k+1}p^{-k+1}+\ldots+x_{0}+x_{1}p+\ldots,\text{ with }%
x_{-k}\neq0\text{,} \label{p-adic-number}%
\end{equation}
where the $x_{j}$s \ are $p$-adic digits, i.e. numbers in the set $\left\{
0,1,\ldots,p-1\right\}  $. The set of all possible series of the form
(\ref{p-adic-number}) constitutes the field of $p$-adic numbers $\mathbb{Q}%
_{p}$. There are natural field operations, sum and multiplication, on series
of the form (\ref{p-adic-number}), see e.g. \cite{Koblitz}. There is also a
natural norm in $\mathbb{Q}_{p}$ defined as $\left\vert x\right\vert
_{p}=p^{k}$, for a nonzero $p$-adic number of the form (\ref{p-adic-number}).
The field of $p$-adic numbers with the distance induced by $\left\vert
\cdot\right\vert _{p}$ is a complete ultrametric space. The ultrametric (or
non-Archimedean) property refers to the fact that $\left\vert x-y\right\vert
_{p}\leq\max\left\{  \left\vert x-z\right\vert _{p},\left\vert z-y\right\vert
_{p}\right\}  $ for any $x$, $y$, $z\in\mathbb{Q}_{p}$. We denote by
$\mathbb{Z}_{p}$ the unit ball, which consists of all series with expansions
of the form (\ref{p-adic-number}) with $-k\geq0$. We extend the $p-$adic norm
to $\mathbb{Q}_{p}^{N}$ by taking $||x||_{p}=\max_{1\leq i\leq N}|x_{i}|_{p}$,
for $x=(x_{1},\dots,x_{N})\in\mathbb{Q}_{p}^{N}$.

A fundamental scientific problem is the understanding of the structure of
space-time at the level of the Planck scale, and the construction of
physical-mathematical models of it. This problem occurs naturally when trying
to unify general relativity and quantum mechanics. In the 1930s Bronstein
showed that general relativity and quantum mechanics imply that the
uncertainty $\Delta x$ of any length measurement satisfies $\Delta x\geq
L_{\text{Planck}}:=\sqrt{\frac{\hbar G}{c^{3}}}$, where $L_{\text{Planck}}$ is
the Planck length ($L_{\text{Planck}}\approx10^{-33}$ $cm$). This implies that
space-time is not an infinitely divisible continuum (mathematically speaking,
the spacetime must be a completely disconnected topological space at the level
of the Planck scale). Bronstein's \ inequality has motivated the development
of several different physical theories. At any rate, this inequality implies
the need of using non-Archimedean mathematics in models dealing with the
Planck scale. In the 1980s, Volovich proposed the conjecture that the
space-time at the Planck scale is non-Archimedean, see \cite{Volovich1}. This
conjecture has propelled a wide variety of investigations in cosmology,
quantum mechanics, string theory, QFT, etc., and the influence of this
conjecture is still relevant nowadays, see e.g. \cite{Abdesselam},
\cite{Bocardo-Zuniga}-\cite{GC-Zuniga}, \cite{Gubser et al.}-\cite{Harlow et
al}, \cite{Kochubei et al}-\cite{KKZuniga}, \cite{LM89}-\cite{Mis2},
\cite{V-V-Z}-\cite{Zuniga-LNM-2016}.

The space $\mathbb{Q}_{p}^{N}$ has a very rich mathematical structure. The
axiomatic quantum field \ theory can be extended to $\mathbb{Q}_{p}^{N}$. In
\cite{Mendoza-Zuniga}, we construct a family of quantum scalar fields over a
$p-$adic spacetime which satisfy $p-$adic analogues of the
G\aa rding--Wightman axioms. Since the space of test functions on
$\mathbb{Q}_{p}^{N}$ is nuclear the techniques of white noise calculus are
available in the $p$-adic setting, see e.g. \cite{Ber-Kon},
\cite{Gelfand-Vilenkin}, \cite{Huang-Yang}, \cite{Hida et al}. This implies
that a rigorous functional integral approach is available in the $p$-adic
framework, see e.g. \cite{Glimm-Jaffe}, \cite{Simon-0}, \cite{Simon-1}. In
\cite{Zuniga-JFAA}, see also \cite[Chapter 11]{KKZuniga},
\cite{Albeverio-et-al-3}-\cite{Albeverio-et-al-4}, we introduced a class of
non-Archimedean massive Euclidean fields, in arbitrary dimension, which are
constructed as solutions of certain covariant $p$-adic stochastic
pseudo-differential equations, by using techniques of white noise calculus. In
\cite{Arroyo-Zuniga}, we construct a large class of interacting Euclidean
quantum field theories, over a $p$-adic space time, by using white noise
calculus. These quantum fields fulfill all the Osterwalder-Schrader axioms,
except the reflection positivity. In all these theories the time is a $p$-adic
variable. Since $\mathbb{Q}_{p}$ is not an ordered field, there is no notion
of past and future. In certain theories, it is possible to introduce a
quadratic form. The orthogonal group of this form plays the role of Lorentz
group. Anyway, we do not have a light cone structure, and then this type of
theory is also acausal, see \cite{Mendoza-Zuniga}. The relevant feature is
that the vacuum of\ all these theories performs fluctuations.

In the case of $\varphi^{4}$-theories the energy functional $E(\varphi,0)$
takes the form%
\begin{align}
E(\varphi,0;\delta,\gamma,\alpha_{2},\alpha_{4}) &  =\frac{\gamma}{2}\text{ }%
{\textstyle\int\limits_{\mathbb{Q}_{p}^{N}}}
\varphi\left(  x\right)  \boldsymbol{W}\left(  \partial,\delta\right)
\varphi\left(  x\right)  d^{N}x+\frac{\alpha_{2}}{2}\text{ }%
{\textstyle\int\limits_{\mathbb{Q}_{p}^{N}}}
\varphi^{2}\left(  x\right)  d^{N}x\nonumber\\
&  +\frac{\alpha_{4}}{2}%
{\textstyle\int\limits_{\mathbb{Q}_{p}^{N}}}
\varphi^{4}\left(  x\right)  d^{N}x,\label{Eq_energy}%
\end{align}
where $\varphi:\mathbb{Q}_{p}^{N}\rightarrow\mathbb{R}$ is a test function
($\varphi\in\mathcal{D}_{\mathbb{R}}\left(  \mathbb{Q}_{p}^{N}\right)  $),
$\delta>N$, $\gamma>0$, $\alpha_{2}\geq0$, $\alpha_{4}\geq0$, and
$\boldsymbol{W}\left(  \partial,\delta\right)  \varphi\left(  x\right)
=\mathcal{F}_{\kappa\rightarrow x}^{-1}(A_{w_{\delta}}(\left\Vert
\kappa\right\Vert )\mathcal{F}_{x\rightarrow\kappa}\varphi)$ is
pseudo-differential operator, whose symbol has a singularity at the origin.

An interesting observation is that the one-dimensional Vladimirov operator is
a special case of the operators $\boldsymbol{W}\left(  \partial,\delta\right)
$, in this case the action $E(\varphi,0;\delta,\gamma,0,0)$ appeared in
$p$-adic string theory, see \cite{Spokoiny}, \cite{Zhang}, \cite{Zabrodin},
see also \cite{GC-Zuniga} and the references therein.

In order to make sense of the partition function attached to $E(\varphi
,0;\delta,\gamma,\alpha_{2},\alpha_{4})$, see (\ref{Eq_0}), we discretize the
fields like in classical QFT. As fields we use test functions $\varphi\in$
$\mathcal{D}_{\mathbb{R}}\left(  \mathbb{Q}_{p}^{N}\right)  $, which are
locally constant with compact support. We have $\mathcal{D}_{\mathbb{R}%
}\left(  \mathbb{Q}_{p}^{N}\right)  =\cup_{l=1}^{\infty}\mathcal{D}%
_{\mathbb{R}}^{l}\left(  \mathbb{Q}_{p}^{N}\right)  $, where $\mathcal{D}%
_{\mathbb{R}}^{l}\left(  \mathbb{Q}_{p}^{N}\right)  \simeq\mathbb{R}^{\#G_{l}%
}$ is a real, finite dimensional vector space consisting of test functions
supported in the ball $B_{l}^{N}=\left\{  x\in\mathbb{Q}_{p}^{N};\left\Vert
x\right\Vert _{p}\leq p^{l}\right\}  $ having the form%
\begin{equation}
\varphi\left(  x\right)  =%
{\textstyle\sum\limits_{\boldsymbol{i}\in G_{l}}}
\varphi\left(  \boldsymbol{i}\right)  \Omega\left(  p^{l}\left\Vert
x-\boldsymbol{i}\right\Vert _{p}\right)  \text{, \ }\varphi\left(
\boldsymbol{i}\right)  \in\mathbb{R}\text{,}\label{Eq_0_1}%
\end{equation}
where $G_{l}$ is a finite set of indices and $\Omega\left(  p^{l}\left\Vert
x-\boldsymbol{i}\right\Vert _{p}\right)  $ is the characteristic function of
the ball $B_{-l}^{N}\left(  \boldsymbol{i}\right)  =\left\{  x\in
\mathbb{Q}_{p}^{N};\left\Vert x-\boldsymbol{i}\right\Vert _{p}\leq
p^{-l}\right\}  $. Now a natural discretization of partition function
$\mathcal{Z}^{\left(  l\right)  }$ is obtained by restricting the fields to
$\mathcal{D}_{\mathbb{R}}^{l}\left(  \mathbb{Q}_{p}^{N}\right)  \simeq
\mathbb{R}^{\#G_{l}}$ as follows. By identifying $\varphi$ with the column
vector $\left[  \varphi\left(  \boldsymbol{i}\right)  \right]
_{\boldsymbol{i}\in G_{l}}$, one obtains that
\[
E(\varphi,0;\delta,\gamma,\alpha_{2},0)=%
{\textstyle\sum\limits_{\boldsymbol{i},\boldsymbol{j}\in G_{l}}}
p^{-lN}U_{\boldsymbol{i},\boldsymbol{j}}(l)\varphi\left(  \boldsymbol{i}%
\right)  \varphi\left(  \boldsymbol{j}\right)  ,
\]
is a quadratic form in $\left[  \varphi\left(  \boldsymbol{i}\right)  \right]
_{\boldsymbol{i}\in G_{l}}$, cf. Lemma \ref{Lemma6}, and thus taking
$K_{B}T=1$, it is natural to propose that
\[
\mathcal{Z}^{\left(  l\right)  }=\int D_{l}(\varphi)e^{-E(\varphi
,0;\delta,\gamma,\alpha_{2},0)}\overset{\text{def.}}{=}%
{\textstyle\int\limits_{\mathbb{R}^{\#G_{l}}}}
e^{-%
{\textstyle\sum\limits_{\boldsymbol{i},\boldsymbol{j}\in G_{l}}}
p^{-lN}U_{\boldsymbol{i},\boldsymbol{j}}(l)\varphi\left(  \boldsymbol{i}%
\right)  \varphi\left(  \boldsymbol{j}\right)  }%
{\textstyle\prod\limits_{\boldsymbol{i}\in G_{l}}}
d\varphi\left(  \boldsymbol{i}\right)  ,
\]
where $%
{\textstyle\prod\nolimits_{\boldsymbol{i}\in G_{l}}}
d\varphi\left(  \boldsymbol{i}\right)  $ is the Lebesgue measure on
$\mathbb{R}^{\#G_{l}}$, which is a finite dimensional Gaussian integral. We
denote the corresponding Gaussian measure as $\mathbb{P}_{l}$. The next step
is to show the existence of a probability measure $\mathbb{P}$ such that
$\mathbb{P=}\lim_{l\rightarrow\infty}\mathbb{P}_{l}$ `in some sense'. This
requires passing to the momenta space and using the Lizorkin space
$\mathcal{L}_{\mathbb{R}}\left(  \mathbb{Q}_{p}^{N}\right)  \subset
\mathcal{D}_{\mathbb{R}}\left(  \mathbb{Q}_{p}^{N}\right)  $, resp.
$\mathcal{L}_{\mathbb{R}}^{l}\left(  \mathbb{Q}_{p}^{N}\right)  \subset
\mathcal{D}_{\mathbb{R}}^{l}\left(  \mathbb{Q}_{p}^{N}\right)  $. The key
point is that the operator
\[
\frac{\gamma}{2}\boldsymbol{W}\left(  \partial,\delta\right)  +\frac
{\alpha_{2}}{2}:\mathcal{L}_{\mathbb{R}}\left(  \mathbb{Q}_{p}^{N}\right)
\rightarrow\mathcal{L}_{\mathbb{R}}\left(  \mathbb{Q}_{p}^{N}\right)
\]
has an inverse in $\mathcal{L}_{\mathbb{R}}\left(  \mathbb{Q}_{p}^{N}\right)
$ for any $\alpha_{2}\geq0$. The construction of the measure $\mathbb{P}$ is
made in two steps. In the first step, by using Kolmogorov's consistency
theorem, one shows the existence of a unique probability measure $\mathbb{P}$
in $\mathbb{R}^{\infty}\cup\left\{  \text{point}\right\}  $ such any linear
functional $f\rightarrow\int_{\mathcal{L}_{\mathbb{R}}^{l}\left(
\mathbb{Q}_{p}^{N}\right)  }fd\mathbb{P}_{l}$, where $f$ is a continuous
bounded function in $\mathcal{L}_{\mathbb{R}}^{l}\left(  \mathbb{Q}_{p}%
^{N}\right)  $, has unique extension of the form $\int_{\mathcal{L}%
_{\mathbb{R}}^{l}\left(  \mathbb{Q}_{p}^{N}\right)  }fd\mathbb{P}_{l}%
=\int_{\mathcal{L}_{\mathbb{R}}^{l}\left(  \mathbb{Q}_{p}^{N}\right)
}fd\mathbb{P}$, cf. Lemma \ref{Lemma11}. In the second step by using the
\ Gel'fand triple $\mathcal{L}_{\mathbb{R}}\left(  \mathbb{Q}_{p}^{N}\right)
\hookrightarrow L_{\mathbb{R}}^{2}\left(  \mathbb{Q}_{p}^{N}\right)
\hookrightarrow\mathcal{L}_{\mathbb{R}}^{\prime}\left(  \mathbb{Q}_{p}%
^{N}\right)  $, where $\mathcal{L}_{\mathbb{R}}^{\prime}\left(  \mathbb{Q}%
_{p}^{N}\right)  $ is the topological dual of $\mathcal{L}_{\mathbb{R}}\left(
\mathbb{Q}_{p}^{N}\right)  $, and the Bochner-Minlos theorem, there exists a
probability measure $\mathbb{P}$ on $\left(  \mathcal{L}_{\mathbb{R}}^{\prime
}\left(  \mathbb{Q}_{p}^{N}\right)  ,\mathcal{B}\right)  $, \ that coincides
with the probability measure constructed in the first step, cf. Theorem
\ref{Theorem1}.

For an interaction energy $E_{\text{int}}(\varphi)$ satisfying $\exp\left(
-E_{\text{int}}(\varphi)\right)  \leq1$, it verifies that
\[%
{\displaystyle\int\nolimits_{\mathcal{L}_{\mathbb{R}}^{l}\left(
\mathbb{Q}_{p}^{N}\right)  }}
e^{-E_{\text{int}}\left(  \varphi\right)  }d\mathbb{P}_{l}=%
{\displaystyle\int\nolimits_{\mathcal{L}_{\mathbb{R}}^{l}\left(
\mathbb{Q}_{p}^{N}\right)  }}
e^{-E_{\text{int}}\left(  \varphi\right)  }d\mathbb{P\rightarrow}%
{\displaystyle\int\nolimits_{\mathcal{L}_{\mathbb{R}}\left(  \mathbb{Q}%
_{p}^{N}\right)  }}
e^{-E_{\text{int}}\left(  \varphi\right)  }d\mathbb{P}%
\]
as $l\rightarrow\infty$. Then a $\mathcal{P}\left(  \varphi\right)  $-theory
is given by a cylinder probability measure of the form%
\begin{equation}
\frac{1_{\mathcal{L}_{\mathbb{R}}}\left(  \varphi\right)  e^{-E_{\text{int}%
}\left(  \varphi\right)  }d\mathbb{P}}{%
{\displaystyle\int\nolimits_{\mathcal{L}_{\mathbb{R}}\left(  \mathbb{Q}%
_{p}^{N}\right)  }}
e^{-E_{\text{int}}\left(  \varphi\right)  }d\mathbb{P}} \label{measure_1}%
\end{equation}
in the space of fields $\mathcal{L}_{\mathbb{R}}\left(  \mathbb{Q}_{p}%
^{N}\right)  $. Notice that $\mathbb{P}$\ is a probability measure on
$\mathcal{L}_{\mathbb{R}}^{\prime}\left(  \mathbb{Q}_{p}^{N}\right)  $, but
due to the factor $1_{\mathcal{L}_{\mathbb{R}}}\left(  \varphi\right)  $ our
fields are test functions, and not distributions as in \cite{Arroyo-Zuniga},
\cite{Kochubei et al}, see also \cite{Albeverio-et-al-3}%
-\cite{Albeverio-et-al-4}, \cite{GS1999}, and the references therein. Then,
the Wick operator $:\cdot:$\ (or Wick regularization) is not required in the
definition of $E_{\text{int}}\left(  \varphi\right)  $. This is a very
relevant difference with respect to \cite{Arroyo-Zuniga}, \cite{Kochubei et
al}. Here we consider polynomial interactions. The advantage of the approach
presented here is that all the perturbation calculations can be carried out in
the standard way using functional derivatives, but in a mathematically
rigorous way, see Theorem \ref{Theorem2}. However, a renormalization procedure
is required. In \cite{Arroyo-Zuniga} we construct probability measures for
general, interacting QFTs, but using Hida-Kondratiev spaces, which are more
bigger than the spaces of distributions used here. However, doing explicit
calculations in this very general framework is not easy.

The mathematical framework presented here allows the construction of
complex-valued measures of type%
\[
\frac{1_{\mathcal{L}_{\mathbb{R}}}\left(  \varphi\right)  \exp\sqrt
{-1}\left\{  \frac{\alpha_{4}}{2}%
{\textstyle\int\limits_{\mathbb{Q}_{p}^{N}}}
\varphi^{4}\left(  x\right)  d^{N}x+%
{\textstyle\int\limits_{\mathbb{Q}_{p}^{N}}}
J(x)\varphi\left(  x\right)  d^{N}x\right\}  }{%
{\displaystyle\int\nolimits_{\mathcal{L}_{\mathbb{R}}\left(  \mathbb{Q}%
_{p}^{N}\right)  }}
\exp\sqrt{-1}\left\{  \frac{\alpha_{4}}{2}%
{\textstyle\int\limits_{\mathbb{Q}_{p}^{N}}}
\varphi^{4}\left(  x\right)  d^{N}x\right\}  d\mathbb{P}}d\mathbb{P}\text{.}%
\]
Furthermore all the corresponding perturbation expansions can be carried out
in the standard form. These measures are obtained from measures of type
(\ref{measure_1}) by performing a Wick rotation of type $\varphi
\rightarrow\sqrt{-1}\varphi$, see Section \ref{Section_Wick_rotation}. The
novelty is that this Wick rotation is not performed in spacetime, and thus
\ all these quantum field theories are acausal. More precisely, special
relativity is not valid in the spacetime of these theories. However, the
vacuum of all these theories perform thermal (resp. quantum) fluctuations,
because the Feynman rules are valid, at least formally, in these theories.

The energy functional $E(\varphi,J;\delta,\gamma,\alpha_{2},\alpha_{4})$,
$\varphi\in\mathcal{D}_{\mathbb{R}}^{l}\left(  \mathbb{Q}_{p}^{N}\right)  $,
see (\ref{Eq_energy}), can be interpreted as the Hamiltonian of a continuous
Ising model in the ball $B_{l}^{N}$ with an external magnetic field $J$. The
Landau-Ginzburg energy functional $E(\varphi,0;\delta,\gamma,\alpha_{2}%
,\alpha_{4})$ is non-local, i.e. only long range interactions occur,
furthermore, it has $\boldsymbol{Z}_{2}$ symmetry ($\varphi\rightarrow
-\varphi$). Finally, all the results presented in this article are valid if
$\mathbb{Q}_{p}$ is replaced by any non-Archimedean local field.

\section{\label{Section1} Basic facts on $p$-adic analysis}

In this section we fix the notation and collect some basic results on $p$-adic
analysis that we will use through the article. For a detailed exposition on
$p$-adic analysis the reader may consult \cite{A-K-S}, \cite{Taibleson},
\cite{V-V-Z}.

\subsection{The field of $p$-adic numbers}

Throughout this article $p$ will denote a prime number. Since we have to deal
with quadratic forms, for the sake of simplicity, we assume that $p\geq3$
throughout the article. The field of $p-$adic numbers $\mathbb{Q}_{p}$ is
defined as the completion of the field of rational numbers $\mathbb{Q}$ with
respect to the $p-$adic norm $|\cdot|_{p}$, which is defined as
\[
|x|_{p}=%
\begin{cases}
0 & \text{if }x=0\\
p^{-\gamma} & \text{if }x=p^{\gamma}\dfrac{a}{b},
\end{cases}
\]
where $a$ and $b$ are integers coprime with $p$. The integer $\gamma
=ord_{p}(x):=ord(x)$, with $ord(0):=+\infty$, is called the\textit{\ }%
$p-$\textit{adic order of} $x$. We extend the $p-$adic norm to $\mathbb{Q}%
_{p}^{N}$ by taking%
\[
||x||_{p}:=\max_{1\leq i\leq N}|x_{i}|_{p},\qquad\text{for }x=(x_{1}%
,\dots,x_{N})\in\mathbb{Q}_{p}^{N}.
\]
We define $ord(x)=\min_{1\leq i\leq N}\{ord(x_{i})\}$, then $||x||_{p}%
=p^{-ord(x)}$.\ The metric space $\left(  \mathbb{Q}_{p}^{N},||\cdot
||_{p}\right)  $ is a complete ultrametric space. As a topological space
$\mathbb{Q}_{p}$\ is homeomorphic to a Cantor-like subset of the real line,
see e.g. \cite{A-K-S}, \cite{V-V-Z}.

Any $p-$adic number $x\neq0$ has a unique expansion of the form
\[
x=p^{ord(x)}\sum_{j=0}^{\infty}x_{j}p^{j},
\]
where $x_{j}\in\{0,1,2,\dots,p-1\}$ and $x_{0}\neq0$. By using this expansion,
we define \textit{the fractional part }$\{x\}_{p}$\textit{ of }$x\in
\mathbb{Q}_{p}$ as the rational number
\[
\{x\}_{p}=%
\begin{cases}
0 & \text{if }x=0\text{ or }ord(x)\geq0\\
p^{ord(x)}\sum_{j=0}^{-ord(x)-1}x_{j}p^{j} & \text{if }ord(x)<0.
\end{cases}
\]
In addition, any $x\in\mathbb{Q}_{p}^{N}\smallsetminus\left\{  0\right\}  $
can be represented uniquely as $x=p^{ord(x)}v\left(  x\right)  $ where
$\left\Vert v\left(  x\right)  \right\Vert _{p}=1$.

\subsection{Topology of $\mathbb{Q}_{p}^{N}$}

For $r\in\mathbb{Z}$, denote by $B_{r}^{N}(a)=\{x\in\mathbb{Q}_{p}%
^{N};||x-a||_{p}\leq p^{r}\}$ \textit{the ball of radius }$p^{r}$ \textit{with
center at} $a=(a_{1},\dots,a_{N})\in\mathbb{Q}_{p}^{N}$, and take $B_{r}%
^{N}(0):=B_{r}^{N}$. Note that $B_{r}^{N}(a)=B_{r}(a_{1})\times\cdots\times
B_{r}(a_{N})$, where $B_{r}(a_{i}):=\{x\in\mathbb{Q}_{p};|x_{i}-a_{i}|_{p}\leq
p^{r}\}$ is the one-dimensional ball of radius $p^{r}$ with center at
$a_{i}\in\mathbb{Q}_{p}$. The ball $B_{0}^{N}$ equals the product of $N$
copies of $B_{0}=\mathbb{Z}_{p}$, \textit{the ring of }$p-$\textit{adic
integers}. We also denote by $S_{r}^{N}(a)=\{x\in\mathbb{Q}_{p}^{N}%
;||x-a||_{p}=p^{r}\}$ \textit{the sphere of radius }$p^{r}$ \textit{with
center at} $a=(a_{1},\dots,a_{N})\in\mathbb{Q}_{p}^{N}$, and take $S_{r}%
^{N}(0):=S_{r}^{N}$. We notice that $S_{0}^{1}=\mathbb{Z}_{p}^{\times}$ (the
group of units of $\mathbb{Z}_{p}$), but $\left(  \mathbb{Z}_{p}^{\times
}\right)  ^{N}\subsetneq S_{0}^{N}$. The balls and spheres are both open and
closed subsets in $\mathbb{Q}_{p}^{N}$. In addition, two balls in
$\mathbb{Q}_{p}^{N}$ are either disjoint or one is contained in the other.

As a topological space $\left(  \mathbb{Q}_{p}^{N},||\cdot||_{p}\right)  $ is
totally disconnected, i.e. the only connected \ subsets of $\mathbb{Q}_{p}%
^{N}$ are the empty set and the points. A subset of $\mathbb{Q}_{p}^{N}$ is
compact if and only if it is closed and bounded in $\mathbb{Q}_{p}^{N}$, see
e.g. \cite[Section 1.3]{V-V-Z}, or \cite[Section 1.8]{A-K-S}. The balls and
spheres are compact subsets. Thus $\left(  \mathbb{Q}_{p}^{N},||\cdot
||_{p}\right)  $ is a locally compact topological space.

Since $(\mathbb{Q}_{p}^{N},+)$ is a locally compact topological group, there
exists a Haar measure $d^{N}x$, which is invariant under translations, i.e.
$d^{N}(x+a)=d^{N}x$. If we normalize this measure by the condition
$\int_{\mathbb{Z}_{p}^{N}}dx=1$, then $d^{N}x$ is unique.

\begin{notation}
We will use $\Omega\left(  p^{-r}||x-a||_{p}\right)  $ to denote the
characteristic function of the ball $B_{r}^{N}(a)$. For more general sets, we
will use the notation $1_{A}$ for the characteristic function of a set $A$.
\end{notation}

\subsection{The Bruhat-Schwartz space}

A complex-valued function $\varphi$ defined on $\mathbb{Q}_{p}^{N}$ is
\textit{called locally constant} if for any $x\in\mathbb{Q}_{p}^{N}$ there
exist an integer $l(x)\in\mathbb{Z}$ such that%
\begin{equation}
\varphi(x+x^{\prime})=\varphi(x)\text{ for any }x^{\prime}\in B_{l(x)}^{N}.
\label{local_constancy}%
\end{equation}
A function $\varphi:\mathbb{Q}_{p}^{N}\rightarrow\mathbb{C}$ is called a
\textit{Bruhat-Schwartz function (or a test function)} if it is locally
constant with compact support. Any test function can be represented as a
linear combination, with complex coefficients, of characteristic functions of
balls. The $\mathbb{C}$-vector space of Bruhat-Schwartz functions is denoted
by $\mathcal{D}(\mathbb{Q}_{p}^{N}):=\mathcal{D}$. We denote by $\mathcal{D}%
_{\mathbb{R}}(\mathbb{Q}_{p}^{N}):=\mathcal{D}_{\mathbb{R}}$\ the $\mathbb{R}%
$-vector space of Bruhat-Schwartz functions. For $\varphi\in\mathcal{D}%
(\mathbb{Q}_{p}^{N})$, the largest number $l=l(\varphi)$ satisfying
(\ref{local_constancy}) is called \textit{the exponent of local constancy (or
the parameter of constancy) of} $\varphi$.

We denote by $\mathcal{D}_{m}^{l}(\mathbb{Q}_{p}^{N})$ the finite-dimensional
space of test functions from $\mathcal{D}(\mathbb{Q}_{p}^{N})$ having supports
in the ball $B_{m}^{N}$ and with parameters \ of constancy $\geq l$. We now
define a topology on $\mathcal{D}$ as follows. We say that a sequence
$\left\{  \varphi_{j}\right\}  _{j\in\mathbb{N}}$ of functions in
$\mathcal{D}$ converges to zero, if the two following conditions hold:

(1) there are two fixed integers $k_{0}$ and $m_{0}$ such that \ each
$\varphi_{j}\in$ $\mathcal{D}_{m_{0}}^{k_{0}}$;

(2) $\varphi_{j}\rightarrow0$ uniformly.

$\mathcal{D}$ endowed with the above topology becomes a topological vector space.

\subsection{$L^{\rho}$ spaces}

Given $\rho\in\lbrack1,\infty)$, we denote by $L^{\rho}:=L^{\rho}\left(
\mathbb{Q}
_{p}^{N}\right)  :=L^{\rho}\left(
\mathbb{Q}
_{p}^{N},d^{N}x\right)  ,$ the $\mathbb{C}-$vector space of all the complex
valued functions $g$ satisfying $\int_{%
\mathbb{Q}
_{p}^{N}}\left\vert g\left(  x\right)  \right\vert ^{\rho}d^{N}x<\infty$. The
corresponding $\mathbb{R}$-vector spaces are denoted as $L_{\mathbb{R}}^{\rho
}\allowbreak:=L_{\mathbb{R}}^{\rho}\left(
\mathbb{Q}
_{p}^{N}\right)  =L_{\mathbb{R}}^{\rho}\left(
\mathbb{Q}
_{p}^{N},d^{N}x\right)  $, $1\leq\rho<\infty$.

If $U$ is an open subset of $\mathbb{Q}_{p}^{N}$, $\mathcal{D}(U)$ denotes the
space of test functions with supports contained in $U$, then $\mathcal{D}(U)$
is dense in
\[
L^{\rho}\left(  U\right)  =\left\{  \varphi:U\rightarrow\mathbb{C};\left\Vert
\varphi\right\Vert _{\rho}=\left\{  \int_{U}\left\vert \varphi\left(
x\right)  \right\vert ^{\rho}d^{N}x\right\}  ^{\frac{1}{\rho}}<\infty\right\}
,
\]
where $d^{N}x$ is the normalized Haar measure on $\left(  \mathbb{Q}_{p}%
^{N},+\right)  $, for $1\leq\rho<\infty$, see e.g. \cite[Section 4.3]{A-K-S}.
We denote by $L_{\mathbb{R}}^{\rho}\left(  U\right)  $ the real counterpart of
$L^{\rho}\left(  U\right)  $.

\subsection{The Fourier transform}

Set $\chi_{p}(y)=\exp(2\pi i\{y\}_{p})$ for $y\in\mathbb{Q}_{p}$. The map
$\chi_{p}(\cdot)$ is an additive character on $\mathbb{Q}_{p}$, i.e. a
continuous map from $\left(  \mathbb{Q}_{p},+\right)  $ into $S$ (the unit
circle considered as multiplicative group) satisfying $\chi_{p}(x_{0}%
+x_{1})=\chi_{p}(x_{0})\chi_{p}(x_{1})$, $x_{0},x_{1}\in\mathbb{Q}_{p}$. \ The
additive characters of $\mathbb{Q}_{p}$ form an Abelian group which is
isomorphic to $\left(  \mathbb{Q}_{p},+\right)  $. The isomorphism is given by
$\kappa\rightarrow\chi_{p}(\kappa x)$, see e.g. \cite[Section 2.3]{A-K-S}.

Given $\kappa=(\kappa_{1},\dots,\kappa_{N})$ and $y=(x_{1},\dots
,x_{N})\allowbreak\in\mathbb{Q}_{p}^{N}$, we set $\kappa\cdot x:=\sum
_{j=1}^{N}\kappa_{j}x_{j}$. The Fourier transform of $\varphi\in
\mathcal{D}(\mathbb{Q}_{p}^{N})$ is defined as
\[
(\mathcal{F}\varphi)(\kappa)=\int_{\mathbb{Q}_{p}^{N}}\chi_{p}(\kappa\cdot
x)\varphi(x)d^{N}x\quad\text{for }\kappa\in\mathbb{Q}_{p}^{N},
\]
where $d^{N}x$ is the normalized Haar measure on $\mathbb{Q}_{p}^{N}$. The
Fourier transform is a linear isomorphism from $\mathcal{D}(\mathbb{Q}_{p}%
^{N})$ onto itself satisfying
\begin{equation}
(\mathcal{F}(\mathcal{F}\varphi))(\kappa)=\varphi(-\kappa), \label{Eq_FFT}%
\end{equation}
see e.g. \cite[Section 4.8]{A-K-S}. We will also use the notation
$\mathcal{F}_{x\rightarrow\kappa}\varphi$ and $\widehat{\varphi}$\ for the
Fourier transform of $\varphi$.

The Fourier transform extends to $L^{2}$. If $f\in L^{2},$ its Fourier
transform is defined as
\[
(\mathcal{F}f)(\kappa)=\lim_{k\rightarrow\infty}\int_{||x||_{p}\leq p^{k}}%
\chi_{p}(\kappa\cdot x)f(x)d^{N}x,\quad\text{for }\kappa\in%
\mathbb{Q}
_{p}^{N},
\]
where the limit is taken in $L^{2}$. We recall that the Fourier transform is
unitary on $L^{2},$ i.e. $||f||_{L^{2}}=||\mathcal{F}f||_{L^{2}}$ for $f\in
L^{2}$ and that (\ref{Eq_FFT}) is also valid in $L^{2}$, see e.g.
\cite[Chapter III, Section 2]{Taibleson}.

\subsection{Distributions}

The $\mathbb{C}$-vector space $\mathcal{D}^{\prime}\left(  \mathbb{Q}_{p}%
^{N}\right)  $ $:=\mathcal{D}^{\prime}$ of all continuous linear functionals
on $\mathcal{D}(\mathbb{Q}_{p}^{N})$ is called the \textit{Bruhat-Schwartz
space of distributions}. Every linear functional on $\mathcal{D}$ is
continuous, i.e. $\mathcal{D}^{\prime}$\ agrees with the algebraic dual of
$\mathcal{D}$, see e.g. \cite[Chapter 1, VI.3, Lemma]{V-V-Z}. We denote by
$\mathcal{D}_{\mathbb{R}}^{\prime}\left(  \mathbb{Q}_{p}^{N}\right)  $
$:=\mathcal{D}_{\mathbb{R}}^{\prime}$ the dual space of $\mathcal{D}%
_{\mathbb{R}}$.

We endow $\mathcal{D}^{\prime}$ with the weak topology, i.e. a sequence
$\left\{  T_{j}\right\}  _{j\in\mathbb{N}}$ in $\mathcal{D}^{\prime}$
converges to $T$ if $\lim_{j\rightarrow\infty}T_{j}\left(  \varphi\right)
=T\left(  \varphi\right)  $ for any $\varphi\in\mathcal{D}$. \ The map
\[%
\begin{array}
[c]{lll}%
\mathcal{D}^{\prime}\times\mathcal{D} & \rightarrow & \mathbb{C}\\
\left(  T,\varphi\right)  & \rightarrow & T\left(  \varphi\right)
\end{array}
\]
is a bilinear form which is continuous in $T$ and $\varphi$ separately. We
call this map the pairing between $\mathcal{D}^{\prime}$ and $\mathcal{D}$.
From now on we will use $\left(  T,\varphi\right)  $ instead of $T\left(
\varphi\right)  $.

Every $f$\ in $L_{loc}^{1}$ defines a distribution $f\in\mathcal{D}^{\prime
}\left(  \mathbb{Q}_{p}^{N}\right)  $ by the formula
\[
\left(  f,\varphi\right)  =%
{\textstyle\int\limits_{\mathbb{Q}_{p}^{n}}}
f\left(  x\right)  \varphi\left(  x\right)  d^{N}x.
\]
Such distributions are called \textit{regular distributions}. Notice that for
$f$\ $\in L_{\mathbb{R}}^{2}$, $\left(  f,\varphi\right)  =\left\langle
f,\varphi\right\rangle $, where $\left\langle \cdot,\cdot\right\rangle $
denotes the scalar product in $L_{\mathbb{R}}^{2}$.

\begin{remark}
\label{Nota_Nuclear}Let $B(\psi,\varphi)$ be a bilinear functional, $\psi
\in\mathcal{D}\left(  \mathbb{Q}_{p}^{N}\right)  $, $\varphi\in\mathcal{D}%
\left(  \mathbb{Q}_{p}^{M}\right)  $. Then there exists a unique distribution
$T\in\mathcal{D}^{\prime}\left(  \mathbb{Q}_{p}^{N}\times\mathbb{Q}_{p}%
^{M}\right)  $ such that
\[
\left(  T,\psi\left(  x\right)  \varphi\left(  y\right)  \right)
=B(\psi,\varphi)\text{, for }\psi\in\mathcal{D}\left(  \mathbb{Q}_{p}%
^{N}\right)  ,\varphi\in\mathcal{D}\left(  \mathbb{Q}_{p}^{M}\right)  ,
\]
cf. \cite[Chapter 1, VI.7, Theorem]{V-V-Z}
\end{remark}

\subsection{The Fourier transform of a distribution}

The Fourier transform $\mathcal{F}\left[  T\right]  $ of a distribution
$T\in\mathcal{D}^{\prime}\left(  \mathbb{Q}_{p}^{N}\right)  $ is defined by%
\[
\left(  \mathcal{F}\left[  T\right]  ,\varphi\right)  =\left(  T,\mathcal{F}%
\left[  \varphi\right]  \right)  \text{ for all }\varphi\in\mathcal{D}%
(\mathbb{Q}_{p}^{N})\text{.}%
\]
The Fourier transform $T\rightarrow\mathcal{F}\left[  T\right]  $ is a linear
(and continuous) isomorphism from $\mathcal{D}^{\prime}\left(  \mathbb{Q}%
_{p}^{N}\right)  $\ onto $\mathcal{D}^{\prime}\left(  \mathbb{Q}_{p}%
^{N}\right)  $. Furthermore, $T=\mathcal{F}\left[  \mathcal{F}\left[
T\right]  \left(  -\xi\right)  \right]  $.

\section{$\boldsymbol{W}_{\delta}$ \ operators and their discretizations}

\subsection{The $\boldsymbol{W}_{\delta}$ operators}

Take \ $\mathbb{R}_{+}:=\left\{  x\in\mathbb{R};x\geq0\right\}  $, and fix a
function%
\[
w_{\delta}:\mathbb{Q}_{p}^{N}\rightarrow\mathbb{R}_{+}%
\]
satisfying the following properties:

\noindent(i) $w_{\delta}\left(  y\right)  $ is a radial i.e. $w_{\delta
}(y)=w_{\delta}(\left\Vert y\right\Vert _{p})$;

\noindent(ii) $w_{\delta}(\left\Vert y\right\Vert _{p})$ is a continuous and
increasing function of $\left\Vert y\right\Vert _{p}$;

\noindent(iii) $w_{\delta}\left(  y\right)  =0$ if and only if $y=0$;

\noindent(iv) there exist constants $C_{0},C_{1}>0$ and $\delta>N$ such that
\begin{equation}
C_{0}\left\Vert y\right\Vert _{p}^{\delta}\leq w_{\delta}(\left\Vert
y\right\Vert _{p})\leq C_{1}\left\Vert y\right\Vert _{p}^{\delta}\text{, for
}y\in\mathbb{Q}_{p}^{N}\text{.}\label{Eq_1}%
\end{equation}
We now define the operator
\begin{equation}
\boldsymbol{W}_{\delta}\varphi(x)={\int\limits_{\mathbb{Q}_{p}^{N}}}%
\frac{\varphi\left(  x-y\right)  -\varphi\left(  x\right)  }{w_{\delta}\left(
\Vert y\Vert_{p}\right)  }d^{N}y\text{, for }\varphi\in\mathcal{D}\left(
\mathbb{Q}_{p}^{N}\right)  \text{.}\label{EQ_oper_W_def}%
\end{equation}
The operator $\boldsymbol{W}_{\delta}$ is pseudo-differential; more precisely,
if%
\begin{equation}
A_{w_{\delta}}\left(  \kappa\right)  :={\int\limits_{\mathbb{Q}_{p}^{N}}}%
\frac{1-\chi_{p}\left(  y\cdot\kappa\right)  }{w_{\delta}\left(  \Vert
y\Vert_{p}\right)  }d^{N}y,\label{Eq_Kernel}%
\end{equation}
then
\begin{equation}
\boldsymbol{W}_{\delta}\varphi\left(  x\right)  =-\mathcal{F}_{\kappa
\rightarrow x}^{-1}\left[  A_{w_{\delta}}\left(  \kappa\right)  \mathcal{F}%
_{x\rightarrow\kappa}\varphi\right]  =:-\boldsymbol{W}\left(  \partial
,\delta\right)  \varphi\left(  x\right)  \text{, for }\varphi\in
\mathcal{D}\left(  \mathbb{Q}_{p}^{N}\right)  \text{.}\label{EQ_oper_W_pseudo}%
\end{equation}
The function $A_{w_{\delta}}\left(  \kappa\right)  $ is radial (so we use the
notation $A_{w_{\delta}}\left(  \kappa\right)  =A_{w_{\delta}}\left(
\Vert\kappa\Vert_{p}\right)  $), continuous, non-negative, $A_{w_{\delta}%
}\left(  0\right)  =0$, and it satisfies
\[
C_{0}^{\prime}\left\Vert \kappa\right\Vert _{p}^{\delta-N}\leq A_{w_{\delta}%
}(\left\Vert \kappa\right\Vert _{p})\leq C_{1}^{\prime}\left\Vert
\kappa\right\Vert _{p}^{\delta-N}\text{, for }\kappa\in\mathbb{Q}_{p}%
^{N}\text{,}%
\]
cf. \cite[Lemmas 4, 5, 8 ]{Zuniga-LNM-2016}. The operator $\boldsymbol{W}%
\left(  \partial,\delta\right)  $ extends to an unbounded and densely defined
operator in $L^{2}\left(  \mathbb{Q}_{p}^{N}\right)  $ with domain
\begin{equation}
Dom(\boldsymbol{W}\left(  \partial,\delta\right)  )=\left\{  \varphi\in
L^{2};A_{w_{\delta}}(\left\Vert \kappa\right\Vert _{p})\mathcal{F}\varphi\in
L^{2}\right\}  .\label{Dom_W}%
\end{equation}
In addition:

\noindent(i) $\left(  \boldsymbol{W}\left(  \partial,\delta\right)
,Dom(\boldsymbol{W}\left(  \partial,\delta\right)  )\right)  $ is self-adjoint
and positive operator;

\noindent(ii) $-\boldsymbol{W}\left(  \partial,\delta\right)  $ is the
infinitesimal generator of a contraction $C_{0}-$semigroup, cf.
\cite[Proposition 7]{Zuniga-LNM-2016}.

The evolution equation%
\[
\frac{\partial u\left(  x,t\right)  }{\partial t}+\boldsymbol{W}\left(
\partial,\delta\right)  u(x,t)=0\text{, \ \ \ }x\in\mathbb{Q}_{p}^{N}\text{,
}t\geq0\text{,}%
\]
is a $p$-adic heat equation, which means that the corresponding semigroup is
attached to a Markov stochastic process, see \cite[Theorem 16]%
{Zuniga-LNM-2016}.

\begin{example}
An important example of a $\boldsymbol{W}\left(  \partial,\delta\right)  $
operator is the Taibleson-Vladimirov operator, which is defined as%
\[
\boldsymbol{D}^{\beta}\phi\left(  x\right)  =\frac{1-p^{\beta}}{1-p^{-\beta
-N}}\int\limits_{\mathbb{Q}_{p}^{N}}\frac{\phi\left(  x-y\right)  -\phi\left(
x\right)  }{\left\Vert y\right\Vert _{p}^{\beta+N}}d^{N}y=\mathcal{F}%
_{\kappa\rightarrow x}^{-1}\left(  \left\Vert \kappa\right\Vert _{p}^{\beta
}\mathcal{F}_{x\rightarrow\kappa}\phi\right)  \text{,}%
\]
where $\beta>0$ and $\phi\in\mathcal{D}\left(  \mathbb{Q}_{p}^{N}\right)  $,
see \cite[Section 2.2.7]{Zuniga-LNM-2016}.
\end{example}

The $\boldsymbol{W}_{\delta}$ operators were introduced by
Chac\'{o}n-Cort\'{e}s and Z\'{u}\~{n}iga-Galindo, see \cite{Zuniga-LNM-2016}%
\ and the references therein. They are a generalization of the Vladimirov and
Taibleson operators.

\subsection{Discretization of $\boldsymbol{W}_{\delta}$ operators}

For $l\geq1$, we \ set $G_{l}:=p^{-l}\mathbb{Z}_{p}^{N}/p^{l}\mathbb{Z}%
_{p}^{N}$ and denote by $\mathcal{D}_{\mathbb{R}}^{l}(\mathbb{Q}_{p}%
^{N}):=\mathcal{D}_{\mathbb{R}}^{l}$ the $\mathbb{R}$-vector space of all test
functions of the form%
\begin{equation}
\varphi\left(  x\right)  =%
{\textstyle\sum\limits_{\boldsymbol{i}\in G_{l}}}
\varphi\left(  \boldsymbol{i}\right)  \Omega\left(  p^{l}\left\Vert
x-\boldsymbol{i}\right\Vert _{p}\right)  \text{, \ }\varphi\left(
\boldsymbol{i}\right)  \in\mathbb{R}\text{,} \label{Eq_repre}%
\end{equation}
where $\boldsymbol{i}$ runs through a fixed system of representatives of
$G_{l}$, and $\Omega\left(  p^{l}\left\Vert x-\boldsymbol{i}\right\Vert
_{p}\right)  $ is the characteristic function of the ball $\boldsymbol{i}%
+p^{l}\mathbb{Z}_{p}^{N}$. Notice that $\varphi$ is supported on
$p^{-l}\mathbb{Z}_{p}^{N}$ and that $\mathcal{D}_{\mathbb{R}}^{l}$ is a finite
dimensional vector space spanned by the basis
\begin{equation}
\left\{  \Omega\left(  p^{l}\left\Vert x-\boldsymbol{i}\right\Vert
_{p}\right)  \right\}  _{\boldsymbol{i}\in G_{l}}. \label{Basis}%
\end{equation}
We will identify $\varphi\in\mathcal{D}_{\mathbb{R}}^{l}$ with the column
vector $\left[  \varphi\left(  \boldsymbol{i}\right)  \right]
_{\boldsymbol{i}\in G_{l}}$. Furthermore, $\mathcal{D}_{\mathbb{R}}^{l}$
$\hookrightarrow\mathcal{D}_{\mathbb{R}}^{l+1}$ (continuous embedding), and
$\mathcal{D}_{\mathbb{R}}=\underrightarrow{\lim}\mathcal{D}_{\mathbb{R}}%
^{l}=\cup_{l=1}^{\infty}\mathcal{D}_{\mathbb{R}}^{l}$.

\begin{remark}
We set
\[
d\left(  l,w_{\delta}\right)  :={\int\limits_{\mathbb{Q}_{p}^{N}\setminus
B_{-l}^{N}}}\frac{d^{N}y}{w_{\delta}\left(  \Vert y\Vert_{p}\right)  }.
\]
By (\ref{Eq_1}), $d\left(  l,w_{\delta}\right)  <\infty$. Furthermore, we
\ have%
\begin{equation}
\frac{p^{\left(  \delta-N\right)  l}}{C_{1}}{\int\limits_{\mathbb{Q}_{p}%
^{N}\setminus\mathbb{Z}_{p}^{N}}}\frac{d^{N}z}{\Vert z\Vert_{p}^{\delta}}\leq
d\left(  l,w_{\delta}\right)  \leq\frac{p^{\left(  \delta-N\right)  l}}{C_{0}%
}{\int\limits_{\mathbb{Q}_{p}^{N}\setminus\mathbb{Z}_{p}^{N}}}\frac{d^{N}%
z}{\Vert z\Vert_{p}^{\delta}}, \label{Eq_1A}%
\end{equation}
which implies that $d\left(  l,w_{\delta}\right)  \geq Cp^{\left(
\delta-N\right)  l}$ for some positive constant $C$. In particular, $d\left(
l,w_{\delta}\right)  \rightarrow\infty$ as $l\rightarrow\infty$.
\end{remark}

We denote by $\boldsymbol{W}_{\delta}^{\left(  l\right)  }$ the restriction
$\boldsymbol{W}_{\delta}:\mathcal{D}_{\mathbb{R}}\left(  B_{l}^{N}\right)
\rightarrow\mathcal{D}_{\mathbb{R}}\left(  B_{l}^{N}\right)  $. Take
$\varphi\in\mathcal{D}_{\mathbb{R}}\left(  B_{l}^{N}\right)  $ is then%
\begin{gather}
\boldsymbol{W}_{\delta}^{\left(  l\right)  }\varphi(x)={\int
\limits_{\mathbb{Q}_{p}^{N}}}\frac{\varphi\left(  x-y\right)  -\varphi\left(
x\right)  }{w_{\delta}\left(  \Vert y\Vert_{p}\right)  }d^{N}y={\int
\limits_{B_{l}^{N}}}\frac{\varphi\left(  x-y\right)  -\varphi\left(  x\right)
}{w_{\delta}\left(  \Vert y\Vert_{p}\right)  }d^{N}y+\label{Eq_D_l}\\
{\int\limits_{\mathbb{Q}_{p}^{N}\smallsetminus B_{l}^{N}}}\frac{\varphi\left(
x-y\right)  -\varphi\left(  x\right)  }{w_{\delta}\left(  \Vert y\Vert
_{p}\right)  }d^{N}y={\int\limits_{B_{l}^{N}}}\frac{\varphi\left(  x-y\right)
-\varphi\left(  x\right)  }{w_{\delta}\left(  \Vert y\Vert_{p}\right)  }%
d^{N}y-\left(  \text{ }{\int\limits_{\mathbb{Q}_{p}^{N}\setminus B_{l}^{N}}%
}\frac{d^{N}y}{w_{\delta}\left(  \Vert y\Vert_{p}\right)  }\right)
\varphi\left(  x\right)  .\nonumber
\end{gather}

\begin{notation}
The cardinality of a finite set $B$ is denoted as $\#B$.
\end{notation}

We set
\begin{equation}
A_{\boldsymbol{i},\boldsymbol{j}}\left(  l\right)  :=\left\{
\begin{array}
[c]{lll}%
\frac{p^{-lN}}{w_{\delta}\left(  \left\Vert \boldsymbol{i}-\boldsymbol{j}%
\right\Vert _{p}\right)  } & \text{if} & \boldsymbol{i}\neq\boldsymbol{j}\\
&  & \\
0 & \text{if} & \boldsymbol{i}=\boldsymbol{j}\text{,}%
\end{array}
\right.  \label{Eq_Matrix_A}%
\end{equation}
and $A:=\left[  A_{\boldsymbol{i},\boldsymbol{j}}\left(  l\right)  \right]
_{\boldsymbol{i},\boldsymbol{j}\in G_{l}}$. We denote by $\mathbb{I}$ the
identity matrix of size $\#G_{l}\times\#G_{l}$.

\begin{lemma}
\label{Lemma4}The restriction $\boldsymbol{W}_{\delta}^{\left(  l\right)
}:\mathcal{D}_{\mathbb{R}}^{l}\rightarrow\mathcal{D}_{\mathbb{R}}^{l}$ is a
well-defined linear operator. Furthermore, the following formula holds true:
\[
\boldsymbol{W}_{\delta}^{\left(  l\right)  }\varphi(x)=%
{\textstyle\sum\limits_{\boldsymbol{i}\in G_{l}}}
\left\{
{\textstyle\sum\limits_{\boldsymbol{j}\in G_{l}}}
A_{\boldsymbol{i},\boldsymbol{j}}\left(  l\right)  \varphi\left(
\boldsymbol{j}\right)  -\varphi\left(  \boldsymbol{i}\right)  d\left(
l,w_{\delta}\right)  \right\}  \Omega\left(  p^{l}\left\Vert x-\boldsymbol{i}%
\right\Vert _{p}\right)  \text{, }%
\]
which implies that $A-d\left(  l,w_{\delta}\right)  \mathbb{I}$ is the matrix
of the operator $\boldsymbol{W}_{\delta}^{\left(  l\right)  }$ in the basis
(\ref{Basis}).
\end{lemma}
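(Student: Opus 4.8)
The plan is to evaluate $\boldsymbol{W}_{\delta}^{(l)}\varphi(x)$ pointwise on each ball $\boldsymbol{i}+p^{l}\mathbb{Z}_{p}^{N}=\boldsymbol{i}+B_{-l}^{N}$ and read off the coefficients. Since the fields are supported on $B_{l}^{N}$ it suffices to take $x\in B_{l}^{N}$, and I fix the index $\boldsymbol{i}\in G_{l}$ with $x\in\boldsymbol{i}+B_{-l}^{N}$, so that $\varphi(x)=\varphi(\boldsymbol{i})$ by (\ref{Eq_repre}). The first step is to observe that the apparent singularity of the integrand in (\ref{EQ_oper_W_def}) at $y=0$ is removable: because $\varphi$ is constant on the cosets of $B_{-l}^{N}=p^{l}\mathbb{Z}_{p}^{N}$ by (\ref{Eq_repre}), one has $\varphi(x-y)=\varphi(x)$ for every $y\in B_{-l}^{N}$, so the numerator $\varphi(x-y)-\varphi(x)$ vanishes identically on $B_{-l}^{N}$. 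Together with property (iii) (which gives $w_{\delta}(\Vert y\Vert_{p})>0$ for $y\neq0$) and the bound (\ref{Eq_1}) with $\delta>N$ (which yields integrability at infinity), this shows
\[
\boldsymbol{W}_{\delta}^{(l)}\varphi(x)=\int_{\mathbb{Q}_{p}^{N}\setminus B_{-l}^{N}}\frac{\varphi(x-y)-\varphi(x)}{w_{\delta}(\Vert y\Vert_{p})}\,d^{N}y,
\]
an absolutely convergent integral over the region where the weight is bounded away from $0$.

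Next I would split this integral into the $\varphi(x-y)$-part and the $\varphi(x)$-part. The latter factors out and, by the very definition of $d(l,w_{\delta})$, contributes exactly $-\varphi(x)\,d(l,w_{\delta})=-\varphi(\boldsymbol{i})\,d(l,w_{\delta})$. For the former I substitute $z=x-y$ (using translation invariance of $d^{N}y$), obtaining $\int_{\{z\,:\,\Vert x-z\Vert_{p}>p^{-l}\}}\varphi(z)\,w_{\delta}(\Vert x-z\Vert_{p})^{-1}\,d^{N}z$, and then expand $\varphi$ in the basis (\ref{Basis}). The key point is the contribution of a single index $\boldsymbol{j}$: on the ball $\boldsymbol{j}+B_{-l}^{N}$ one has $x-z\in(\boldsymbol{i}-\boldsymbol{j})+B_{-l}^{N}$, and when $\boldsymbol{i}\neq\boldsymbol{j}$ in $G_{l}$ the representatives satisfy $\Vert\boldsymbol{i}-\boldsymbol{j}\Vert_{p}>p^{-l}$, so the ultrametric (isosceles-triangle) property forces $\Vert x-z\Vert_{p}=\Vert\boldsymbol{i}-\boldsymbol{j}\Vert_{p}$, a constant in $z$. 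Hence the constraint $\Vert x-z\Vert_{p}>p^{-l}$ is automatic there and the corresponding integral equals
\[
\frac{\mathrm{vol}(\boldsymbol{j}+B_{-l}^{N})}{w_{\delta}(\Vert\boldsymbol{i}-\boldsymbol{j}\Vert_{p})}=\frac{p^{-lN}}{w_{\delta}(\Vert\boldsymbol{i}-\boldsymbol{j}\Vert_{p})}=A_{\boldsymbol{i},\boldsymbol{j}}(l),
\]
whereas for $\boldsymbol{j}=\boldsymbol{i}$ the whole ball lies in $x+B_{-l}^{N}$, the constraint $\Vert x-z\Vert_{p}>p^{-l}$ is never met, and the contribution is $0$, consistently with $A_{\boldsymbol{i},\boldsymbol{i}}(l)=0$. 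Summing over $\boldsymbol{j}$ gives $\sum_{\boldsymbol{j}\in G_{l}}A_{\boldsymbol{i},\boldsymbol{j}}(l)\varphi(\boldsymbol{j})$.

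Assembling the two parts yields, for $x\in\boldsymbol{i}+B_{-l}^{N}$,
\[
\boldsymbol{W}_{\delta}^{(l)}\varphi(x)=\sum_{\boldsymbol{j}\in G_{l}}A_{\boldsymbol{i},\boldsymbol{j}}(l)\varphi(\boldsymbol{j})-\varphi(\boldsymbol{i})\,d(l,w_{\delta}),
\]
and since the balls $\{\boldsymbol{i}+B_{-l}^{N}\}_{\boldsymbol{i}\in G_{l}}$ partition $B_{l}^{N}$ this is exactly the asserted expansion in the basis (\ref{Basis}). In particular the value of $\boldsymbol{W}_{\delta}^{(l)}\varphi$ on each such ball depends only on $\boldsymbol{i}$, so $\boldsymbol{W}_{\delta}^{(l)}\varphi$ again belongs to $\mathcal{D}_{\mathbb{R}}^{l}$; this proves that $\boldsymbol{W}_{\delta}^{(l)}$ is a well-defined operator $\mathcal{D}_{\mathbb{R}}^{l}\rightarrow\mathcal{D}_{\mathbb{R}}^{l}$ (linearity being immediate from (\ref{EQ_oper_W_def})), and reading off the coefficient of $\varphi(\boldsymbol{j})$ in the $\boldsymbol{i}$-th entry shows that its matrix in the basis (\ref{Basis}) is $A-d(l,w_{\delta})\mathbb{I}$. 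I do not anticipate a genuine obstacle; the only points needing care are matching the integration region $\mathbb{Q}_{p}^{N}\setminus B_{-l}^{N}$ produced by the removable singularity with the definition of $d(l,w_{\delta})$, and the ultrametric evaluation that renders $w_{\delta}(\Vert x-z\Vert_{p})$ constant on each off-diagonal ball.
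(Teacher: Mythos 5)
Your proof is correct and follows essentially the same route as the paper's: fix $x\in\boldsymbol{i}+p^{l}\mathbb{Z}_{p}^{N}$, expand $\varphi$ in the basis (\ref{Basis}), use the ultrametric identity $\Vert x-z\Vert_{p}=\Vert\boldsymbol{i}-\boldsymbol{j}\Vert_{p}$ on the off-diagonal balls to produce $A_{\boldsymbol{i},\boldsymbol{j}}(l)$, and reduce the remaining term to $-\varphi(\boldsymbol{i})\,d(l,w_{\delta})$ by translation invariance. The only (cosmetic) difference is that you excise $B_{-l}^{N}$ from the domain first via local constancy and then split the numerator into two separately convergent integrals, whereas the paper keeps the $\boldsymbol{j}=\boldsymbol{i}$ basis term paired with $-\varphi(x)$ and lets that combination kill the singular region.
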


\begin{proof}
For $x\in\boldsymbol{i}+p^{l}\mathbb{Z}_{p}^{N}$ and for $\varphi\left(
x\right)  $ of the from (\ref{Eq_repre}), we have
\begin{gather*}
\boldsymbol{W}_{\delta}^{\left(  l\right)  }\varphi(x)={\int
\limits_{\mathbb{Q}_{p}^{N}}}\frac{\varphi\left(  y\right)  -\varphi\left(
x\right)  }{w_{\delta}\left(  \Vert y-x\Vert_{p}\right)  }d^{N}y={\int
\limits_{\mathbb{Q}_{p}^{N}}}\frac{%
{\textstyle\sum\limits_{\boldsymbol{j}\in G_{l}}}
\varphi\left(  \boldsymbol{j}\right)  \Omega\left(  p^{l}\left\Vert
y-\boldsymbol{j}\right\Vert _{p}\right)  -\varphi\left(  \boldsymbol{i}%
\right)  \Omega\left(  p^{l}\left\Vert x-\boldsymbol{i}\right\Vert
_{p}\right)  }{w_{\delta}\left(  \Vert y-x\Vert_{p}\right)  }d^{N}y\\
=%
{\textstyle\sum\limits_{\substack{\boldsymbol{j}\in G_{l}\\\boldsymbol{j}%
\neq\boldsymbol{i}}}}
\text{ }{\int\limits_{\mathbb{Q}_{p}^{N}}}\frac{\varphi\left(  \boldsymbol{j}%
\right)  \Omega\left(  p^{l}\left\Vert y-\boldsymbol{j}\right\Vert
_{p}\right)  }{w_{\delta}\left(  \Vert y-x\Vert_{p}\right)  }d^{N}%
y+{\int\limits_{\mathbb{Q}_{p}^{N}}}\frac{\varphi\left(  \boldsymbol{i}%
\right)  \left\{  \Omega\left(  p^{l}\left\Vert y-\boldsymbol{i}\right\Vert
_{p}\right)  -\Omega\left(  p^{l}\left\Vert x-\boldsymbol{i}\right\Vert
_{p}\right)  \right\}  }{w_{\delta}\left(  \Vert y-x\Vert_{p}\right)  }%
d^{N}y\\
=%
{\textstyle\sum\limits_{\substack{\boldsymbol{j}\in G_{l}\\\boldsymbol{j}%
\neq\boldsymbol{i}}}}
A_{\boldsymbol{i},\boldsymbol{j}}\left(  l\right)  \varphi\left(
\boldsymbol{j}\right)  +{\int\limits_{\mathbb{Q}_{p}^{N}\setminus
\boldsymbol{i}+p^{l}\mathbb{Z}_{p}^{N}}}\frac{\varphi\left(  \boldsymbol{i}%
\right)  \left\{  \Omega\left(  p^{l}\left\Vert y-\boldsymbol{i}\right\Vert
_{p}\right)  -1\right\}  }{w_{\delta}\left(  \Vert y-x\Vert_{p}\right)  }%
d^{N}y.
\end{gather*}
Now%
\begin{align*}
{\int\limits_{\mathbb{Q}_{p}^{N}\setminus\left(  \boldsymbol{i}+p^{l}%
\mathbb{Z}_{p}^{N}\right)  }}\frac{\varphi\left(  \boldsymbol{i}\right)
\left\{  \Omega\left(  p^{l}\left\Vert y-\boldsymbol{i}\right\Vert
_{p}\right)  -1\right\}  }{w_{\delta}\left(  \Vert y-x\Vert_{p}\right)  }%
d^{N}y  &  ={\int\limits_{\mathbb{Q}_{p}^{N}\setminus p^{l}\mathbb{Z}_{p}^{N}%
}}\frac{\varphi\left(  \boldsymbol{i}\right)  \left\{  \Omega\left(
p^{l}\left\Vert z\right\Vert _{p}\right)  -1\right\}  }{w_{\delta}\left(
\Vert z+\left(  \boldsymbol{i}-x\right)  \Vert_{p}\right)  }d^{N}z\\
&  =-\varphi\left(  \boldsymbol{i}\right)  {\int\limits_{\mathbb{Q}_{p}%
^{N}\setminus p^{l}\mathbb{Z}_{p}^{N}}}\frac{d^{N}z}{w_{\delta}\left(  \Vert
z\Vert_{p}\right)  }.
\end{align*}

\end{proof}

\section{Energy functionals}

\subsection{Energy functionals in the coordinate space}

For $\varphi\in\mathcal{D}_{\mathbb{R}}(\mathbb{Q}_{p}^{N})$, and $\delta>N$,
$\gamma>0$, $\alpha_{2}\geq0$, we define the energy functional:%
\begin{equation}
E_{0}(\varphi):=E_{0}(\varphi;\delta,\gamma,\alpha_{2})=\frac{\gamma}{4}\text{
\ }%
{\textstyle\iint\limits_{\mathbb{Q}_{p}^{N}\times\mathbb{Q}_{p}^{N}}}
\text{ }\frac{\left\{  \varphi\left(  x\right)  -\varphi\left(  y\right)
\right\}  ^{2}}{w_{\delta}\left(  \left\Vert x-y\right\Vert _{p}\right)
}d^{N}xd^{N}y+\frac{\alpha_{2}}{2}\int\limits_{\mathbb{Q}_{p}^{N}}\varphi
^{2}\left(  x\right)  d^{N}x\geq0. \label{Energy_Functioal_E_0}%
\end{equation}
Then $E_{0}$ is a well-defined real-valued functional on $\mathcal{D}%
_{\mathbb{R}}$. Notice that $E_{0}(\varphi)=0$ if an only if $\varphi=0$. The
restriction of $E_{0}$ to $\mathcal{D}_{\mathbb{R}}^{l}$ (denoted as
$E_{0}^{\left(  l\right)  }$) provides a natural discretization of $E_{0}$.

\begin{remark}
\label{Nota_discretization}The functional
\[
E_{m}^{\prime}(\varphi):=\int\limits_{\mathbb{Q}_{p}^{N}}\varphi^{m}\left(
x\right)  d^{N}x\text{ for }m\in\mathbb{N}\smallsetminus\left\{  0\right\}
\text{, }\varphi\in\mathcal{D}_{\mathbb{R}}^{l}\text{,}%
\]
discretizes as
\[
E_{m}^{\prime}(\varphi)=p^{-lN}%
{\textstyle\sum\limits_{\boldsymbol{i}\in G_{l}}}
\varphi^{m}\left(  \boldsymbol{i}\right)  \text{.}%
\]

\end{remark}

\begin{lemma}
\label{Lemma5}For $\varphi\in\mathcal{D}_{\mathbb{R}}^{l}$, the following
formula holds true:%
\[
E_{0}^{\left(  l\right)  }(\varphi)=p^{-lN}\left(  \frac{\gamma}{2}d\left(
l,w_{\delta}\right)  +\frac{\alpha_{2}}{2}\right)
{\textstyle\sum\limits_{\boldsymbol{i}\in G_{l}}}
\varphi^{2}\left(  \boldsymbol{i}\right)  -\frac{\gamma}{2}p^{_{-lN}}%
{\textstyle\sum\limits_{\boldsymbol{i},\boldsymbol{j}\in G_{l}}}
A_{\boldsymbol{i},\boldsymbol{j}}(l)\varphi\left(  \boldsymbol{i}\right)
\varphi\left(  \boldsymbol{j}\right)  .
\]

\end{lemma}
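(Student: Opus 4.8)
The plan is to split $E_{0}^{\left(  l\right)  }$ into its two natural pieces and treat them separately. The mass term $\frac{\alpha_{2}}{2}\int_{\mathbb{Q}_{p}^{N}}\varphi^{2}\,d^{N}x$ is handled immediately by Remark~\ref{Nota_discretization} with $m=2$, which yields $\frac{\alpha_{2}}{2}p^{-lN}\sum_{\boldsymbol{i}\in G_{l}}\varphi^{2}\left(  \boldsymbol{i}\right)$; this already accounts for the $\frac{\alpha_{2}}{2}$ contribution to the diagonal coefficient. The real work is to show that the nonlocal Dirichlet part produces both the $\frac{\gamma}{2}d\left(  l,w_{\delta}\right)$ diagonal term and the off-diagonal $A_{\boldsymbol{i},\boldsymbol{j}}(l)$ coupling.

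The key step is to rewrite the double integral as a quadratic form of the operator $\boldsymbol{W}_{\delta}$. First I would establish, for any $\varphi\in\mathcal{D}_{\mathbb{R}}$, the Dirichlet-form identity
\begin{equation*}
\frac{1}{2}\iint\limits_{\mathbb{Q}_{p}^{N}\times\mathbb{Q}_{p}^{N}}\frac{\left\{  \varphi\left(  x\right)  -\varphi\left(  y\right)  \right\}  ^{2}}{w_{\delta}\left(  \left\Vert x-y\right\Vert _{p}\right)  }\,d^{N}x\,d^{N}y=-\left\langle \boldsymbol{W}_{\delta}\varphi,\varphi\right\rangle .
\end{equation*}
This follows by substituting $u=x-y$ in the definition (\ref{EQ_oper_W_def}) of $\boldsymbol{W}_{\delta}$, writing $\left\langle \boldsymbol{W}_{\delta}\varphi,\varphi\right\rangle$ as $\iint w_{\delta}\left(  \left\Vert x-u\right\Vert _{p}\right)  ^{-1}\varphi\left(  x\right)  \left\{  \varphi\left(  u\right)  -\varphi\left(  x\right)  \right\}  \,d^{N}u\,d^{N}x$, symmetrizing in $x\leftrightarrow u$ (legitimate because $w_{\delta}$ is radial and, since $\varphi$ is a compactly supported locally constant function with $\delta>N$, the integrand is absolutely integrable: it vanishes in a neighbourhood of the diagonal by local constancy and decays like $\left\Vert x-u\right\Vert _{p}^{-\delta}$ at infinity), and using $2ab-a^{2}-b^{2}=-(a-b)^{2}$. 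Hence $\frac{\gamma}{4}\iint(\cdots)=-\frac{\gamma}{2}\left\langle \boldsymbol{W}_{\delta}\varphi,\varphi\right\rangle$.

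Next I would evaluate the pairing for $\varphi\in\mathcal{D}_{\mathbb{R}}^{l}$ using Lemma~\ref{Lemma4}. Since $\varphi$ is supported on $B_{l}^{N}$, the $L^{2}$ scalar product only sees the values of $\boldsymbol{W}_{\delta}\varphi$ on $B_{l}^{N}$, and there Lemma~\ref{Lemma4} gives $(\boldsymbol{W}_{\delta}^{\left(  l\right)  }\varphi)\left(  \boldsymbol{i}\right)  =\sum_{\boldsymbol{j}\in G_{l}}A_{\boldsymbol{i},\boldsymbol{j}}(l)\varphi\left(  \boldsymbol{j}\right)  -d\left(  l,w_{\delta}\right)  \varphi\left(  \boldsymbol{i}\right)$. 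Because each basis ball $\boldsymbol{i}+p^{l}\mathbb{Z}_{p}^{N}$ has Haar measure $p^{-lN}$, the integral collapses to a finite sum,
\begin{equation*}
\left\langle \boldsymbol{W}_{\delta}\varphi,\varphi\right\rangle =p^{-lN}\sum_{\boldsymbol{i}\in G_{l}}(\boldsymbol{W}_{\delta}^{\left(  l\right)  }\varphi)\left(  \boldsymbol{i}\right)  \varphi\left(  \boldsymbol{i}\right)  =p^{-lN}\left\{  \sum_{\boldsymbol{i},\boldsymbol{j}\in G_{l}}A_{\boldsymbol{i},\boldsymbol{j}}(l)\varphi\left(  \boldsymbol{i}\right)  \varphi\left(  \boldsymbol{j}\right)  -d\left(  l,w_{\delta}\right)  \sum_{\boldsymbol{i}\in G_{l}}\varphi^{2}\left(  \boldsymbol{i}\right)  \right\}  .
\end{equation*}
Substituting this into $-\frac{\gamma}{2}\left\langle \boldsymbol{W}_{\delta}\varphi,\varphi\right\rangle$ and adding the mass term reproduces the claimed formula.

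The step that deserves the most care --- and where a naive direct evaluation of the double integral goes wrong --- is the origin of the diagonal coefficient $d\left(  l,w_{\delta}\right)$. If one computes $\iint\left\{  \varphi\left(  x\right)  -\varphi\left(  y\right)  \right\}  ^{2}/w_{\delta}\,d^{N}x\,d^{N}y$ by splitting $G_{l}\times G_{l}$ into diagonal and off-diagonal blocks, the off-diagonal blocks alone yield only $\sum_{\boldsymbol{j}\neq\boldsymbol{i}}A_{\boldsymbol{i},\boldsymbol{j}}(l)=\int_{B_{l}^{N}\setminus p^{l}\mathbb{Z}_{p}^{N}}w_{\delta}\left(  \left\Vert u\right\Vert _{p}\right)  ^{-1}d^{N}u$ as the coefficient of $\varphi^{2}\left(  \boldsymbol{i}\right)$, which is strictly smaller than $d\left(  l,w_{\delta}\right)  =\int_{\mathbb{Q}_{p}^{N}\setminus p^{l}\mathbb{Z}_{p}^{N}}w_{\delta}\left(  \left\Vert u\right\Vert _{p}\right)  ^{-1}d^{N}u$. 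The missing piece is exactly the contribution of the pairs $\left(  x,y\right)$ with $x\in B_{l}^{N}$ but $y\notin B_{l}^{N}$ (and its mirror image), which adds $\int_{\mathbb{Q}_{p}^{N}\setminus B_{l}^{N}}w_{\delta}\left(  \left\Vert u\right\Vert _{p}\right)  ^{-1}d^{N}u$ and restores the full $d\left(  l,w_{\delta}\right)$. Routing the computation through $\boldsymbol{W}_{\delta}$ and Lemma~\ref{Lemma4} makes this bookkeeping automatic, since the defining integral (\ref{EQ_oper_W_def}) already runs over all of $\mathbb{Q}_{p}^{N}$; for this reason I would prefer the operator formulation over a brute-force double-integral evaluation.
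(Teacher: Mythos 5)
Your proof is correct, but it takes a genuinely different route from the paper's. The paper proves Lemma \ref{Lemma5} by brute force: it substitutes the basis expansion $\varphi=\sum_{\boldsymbol{i}}\varphi(\boldsymbol{i})\Omega(p^{l}\Vert x-\boldsymbol{i}\Vert_{p})$ directly into the double integral, uses the mutual exclusivity of the cross terms $\Omega(p^{l}\Vert x-\boldsymbol{i}\Vert_{p})\Omega(p^{l}\Vert y-\boldsymbol{j}\Vert_{p})$ for $\boldsymbol{i}\neq\boldsymbol{j}$ to split the square into diagonal and off-diagonal blocks, and evaluates each block explicitly; in particular the diagonal block $E_{\boldsymbol{i}}^{(1)}$ is an integral over all of $\mathbb{Q}_{p}^{N}\times\mathbb{Q}_{p}^{N}$ (not over $B_{l}^{N}\times B_{l}^{N}$), which is exactly how the full coefficient $d(l,w_{\delta})$ appears there without any appeal to $\boldsymbol{W}_{\delta}$. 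So the ``naive direct evaluation'' you warn against is not what the paper does --- your caveat is apt only for the incorrect variant in which one truncates the domain of integration to $B_{l}^{N}\times B_{l}^{N}$. Your route instead establishes the Dirichlet-form identity $\frac12\iint\{\varphi(x)-\varphi(y)\}^{2}w_{\delta}^{-1}\,d^{N}x\,d^{N}y=-\langle\boldsymbol{W}_{\delta}\varphi,\varphi\rangle$ (which the paper itself states and uses, but only later, in the momenta-space subsection) and then reads the matrix of $\boldsymbol{W}_{\delta}^{(l)}$ off Lemma \ref{Lemma4}; the observation that the pairing only sees $\boldsymbol{W}_{\delta}\varphi$ on the support of $\varphi$, where Lemma \ref{Lemma4} applies, is the right justification and you supply it. What your approach buys is economy --- Lemma \ref{Lemma4} has already done the $d(l,w_{\delta})$ bookkeeping, and the symmetrization argument with the absolute-integrability check (local constancy kills a neighbourhood of the diagonal, $\delta>N$ controls infinity) is standard and correctly carried out. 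What the paper's approach buys is self-containedness: it does not presuppose the integration-by-parts identity and exhibits the geometric origin of each coefficient directly. Both yield the stated formula.
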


\begin{proof}
We set%
\begin{align*}
E_{0}^{\prime}(\varphi)  &  :=\frac{\gamma}{4}\text{ }%
{\textstyle\iint\limits_{\mathbb{Q}_{p}^{N}\times\mathbb{Q}_{p}^{N}}}
\text{ }\frac{\left\{  \varphi\left(  x\right)  -\varphi\left(  y\right)
\right\}  ^{2}}{w_{\delta}\left(  \left\Vert x-y\right\Vert _{p}\right)
}d^{N}xd^{N}y\\
&  =\frac{\gamma}{4}\text{ }%
{\textstyle\iint\limits_{\mathbb{Q}_{p}^{N}\times\mathbb{Q}_{p}^{N}}}
\text{ }\frac{\left\{
{\textstyle\sum\limits_{\boldsymbol{i}\in G_{l}}}
\varphi\left(  \boldsymbol{i}\right)  \left[  \Omega\left(  p^{l}\left\Vert
x-\boldsymbol{i}\right\Vert _{p}\right)  -\Omega\left(  p^{l}\left\Vert
y-\boldsymbol{i}\right\Vert _{p}\right)  \right]  \right\}  ^{2}}{w_{\delta
}\left(  \left\Vert x-y\right\Vert _{p}\right)  }d^{N}xd^{N}y.
\end{align*}
Now, by using that for $\boldsymbol{i}\neq\boldsymbol{j}$,%
\[
\Omega\left(  p^{l}\left\Vert x-\boldsymbol{i}\right\Vert _{p}\right)
\Omega\left(  p^{l}\left\Vert y-\boldsymbol{j}\right\Vert _{p}\right)
=1\text{ }\Rightarrow\text{ }\Omega\left(  p^{l}\left\Vert x-\boldsymbol{j}%
\right\Vert _{p}\right)  \Omega\left(  p^{l}\left\Vert y-\boldsymbol{i}%
\right\Vert _{p}\right)  =0,
\]
we get that
\begin{gather*}
\left\{
{\textstyle\sum\limits_{\boldsymbol{i}\in G_{l}}}
\varphi\left(  \boldsymbol{i}\right)  \left[  \Omega\left(  p^{l}\left\Vert
x-\boldsymbol{i}\right\Vert _{p}\right)  -\Omega\left(  p^{l}\left\Vert
y-\boldsymbol{i}\right\Vert _{p}\right)  \right]  \right\}  ^{2}=\\%
{\textstyle\sum\limits_{\boldsymbol{i}\in G_{l}}}
\varphi^{2}\left(  \boldsymbol{i}\right)  \left[  \Omega\left(  p^{l}%
\left\Vert x-\boldsymbol{i}\right\Vert _{p}\right)  -\Omega\left(
p^{l}\left\Vert y-\boldsymbol{i}\right\Vert _{p}\right)  \right]  ^{2}-\\
2%
{\textstyle\sum\limits_{\substack{\boldsymbol{i},\boldsymbol{j}\in
G_{l}\\\boldsymbol{i}\neq\boldsymbol{j}}}}
\varphi\left(  \boldsymbol{i}\right)  \varphi\left(  \boldsymbol{j}\right)
\Omega\left(  p^{l}\left\Vert x-\boldsymbol{i}\right\Vert _{p}\right)
\Omega\left(  p^{l}\left\Vert y-\boldsymbol{j}\right\Vert _{p}\right)  .
\end{gather*}
Therefore%
\[
E_{0}^{\prime}(\varphi)=\frac{\gamma}{4}\text{ }%
{\textstyle\sum\limits_{\boldsymbol{i}\in G_{l}}}
E_{\boldsymbol{i}}^{\left(  1\right)  }(\varphi)-\frac{\gamma}{2}\text{ }%
{\textstyle\sum\limits_{\substack{\boldsymbol{i},\boldsymbol{j}\in
G_{l}\\\boldsymbol{i}\neq\boldsymbol{j}}}}
E_{\boldsymbol{i},\boldsymbol{j}}^{(2)}(\varphi),
\]
where%
\begin{gather*}
E_{\boldsymbol{i}}^{\left(  1\right)  }(\varphi):=\varphi^{2}\left(
\boldsymbol{i}\right)
{\textstyle\iint\limits_{\mathbb{Q}_{p}^{N}\times\mathbb{Q}_{p}^{N}}}
\text{ }\frac{\left[  \Omega\left(  p^{l}\left\Vert x-\boldsymbol{i}%
\right\Vert _{p}\right)  -\Omega\left(  p^{l}\left\Vert y-\boldsymbol{i}%
\right\Vert _{p}\right)  \right]  ^{2}}{w_{\delta}\left(  \left\Vert
x-y\right\Vert _{p}\right)  }d^{N}xd^{N}y=\\
\varphi^{2}\left(  \boldsymbol{i}\right)  \int\limits_{\left\Vert x\right\Vert
_{p}>p^{-l}}\text{ }\int\limits_{\left\Vert y\right\Vert _{p}\leq p^{-l}%
}\text{ }\frac{d^{N}xd^{N}y}{w_{\delta}\left(  \left\Vert x-y\right\Vert
_{p}\right)  }+\varphi^{2}\left(  \boldsymbol{i}\right)  \int
\limits_{\left\Vert x\right\Vert _{p}\leq p^{-l}}\text{ }\int
\limits_{\left\Vert y\right\Vert _{p}>p^{-l}}\text{ }\frac{d^{N}xd^{N}%
y}{w_{\delta}\left(  \left\Vert x-y\right\Vert _{p}\right)  }=\\
2\varphi^{2}\left(  \boldsymbol{i}\right)  \int\limits_{\left\Vert
x\right\Vert _{p}>p^{-l}}\text{ }\int\limits_{\left\Vert y\right\Vert _{p}\leq
p^{-l}}\text{ }\frac{d^{N}xd^{N}y}{w_{\delta}\left(  \left\Vert x-y\right\Vert
_{p}\right)  }=2p^{-lN}\varphi^{2}\left(  \boldsymbol{i}\right)  d\left(
l,w_{\delta}\right)  .
\end{gather*}
And \ for $\boldsymbol{i},\boldsymbol{j}\in G_{l}$, with $\boldsymbol{i}%
\neq\boldsymbol{j}$,
\begin{align*}
E_{\boldsymbol{i},\boldsymbol{j}}^{(2)}(\varphi)  &  :=\varphi\left(
\boldsymbol{i}\right)  \varphi\left(  \boldsymbol{j}\right)
{\textstyle\iint\limits_{\mathbb{Q}_{p}^{N}\times\mathbb{Q}_{p}^{N}}}
\frac{\Omega\left(  p^{l}\left\Vert x-\boldsymbol{i}\right\Vert _{p}\right)
\Omega\left(  p^{l}\left\Vert y-\boldsymbol{j}\right\Vert _{p}\right)
}{w_{\delta}\left(  \left\Vert x-y\right\Vert _{p}\right)  }d^{N}xd^{N}y\\
&  =\frac{p^{_{-2lN}}}{w_{\delta}\left(  \left\Vert \boldsymbol{i}%
-\boldsymbol{j}\right\Vert _{p}\right)  }\varphi\left(  \boldsymbol{i}\right)
\varphi\left(  \boldsymbol{j}\right)  .
\end{align*}
Consequently,
\begin{align}
E_{0}^{\prime}(\varphi)  &  =\frac{\gamma}{2}p^{-lN}d\left(  l,w_{\delta
}\right)
{\textstyle\sum\limits_{\boldsymbol{i}\in G_{l}}}
\varphi^{2}\left(  \boldsymbol{i}\right)  -\frac{\gamma}{2}%
{\textstyle\sum\limits_{\substack{\boldsymbol{i},\boldsymbol{j}\in
G_{l}\\\boldsymbol{i}\neq\boldsymbol{j}}}}
\text{ }\frac{p^{_{-2lN}}}{w_{\delta}\left(  \left\Vert \boldsymbol{i}%
-\boldsymbol{j}\right\Vert _{p}\right)  }\varphi\left(  \boldsymbol{i}\right)
\varphi\left(  \boldsymbol{j}\right) \nonumber\\
&  =\frac{\gamma}{2}p^{-lN}d\left(  l,w_{\delta}\right)
{\textstyle\sum\limits_{\boldsymbol{i}\in G_{l}}}
\varphi^{2}\left(  \boldsymbol{i}\right)  -\frac{\gamma}{2}p^{_{-lN}}%
{\textstyle\sum\limits_{\boldsymbol{i},\boldsymbol{j}\in G_{l}}}
A_{\boldsymbol{i},\boldsymbol{j}}(l)\varphi\left(  \boldsymbol{i}\right)
\varphi\left(  \boldsymbol{j}\right)  . \label{Eq_S_phi}%
\end{align}

The announced formula follows from (\ref{Eq_S_phi}) by using Remark
\ref{Nota_discretization}.
\end{proof}

We now set $U\left(  l\right)  :=U=\left[  U_{\boldsymbol{i},\boldsymbol{j}%
}(l)\right]  _{\boldsymbol{i},\boldsymbol{j}\in G_{l}}$, where%
\[
U_{\boldsymbol{i},\boldsymbol{j}}(l):=\left(  \frac{\gamma}{2}d\left(
l,w_{\delta}\right)  +\frac{\alpha_{2}}{2}\right)  \delta_{\boldsymbol{i}%
,\boldsymbol{j}}-\frac{\gamma}{2}A_{\boldsymbol{i},\boldsymbol{j}}(l),
\]
where $\delta_{\boldsymbol{i},\boldsymbol{j}}$\ denotes the Kronecker delta.
Notice that $U=\left(  \frac{\gamma}{2}d\left(  l,w_{\delta}\right)
+\frac{\alpha_{2}}{2}\right)  \mathbb{I}-\frac{\gamma}{2}A$ is the matrix of
the operator%
\[
-\frac{\gamma}{2}\boldsymbol{W}_{\delta}+\frac{\alpha_{2}}{2}%
\]
acting on $\mathcal{D}_{\mathbb{R}}^{l}$, in the basis (\ref{Basis}), cf.
Lemma \ref{Lemma4}.

\begin{lemma}
\label{Lemma6}With the above notation the following formula holds true:%
\begin{equation}
E_{0}^{\left(  l\right)  }(\varphi)=\left[  \varphi\left(  \boldsymbol{i}%
\right)  \right]  _{\boldsymbol{i}\in G_{l}}^{T}p^{-lN}U(l)\left[
\varphi\left(  \boldsymbol{i}\right)  \right]  _{\boldsymbol{i}\in G_{l}}=%
{\textstyle\sum\limits_{\boldsymbol{i},\boldsymbol{j}\in G_{l}}}
p^{-lN}U_{\boldsymbol{i},\boldsymbol{j}}(l)\varphi\left(  \boldsymbol{i}%
\right)  \varphi\left(  \boldsymbol{j}\right)  \geq0, \label{Eq_formula_E_0}%
\end{equation}
for $\varphi\in\mathcal{D}_{\mathbb{R}}^{l}$, where $U$ is a symmetric,
positive definite matrix. Consequently $p^{-lN}U(l)$ is a diagonalizable and
invertible matrix.
\end{lemma}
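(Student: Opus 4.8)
The plan is to read off the quadratic-form expression directly from Lemma~\ref{Lemma5} and then extract the algebraic properties of $U$ from the analytic positivity of $E_0$. First I would substitute the definition
\[
U_{\boldsymbol{i},\boldsymbol{j}}(l)=\left(\tfrac{\gamma}{2}d(l,w_{\delta})+\tfrac{\alpha_{2}}{2}\right)\delta_{\boldsymbol{i},\boldsymbol{j}}-\tfrac{\gamma}{2}A_{\boldsymbol{i},\boldsymbol{j}}(l)
\]
into the right-hand side of the formula in Lemma~\ref{Lemma5}: the diagonal part reproduces the coefficient $p^{-lN}\bigl(\tfrac{\gamma}{2}d(l,w_{\delta})+\tfrac{\alpha_{2}}{2}\bigr)$ of $\sum_{\boldsymbol{i}}\varphi^{2}(\boldsymbol{i})$, while the off-diagonal part reproduces $-\tfrac{\gamma}{2}p^{-lN}\sum_{\boldsymbol{i},\boldsymbol{j}}A_{\boldsymbol{i},\boldsymbol{j}}(l)\varphi(\boldsymbol{i})\varphi(\boldsymbol{j})$. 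This yields the identity $E_{0}^{(l)}(\varphi)=\sum_{\boldsymbol{i},\boldsymbol{j}}p^{-lN}U_{\boldsymbol{i},\boldsymbol{j}}(l)\varphi(\boldsymbol{i})\varphi(\boldsymbol{j})=[\varphi(\boldsymbol{i})]^{T}p^{-lN}U(l)[\varphi(\boldsymbol{i})]$, which is the announced formula. The inequality $E_{0}^{(l)}(\varphi)\geq 0$ is inherited from the manifest non-negativity of $E_{0}$ recorded in \eqref{Energy_Functioal_E_0}, since $E_{0}^{(l)}$ is merely the restriction of $E_{0}$ to $\mathcal{D}_{\mathbb{R}}^{l}$.

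The symmetry of $U$ I would obtain from the symmetry of $A$: because $w_{\delta}$ is radial and $\|\boldsymbol{i}-\boldsymbol{j}\|_{p}=\|\boldsymbol{j}-\boldsymbol{i}\|_{p}$, the entries $A_{\boldsymbol{i},\boldsymbol{j}}(l)=p^{-lN}/w_{\delta}(\|\boldsymbol{i}-\boldsymbol{j}\|_{p})$ satisfy $A_{\boldsymbol{i},\boldsymbol{j}}(l)=A_{\boldsymbol{j},\boldsymbol{i}}(l)$ for $\boldsymbol{i}\neq\boldsymbol{j}$; as the Kronecker-delta term and the zero diagonal of $A$ are trivially symmetric, it follows that $U=U^{T}$.

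For positive definiteness, which is the crux of the statement, I would argue analytically rather than by inspecting the matrix directly. For any nonzero $\varphi\in\mathcal{D}_{\mathbb{R}}^{l}$ the double integral in \eqref{Energy_Functioal_E_0} is non-negative while the second term is strictly positive, since $\alpha_{2}>0$ and $\int_{\mathbb{Q}_{p}^{N}}\varphi^{2}\,d^{N}x>0$ whenever $\varphi\neq 0$; therefore
\[
p^{-lN}\,[\varphi(\boldsymbol{i})]^{T}U(l)[\varphi(\boldsymbol{i})]=E_{0}^{(l)}(\varphi)\geq \frac{\alpha_{2}}{2}\int_{\mathbb{Q}_{p}^{N}}\varphi^{2}(x)\,d^{N}x>0 .
\]
Because every nonzero vector $[\varphi(\boldsymbol{i})]_{\boldsymbol{i}\in G_{l}}\in\mathbb{R}^{\#G_{l}}$ corresponds, via the basis \eqref{Basis}, to a unique nonzero $\varphi\in\mathcal{D}_{\mathbb{R}}^{l}$, this shows $v^{T}Uv>0$ for all $v\neq 0$, i.e.\ $U$ is positive definite.

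Finally, being real symmetric, $U$ is orthogonally diagonalizable by the spectral theorem, and positive definiteness forces all of its eigenvalues to be strictly positive, hence $U$ is invertible; multiplying by the positive scalar $p^{-lN}$ preserves both properties, so $p^{-lN}U(l)$ is diagonalizable and invertible. I do not anticipate a genuine obstacle here; the only point that requires care is to invoke $\alpha_{2}>0$ (rather than merely $\alpha_{2}\geq 0$) when passing from non-negativity of $E_{0}^{(l)}$ to strict positive definiteness of $U$, which is precisely what guarantees that the quadratic form, and hence the matrix, is nondegenerate.
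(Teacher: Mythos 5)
Your proof is correct and follows the same route the paper intends: the paper states Lemma~\ref{Lemma6} without proof as an immediate consequence of Lemma~\ref{Lemma5} and the definition of $U(l)$, and your substitution argument, the symmetry of $A$ from $\left\Vert \boldsymbol{i}-\boldsymbol{j}\right\Vert _{p}=\left\Vert \boldsymbol{j}-\boldsymbol{i}\right\Vert _{p}$, and the derivation of positive definiteness from $E_{0}^{\left(  l\right)  }(\varphi)\geq\frac{\alpha_{2}}{2}\int\varphi^{2}\,d^{N}x>0$ (using the standing hypothesis $\alpha_{2}>0$) supply exactly the missing details. Your remark that $\alpha_{2}>0$ is the point that upgrades non-negativity to strict positive definiteness is well placed.
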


\[
-\frac{\gamma}{2}\boldsymbol{W}_{\delta}^{\left(  l\right)  }\varphi\left(
x\right)  +\left(  \frac{\gamma}{2}d\left(  l,w_{\delta}\right)  +\alpha
_{2}\right)  \varphi\left(  x\right)  =J(x).
\]

\subsection{The Fourier transform in $\mathcal{D}^{l}(\mathbb{Q}_{p}^{N})$}

We denote by $\mathcal{D}^{l}(\mathbb{Q}_{p}^{N}):=\mathcal{D}^{l}$ the
$\mathbb{C}$-vector space of the test functions $\varphi\in\mathcal{D}%
(\mathbb{Q}_{p}^{N})$ having the form: $\varphi\left(  x\right)
=\sum_{\boldsymbol{i}\in G_{l}}\varphi\left(  \boldsymbol{i}\right)
\Omega\left(  p^{l}\left\Vert x-\boldsymbol{i}\right\Vert _{p}\right)  $,
$\varphi\left(  \boldsymbol{i}\right)  \in\mathbb{C}$. Alternatively,
$\mathcal{D}^{l}$ the $\mathbb{C}$-vector space of the test functions
$\varphi\in\mathcal{D}(\mathbb{Q}_{p}^{N})$ satisfying:

\begin{enumerate}
\item supp $\varphi=B_{l}^{N}$;

\item for any $x\in B_{l}^{N}$, $\varphi\mid_{x+p^{l}\mathbb{Z}_{p}^{N}%
}=\varphi\left(  x\right)  $.
\end{enumerate}

Then by using that $\mathcal{F}_{x\rightarrow\kappa}\left(  \Omega\left(
p^{l}\left\Vert x-\boldsymbol{i}\right\Vert _{p}\right)  \right)
\allowbreak=p^{-lN}\chi_{p}\left(  \boldsymbol{i\cdot}\kappa\right)
\Omega\left(  p^{-l}\left\Vert \kappa\right\Vert _{p}\right)  $, we get that%
\begin{equation}
\widehat{\varphi}\left(  \kappa\right)  =p^{-lN}\Omega\left(  p^{-l}\left\Vert
\kappa\right\Vert _{p}\right)  \sum_{\boldsymbol{i}\in G_{l}}\varphi\left(
\boldsymbol{i}\right)  \chi_{p}\left(  \boldsymbol{i\cdot}\kappa\right)  .
\label{Eq_10}%
\end{equation}
By using the identity $\Omega\left(  p^{-l}\left\Vert \kappa\right\Vert
_{p}\right)  =\sum_{\boldsymbol{j}\in G_{l}}$ $\Omega\left(  p^{l}\left\Vert
\kappa-\boldsymbol{j}\right\Vert _{p}\right)  $ in (\ref{Eq_10}),
\begin{align}
\widehat{\varphi}\left(  \kappa\right)   &  =\sum_{\boldsymbol{j}\in G_{l}%
}\left\{  p^{-lN}\sum_{\boldsymbol{i}\in G_{l}}\varphi\left(  \boldsymbol{i}%
\right)  \chi_{p}\left(  \boldsymbol{i\cdot j}\right)  \right\}  \Omega\left(
p^{l}\left\Vert \kappa-\boldsymbol{j}\right\Vert _{p}\right) \nonumber\\
&  =:\sum_{\boldsymbol{j}\in G_{l}}\widehat{\varphi}\left(  \boldsymbol{j}%
\right)  \Omega\left(  p^{l}\left\Vert \kappa-\boldsymbol{j}\right\Vert
_{p}\right)  . \label{Eq_11}%
\end{align}
Conversely,%
\begin{align}
\varphi\left(  x\right)   &  =\sum_{\boldsymbol{j}\in G_{l}}\left\{
p^{-lN}\sum_{\boldsymbol{i}\in G_{l}}\widehat{\varphi}\left(  \boldsymbol{i}%
\right)  \chi_{p}\left(  -\boldsymbol{i\cdot j}\right)  \right\}
\Omega\left(  p^{l}\left\Vert x-\boldsymbol{j}\right\Vert _{p}\right)
\nonumber\\
&  =\sum_{\boldsymbol{j}\in G_{l}}\varphi\left(  \boldsymbol{j}\right)
\Omega\left(  p^{l}\left\Vert x-\boldsymbol{j}\right\Vert _{p}\right)  .
\label{Eq_11A}%
\end{align}
It follows from (\ref{Eq_11})-(\ref{Eq_11A}) that the Fourier transform is an
automorphism of the $\mathbb{C}$-vector space $\mathcal{D}^{l}$.

\begin{remark}
\label{Nota1}(i) For $\varphi\in\mathcal{D}_{\mathbb{R}}^{l}(\mathbb{Q}%
_{p}^{N})$, $\overline{\widehat{\varphi}\left(  \kappa\right)  }%
=\widehat{\varphi}\left(  -\kappa\right)  $ and
\begin{equation}
\left\vert \widehat{\varphi}\left(  \kappa\right)  \right\vert ^{2}%
=\sum_{\boldsymbol{i}\in G_{l}}\left\vert \widehat{\varphi}\left(
\boldsymbol{i}\right)  \right\vert ^{2}\Omega\left(  p^{l}\left\Vert
\kappa-\boldsymbol{i}\right\Vert _{p}\right)  . \label{Eq_14}%
\end{equation}

(ii) The formulae
\begin{equation}
\widehat{\varphi}\left(  \boldsymbol{j}\right)  =p^{-lN}\sum_{\boldsymbol{i}%
\in G_{l}}\varphi\left(  \boldsymbol{i}\right)  \chi_{p}\left(
\boldsymbol{i\cdot j}\right)  \text{,}\ \ \varphi\left(  \boldsymbol{j}%
\right)  =p^{-lN}\sum_{\boldsymbol{i}\in G_{l}}\widehat{\varphi}\left(
\boldsymbol{i}\right)  \chi_{p}\left(  -\boldsymbol{i\cdot j}\right)
\label{Eq_14A}%
\end{equation}
give the discrete Fourier transform its inverse in the additive group $G_{l}$.
\end{remark}

\subsection{Lizorkin spaces of second kind}

The space%
\[
\mathcal{L}:=\mathcal{L}(\mathbb{Q}_{p}^{N})=\left\{  \varphi\in
\mathcal{D}(\mathbb{Q}_{p}^{N});%
{\textstyle\int\limits_{\mathbb{Q}_{p}^{N}}}
\varphi\left(  x\right)  d^{N}x=0\right\}
\]
is called \textit{the }$p$\textit{-adic Lizorkin space of second kind}. The
real Lizorkin space of second kind is $\mathcal{L}_{\mathbb{R}}:=\mathcal{L}%
_{\mathbb{R}}(\mathbb{Q}_{p}^{N})=\mathcal{L}(\mathbb{Q}_{p}^{N}%
)\cap\mathcal{D}_{\mathbb{R}}(\mathbb{Q}_{p}^{N})$. If%

\[
\mathcal{FL}:=\mathcal{FL}(\mathbb{Q}_{p}^{N})=\left\{  \widehat{\varphi}%
\in\mathcal{D}(\mathbb{Q}_{p}^{N});\widehat{\varphi}\left(  0\right)
=0\right\}  ,
\]
then the Fourier transform gives rise to an isomorphism of $\mathbb{C}$-vector
spaces from $\mathcal{L}$\ into $\mathcal{FL}$. The topological dual
$\mathcal{L}^{\prime}:=\mathcal{L}^{\prime}(\mathbb{Q}_{p}^{N})$\ of the space
$\mathcal{L}$ is called \textit{the }$p$\textit{-adic Lizorkin space of
distributions of second kind. }The real version is denoted as\textit{
}$\mathcal{L}_{_{\mathbb{R}}}^{\prime}:=\mathcal{L}_{_{\mathbb{R}}}^{\prime
}(\mathbb{Q}_{p}^{N})$\textit{. }

Let $\boldsymbol{A}(\partial)$ be a pseudo-differential operator defined as
\[
\boldsymbol{A}(\partial)\varphi\left(  x\right)  =\mathcal{F}_{\kappa
\rightarrow x}^{-1}(A(\left\Vert \kappa\right\Vert _{p})\mathcal{F}%
_{x\rightarrow\kappa}\mathcal{\varphi})\text{, for }\varphi\in\mathcal{D}%
_{\mathbb{R}}(\mathbb{Q}_{p}^{N}),
\]
where $A(\left\Vert \kappa\right\Vert _{p})$ is a real-valued and radial
function satisfying
\[
A(\left\Vert \kappa\right\Vert _{p})=0\text{ if and only if }\kappa=0\text{.}%
\]
Then, the Lizorkin space $\mathcal{L}_{\mathbb{R}}$ is invariant under
$\boldsymbol{A}(\partial)$. For further details about Lizorkin spaces and
pseudo-differential operators, the reader may consult \cite[Sections 7.3,
9.2]{A-K-S}.

We now define for $l\in\mathbb{N}\smallsetminus\left\{  0\right\}  $,
\[
\mathcal{L}^{l}:=\mathcal{L}^{l}(\mathbb{Q}_{p}^{N})=\left\{  \varphi\left(
x\right)  =\sum_{\boldsymbol{i}\in G_{l}}\varphi\left(  \boldsymbol{i}\right)
\Omega\left(  p^{l}\left\Vert x-\boldsymbol{i}\right\Vert _{p}\right)
,\varphi\left(  \boldsymbol{i}\right)  \in\mathbb{C};p^{-lN}\sum
_{\boldsymbol{i}\in G_{l}}\varphi\left(  \boldsymbol{i}\right)  =0\right\}  ,
\]
resp. $\mathcal{L}_{\mathbb{R}}^{l}:=\mathcal{L}_{\mathbb{R}}^{l}%
(\mathbb{Q}_{p}^{N})=\mathcal{L}^{l}\cap\mathcal{D}_{\mathbb{R}}^{l}$, and%
\[
\mathcal{FL}^{l}:=\mathcal{FL}^{l}(\mathbb{Q}_{p}^{N})=\left\{  \widehat
{\varphi}\left(  \kappa\right)  =\sum_{\boldsymbol{i}\in G_{l}}\widehat
{\varphi}\left(  \boldsymbol{i}\right)  \Omega\left(  p^{l}\left\Vert
\kappa-\boldsymbol{i}\right\Vert _{p}\right)  ,\widehat{\varphi}\left(
\boldsymbol{i}\right)  \in\mathbb{C};\widehat{\varphi}\left(  \boldsymbol{0}%
\right)  =0\right\}  ,
\]
By the formulae (\ref{Eq_14A}), the Fourier transform $\mathcal{F}%
:\mathcal{L}^{l}\rightarrow\mathcal{F}\mathcal{L}^{l}$ is an automorphism of
$\mathbb{C}$-vector spaces. The multiplication by the function $A(\left\Vert
\kappa\right\Vert _{p})$ gives rise to a linear transformation from
$\mathcal{L}^{l}$ onto itself. Consequently, $\boldsymbol{A}(\partial
):\mathcal{L}^{l}\rightarrow\mathcal{L}^{l}$ is a well-defined linear operator.

\subsection{Energy functionals in the momenta space}

By using (\ref{EQ_oper_W_def})-(\ref{EQ_oper_W_pseudo}), for $\varphi
\in\mathcal{D}_{\mathbb{R}}$, we have%
\[%
{\textstyle\iint\limits_{\mathbb{Q}_{p}^{N}\times\mathbb{Q}_{p}^{N}}}
\text{ }\frac{\left\{  \varphi\left(  x\right)  -\varphi\left(  y\right)
\right\}  ^{2}}{w_{\delta}\left(  \left\Vert x-y\right\Vert _{p}\right)
}d^{N}xd^{N}y=2%
{\textstyle\int\limits_{\mathbb{Q}_{p}^{N}}}
\text{ }\varphi\left(  x\right)  \left(  -\boldsymbol{W}_{\delta}\right)
\varphi\left(  x\right)  d^{N}x.
\]
Then
\begin{align*}
E_{0}(\varphi)  &  =\frac{\gamma}{2}\text{\ }%
{\textstyle\int\limits_{\mathbb{Q}_{p}^{N}}}
\text{ }\varphi\left(  x\right)  \left(  -\boldsymbol{W}_{\delta}\right)
\varphi\left(  x\right)  d^{N}x+\frac{\alpha_{2}}{2}%
{\textstyle\int\limits_{\mathbb{Q}_{p}^{N}}}
\varphi^{2}\left(  x\right)  d^{N}x\\
&  =\frac{\gamma}{2}\text{\ }%
{\textstyle\int\limits_{\mathbb{Q}_{p}^{N}}}
\text{ }\varphi\left(  x\right)  \boldsymbol{\boldsymbol{W}}\left(
\partial,\delta\right)  \varphi\left(  x\right)  d^{N}x+\frac{\alpha_{2}}{2}%
{\textstyle\int\limits_{\mathbb{Q}_{p}^{N}}}
\varphi^{2}\left(  x\right)  d^{N}x\\
&  =\frac{\gamma}{2}\text{\ }%
{\textstyle\int\limits_{\mathbb{Q}_{p}^{N}}}
\text{ }A_{w_{\delta}}(\left\Vert \kappa\right\Vert _{p})\left\vert
\widehat{\varphi}\left(  \kappa\right)  \right\vert ^{2}d^{N}\kappa
+\frac{\alpha_{2}}{2}%
{\textstyle\int\limits_{\mathbb{Q}_{p}^{N}}}
\left\vert \widehat{\varphi}\left(  \kappa\right)  \right\vert ^{2}d^{N}%
\kappa\\
&  =\text{\ }%
{\textstyle\int\limits_{\mathbb{Q}_{p}^{N}}}
\text{ }\left(  \frac{\gamma}{2}A_{w_{\delta}}(\left\Vert \kappa\right\Vert
_{p})+\frac{\alpha_{2}}{2}\right)  \left\vert \widehat{\varphi}\left(
\kappa\right)  \right\vert ^{2}d^{N}\kappa.
\end{align*}
Now, for $\varphi\in\mathcal{D}_{\mathbb{R}}^{l}$ by using (\ref{Eq_14}), we
have
\begin{align*}
E_{0}(\varphi)  &  =p^{-lN}%
{\textstyle\sum\limits_{\boldsymbol{j}\in G_{l}\smallsetminus\left\{
\boldsymbol{0}\right\}  }}
\text{\ }\left(  \frac{\gamma}{2}A_{w_{\delta}}(\left\Vert \boldsymbol{j}%
\right\Vert _{p})+\frac{\alpha_{2}}{2}\right)  \left\vert \widehat{\varphi
}\left(  \boldsymbol{j}\right)  \right\vert ^{2}\\
&  +\left\vert \widehat{\varphi}\left(  \boldsymbol{0}\right)  \right\vert
^{2}\left\{
{\textstyle\int\limits_{p^{l}\mathbb{Z}_{p}^{N}}}
\text{ }\left(  \frac{\gamma}{2}A_{w_{\delta}}(\left\Vert z\right\Vert
_{p})+\frac{\alpha_{2}}{2}\right)  d^{N}z\right\}  ,
\end{align*}
where $\widehat{\varphi}\left(  \boldsymbol{j}\right)  =$ $\widehat{\varphi
}_{1}\left(  \boldsymbol{j}\right)  +\sqrt{-1}\widehat{\varphi}_{2}\left(
\boldsymbol{j}\right)  \in\mathbb{C}$. Here we use the alternative notation
$\widehat{\varphi_{1}}\left(  \boldsymbol{j}\right)  =\operatorname{Re}\left(
\widehat{\varphi}\left(  \boldsymbol{j}\right)  \right)  $, $\widehat{\varphi
}_{2}\left(  \boldsymbol{j}\right)  =\operatorname{Im}\left(  \widehat
{\varphi}\left(  \boldsymbol{j}\right)  \right)  $ which more convenient for us.

\begin{remark}
Notice that
\[
\mathcal{FL}_{\mathbb{R}}^{l}=\left\{  \widehat{\varphi}\left(  \kappa\right)
=\sum_{\boldsymbol{i}\in G_{l}}\widehat{\varphi}\left(  \boldsymbol{i}\right)
\Omega\left(  p^{l}\left\Vert \kappa-\boldsymbol{i}\right\Vert _{p}\right)
,\widehat{\varphi}\left(  \boldsymbol{i}\right)  \in\mathbb{C};\widehat
{\varphi}\left(  0\right)  =0,\text{ }\overline{\widehat{\varphi}\left(
\kappa\right)  }=\widehat{\varphi}\left(  -\kappa\right)  \right\}  ,
\]
and that the condition $\overline{\widehat{\varphi}\left(  \kappa\right)
}=\widehat{\varphi}\left(  -\kappa\right)  $ implies that $\widehat{\varphi
}_{1}\left(  -\boldsymbol{i}\right)  =\widehat{\varphi}_{1}\left(
\boldsymbol{i}\right)  $\ and $\widehat{\varphi}_{2}\left(  -\boldsymbol{i}%
\right)  =-\widehat{\varphi}_{2}\left(  \boldsymbol{i}\right)  $ for any
$\boldsymbol{i}\in G_{l}$. This implies that $\mathcal{FL}_{\mathbb{R}}^{l}$
is $\mathbb{R}$-vector space of dimension $\#G_{l}$ $-1$.
\end{remark}

\begin{remark}
\label{Nota_Basis}We set $G_{l}\smallsetminus\left\{  \boldsymbol{0}\right\}
:=G_{l}^{+}\bigsqcup G_{l}^{-}$, where the subsets $G_{l}^{+}$, $G_{l}^{-}$
satisfy that
\[%
\begin{array}
[c]{lll}%
G_{l}^{+} & \rightarrow & G_{l}^{-}\\
\boldsymbol{i} & \rightarrow & -\boldsymbol{i}%
\end{array}
\]
is a bijection. We recall here that $G_{l}$ is a finite additive group. Since
$\#G_{l}^{+}=\#G_{l}^{-}$ necessarily $\#\left(  G_{l}\smallsetminus\left\{
\boldsymbol{0}\right\}  \right)  =p^{2Nl}-1$ is even, and thus $p\geq3$. Then
any function from $\mathcal{FL}_{\mathbb{R}}^{l}$ can be uniquely represented
as
\[
\widehat{\varphi}\left(  \kappa\right)  =\sum_{\boldsymbol{i}\in G_{l}^{+}%
}\widehat{\varphi}_{1}\left(  \boldsymbol{i}\right)  \Omega_{+}\left(
p^{l}\left\Vert \kappa-\boldsymbol{i}\right\Vert _{p}\right)  +\widehat
{\varphi}_{2}\left(  \boldsymbol{i}\right)  \Omega_{-}\left(  p^{l}\left\Vert
\kappa-\boldsymbol{i}\right\Vert _{p}\right)  ,
\]
where
\[
\Omega_{+}\left(  p^{l}\left\Vert \kappa-\boldsymbol{i}\right\Vert
_{p}\right)  :=\Omega\left(  p^{l}\left\Vert \kappa-\boldsymbol{i}\right\Vert
_{p}\right)  +\Omega\left(  p^{l}\left\Vert \kappa+\boldsymbol{i}\right\Vert
_{p}\right)  ,
\]
and
\[
\Omega_{-}\left(  p^{l}\left\Vert \kappa-\boldsymbol{i}\right\Vert
_{p}\right)  :=\sqrt{-1}\left\{  \Omega\left(  p^{l}\left\Vert \kappa
-\boldsymbol{i}\right\Vert _{p}\right)  -\Omega\left(  p^{l}\left\Vert
\kappa+\boldsymbol{i}\right\Vert _{p}\right)  \right\}  .
\]

\end{remark}

We take\ $\varphi\in\mathcal{L}_{\mathbb{R}}^{l}$, then $\widehat{\varphi
}\left(  0\right)  =0$, and
\begin{align*}
E_{0}^{\left(  l\right)  }(\varphi)  &  =p^{-lN}%
{\textstyle\sum\limits_{\boldsymbol{j}\in G_{l}\smallsetminus\left\{
\boldsymbol{0}\right\}  }}
\text{\ }\left(  \frac{\gamma}{2}A_{w_{\delta}}(\left\Vert \boldsymbol{j}%
\right\Vert _{p})+\frac{\alpha_{2}}{2}\right)  \left(  \widehat{\varphi_{1}%
}^{2}\left(  \boldsymbol{j}\right)  +\widehat{\varphi_{2}}^{2}\left(
\boldsymbol{j}\right)  \right) \\
&  =2p^{-lN}%
{\textstyle\sum\limits_{r\in\left\{  1,2\right\}  }}
\text{ \ }%
{\textstyle\sum\limits_{\boldsymbol{j}\in G_{l}^{+}}}
\text{\ }\left(  \frac{\gamma}{2}A_{w_{\delta}}(\left\Vert j\right\Vert
_{p})+\frac{\alpha_{2}}{2}\right)  \widehat{\varphi_{r}}^{2}\left(
\boldsymbol{j}\right)  .
\end{align*}
By using that $\mathcal{L}_{\mathbb{R}}^{l}\simeq\mathcal{FL}_{\mathbb{R}}%
^{l}$ we get that $E_{0}^{\left(  l\right)  }$ is a real-valued functional
defined on $\mathcal{FL}_{\mathbb{R}}^{l}\simeq\mathbb{R}^{\left(
\#G_{l}-1\right)  }$.

We now define\ the diagonal matrix $B^{\left(  r\right)  }=\left[
B_{\boldsymbol{i},\boldsymbol{j}}^{\left(  r\right)  }\right]
_{\boldsymbol{i},\boldsymbol{j}\in G_{l}^{+}}$, $r=1$, $2$, where%
\[
B_{\boldsymbol{i},\boldsymbol{j}}^{\left(  r\right)  }:=\left\{
\begin{array}
[c]{lll}%
\frac{\gamma}{2}A_{w_{\delta}}(\left\Vert \boldsymbol{j}\right\Vert
_{p})+\frac{\alpha_{2}}{2} & \text{if} & \boldsymbol{i}=\boldsymbol{j}\\
&  & \\
0 & \text{if} & \boldsymbol{i}\neq\boldsymbol{j}.
\end{array}
\right.
\]
Notice that $B_{\boldsymbol{i},\boldsymbol{j}}^{\left(  1\right)
}=B_{\boldsymbol{i},\boldsymbol{j}}^{\left(  2\right)  }$. We set%
\begin{equation}
B(l):=B(l,\delta,\gamma,\alpha_{2})=\left[
\begin{array}
[c]{ll}%
B^{\left(  1\right)  } & \boldsymbol{0}\\
\boldsymbol{0} & B^{\left(  2\right)  }.
\end{array}
\right]  \label{Eq_Matrix_B_1}%
\end{equation}
The matrix $B=\left[  B_{\boldsymbol{i},\boldsymbol{j}}\right]  $ is a
diagonal of size $2\left(  \#G_{l}^{+}\right)  \times2\left(  \#G_{l}%
^{+}\right)  $. In addition, the indices $\boldsymbol{i},\boldsymbol{j}$ run
through two disjoint copies of $G_{l}^{+}$. Then we have the following result:

\begin{lemma}
\label{Lemma10}Assume that $\alpha_{2}\geq0$. With the above notation the
following formula holds true:%
\begin{equation}
E_{0}^{\left(  l\right)  }(\varphi):=E_{0}^{\left(  l\right)  }\left(
\widehat{\varphi_{1}}\left(  \boldsymbol{j}\right)  ,\widehat{\varphi_{2}%
}\left(  \boldsymbol{j}\right)  ;\boldsymbol{j}\in G_{l}^{+}\right)  =\left[
\begin{array}
[c]{l}%
\left[  \widehat{\varphi_{1}}\left(  \boldsymbol{j}\right)  \right]
_{\boldsymbol{j}\in G_{l}^{+}}\\
\left[  \widehat{\varphi_{2}}\left(  \boldsymbol{j}\right)  \right]
_{\boldsymbol{j}\in G_{l}^{+}}%
\end{array}
\right]  ^{T}2p^{-lN}B(l)\left[
\begin{array}
[c]{l}%
\left[  \widehat{\varphi_{1}}\left(  \boldsymbol{j}\right)  \right]
_{\boldsymbol{j}\in G_{l}^{+}}\\
\left[  \widehat{\varphi_{2}}\left(  \boldsymbol{j}\right)  \right]
_{\boldsymbol{j}\in G_{l}^{+}}%
\end{array}
\right]  \geq0, \label{Eq_E_0_Momenta}%
\end{equation}
for $\varphi\in\mathcal{L}_{\mathbb{R}}^{l}\simeq\mathcal{FL}_{\mathbb{R}}%
^{l}\simeq\mathbb{R}^{\left(  \#G_{l}\text{ }-1\right)  }$, where
$2p^{-lN}B(l)$ is a diagonal, positive definite, invertible matrix.
\end{lemma}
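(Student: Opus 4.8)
The plan is to read the formula directly off the momentum-space expression for $E_0^{(l)}(\varphi)$ derived in the paragraph immediately preceding the statement, and then to verify the three asserted properties of $2p^{-lN}B(l)$ by inspecting its diagonal entries. There is no substantial analytic content here; everything reduces to repackaging an already-diagonal sum as a matrix quadratic form.

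First I would specialize the momentum-space computation to $\varphi \in \mathcal{L}_{\mathbb{R}}^l$. Since such $\varphi$ satisfies $\widehat{\varphi}(\boldsymbol{0}) = 0$, the boundary term supported at $\boldsymbol{j} = \boldsymbol{0}$ drops out, leaving
\[
E_0^{(l)}(\varphi) = 2p^{-lN} \sum_{r \in \{1,2\}} \sum_{\boldsymbol{j} \in G_l^+} \left( \frac{\gamma}{2} A_{w_\delta}(\|\boldsymbol{j}\|_p) + \frac{\alpha_2}{2} \right) \widehat{\varphi_r}^2(\boldsymbol{j}).
\]
The reduction of the sum over $G_l \smallsetminus \{\boldsymbol{0}\}$ to twice the sum over $G_l^+$ rests on the radiality of $A_{w_\delta}$ together with the reality constraints $\widehat{\varphi_1}(-\boldsymbol{i}) = \widehat{\varphi_1}(\boldsymbol{i})$ and $\widehat{\varphi_2}(-\boldsymbol{i}) = -\widehat{\varphi_2}(\boldsymbol{i})$ recorded in Remark \ref{Nota_Basis}; pairing $\boldsymbol{j}$ with $-\boldsymbol{j}$ then produces the factor $2$, and both ingredients are already available, so this step is immediate.

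Next I would match this sum to the quadratic form in (\ref{Eq_E_0_Momenta}). Writing $\boldsymbol{v} = \left[\widehat{\varphi_1}(\boldsymbol{j})\right]_{\boldsymbol{j}\in G_l^+}$ and $\boldsymbol{w} = \left[\widehat{\varphi_2}(\boldsymbol{j})\right]_{\boldsymbol{j}\in G_l^+}$, the block-diagonal structure of $B(l)$ in (\ref{Eq_Matrix_B_1}) gives
\[
\left[ \begin{array}{l} \boldsymbol{v} \\ \boldsymbol{w} \end{array} \right]^T B(l) \left[ \begin{array}{l} \boldsymbol{v} \\ \boldsymbol{w} \end{array} \right] = \boldsymbol{v}^T B^{(1)} \boldsymbol{v} + \boldsymbol{w}^T B^{(2)} \boldsymbol{w} = \sum_{r \in \{1,2\}} \sum_{\boldsymbol{j} \in G_l^+} B_{\boldsymbol{j},\boldsymbol{j}}^{(r)} \widehat{\varphi_r}^2(\boldsymbol{j}),
\]
since $B^{(1)}$ and $B^{(2)}$ are diagonal. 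Substituting $B_{\boldsymbol{j},\boldsymbol{j}}^{(r)} = \frac{\gamma}{2} A_{w_\delta}(\|\boldsymbol{j}\|_p) + \frac{\alpha_2}{2}$ and multiplying by the scalar $2p^{-lN}$ reproduces the displayed expression for $E_0^{(l)}(\varphi)$, which is precisely (\ref{Eq_E_0_Momenta}).

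Finally, for the structural claims I note that $B(l)$ is diagonal by construction and $2p^{-lN} > 0$. Each diagonal entry equals $\frac{\gamma}{2} A_{w_\delta}(\|\boldsymbol{j}\|_p) + \frac{\alpha_2}{2} \geq \frac{\alpha_2}{2} > 0$, using $\gamma > 0$, $A_{w_\delta} \geq 0$, and the hypothesis $\alpha_2 > 0$. A diagonal matrix with strictly positive diagonal is positive definite and invertible, and scaling by the positive constant $2p^{-lN}$ preserves both properties; the inequality $E_0^{(l)}(\varphi) \geq 0$ is then read off from (\ref{Eq_E_0_Momenta}). I do not anticipate any genuine obstacle: the only place the hypothesis $\alpha_2 > 0$ is actually needed is to secure the uniform lower bound $\frac{\alpha_2}{2}$ on the diagonal entries, which will later be convenient for estimates independent of the frequency $\boldsymbol{j}$ and of $l$.
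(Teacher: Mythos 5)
Your proposal is correct and follows essentially the same route as the paper, which presents the same momentum-space computation (reduction of the sum over $G_{l}\smallsetminus\{\boldsymbol{0}\}$ to twice the sum over $G_{l}^{+}$, then repackaging as a quadratic form in the diagonal matrix $B(l)$) in the text immediately preceding the lemma rather than in a separate proof environment. Your explicit justification of the factor $2$ via radiality of $A_{w_{\delta}}$ and the parity relations of $\widehat{\varphi_{1}},\widehat{\varphi_{2}}$, and of positive definiteness via the lower bound $\frac{\alpha_{2}}{2}$ on the diagonal entries, fills in exactly the details the paper leaves implicit.
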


\section{Gaussian measures}

We recall that we are taking $\delta>N$, $\gamma>0$, $\alpha_{2}\geq0$. We
define the partition function attached to the energy functional $E_{0}$ as
\[
\mathcal{Z}(\delta,\gamma,\alpha_{2})=\int\limits_{\mathcal{FL}_{\mathbb{R}%
}(\mathbb{Q}_{p}^{N})}D(\varphi)e^{-E_{0}\left(  \varphi\right)  },
\]
where $D(\varphi)$ is a \textquotedblleft spurious measure\textquotedblright%
\ on $\mathcal{FL}_{\mathbb{R}}(\mathbb{Q}_{p}^{N})$. For the sake of
simplicity we use the notation $\mathcal{Z}=\mathcal{Z}(\delta,\gamma
,\alpha_{2})$ wherever possible.

As the discrete version of $\mathcal{Z}(\delta,\gamma,\alpha_{2})$ we take
\[
\mathcal{Z}^{\left(  l\right)  }(\delta,\gamma,\alpha_{2}):=\int
\limits_{\mathcal{FL}_{\mathbb{R}}^{l}(\mathbb{Q}_{p}^{N})}D_{l}\left(
\varphi\right)  e^{-E_{0}\left(  \varphi\right)  }.
\]
We also use the notation $\mathcal{Z}^{\left(  l\right)  }=\mathcal{Z}%
^{\left(  l\right)  }(\delta,\gamma,\alpha_{2})$. Now we define $\mathcal{Z}%
^{\left(  l\right)  }(\delta,\gamma,\alpha_{2})$ as
\begin{multline*}
\mathcal{Z}^{\left(  l\right)  }(\delta,\gamma,\alpha_{2})=\\
\int\limits_{\mathbb{R}^{\left(  p^{2lN}-1\right)  }}\exp\left(  -\left[
\begin{array}
[c]{l}%
\left[  \widehat{\varphi_{1}}\left(  \boldsymbol{j}\right)  \right]
_{\boldsymbol{j}\in G_{l}^{+}}\\
\left[  \widehat{\varphi_{2}}\left(  \boldsymbol{j}\right)  \right]
_{\boldsymbol{j}\in G_{l}^{+}}%
\end{array}
\right]  ^{T}2p^{-lN}B(l)\left[
\begin{array}
[c]{l}%
\left[  \widehat{\varphi_{1}}\left(  \boldsymbol{j}\right)  \right]
_{\boldsymbol{j}\in G_{l}^{+}}\\
\left[  \widehat{\varphi_{2}}\left(  \boldsymbol{j}\right)  \right]
_{\boldsymbol{j}\in G_{l}^{+}}%
\end{array}
\right]  \right)  \prod\limits_{\boldsymbol{i}\in G_{l}^{+}}d\widehat
{\varphi_{1}}\left(  \boldsymbol{i}\right)  d\widehat{\varphi_{2}}\left(
\boldsymbol{i}\right)  ,
\end{multline*}
where $%
{\textstyle\prod\nolimits_{\boldsymbol{i}\in G_{l}^{+}}}
d\widehat{\varphi_{1}}\left(  \boldsymbol{i}\right)  d\widehat{\varphi_{2}%
}\left(  \boldsymbol{i}\right)  $ is the Lebesgue measure on $\mathbb{R}%
^{\left(  p^{2lN}-1\right)  }$.

The integral $\mathcal{Z}^{\left(  l\right)  }$ is the natural discretization
of $\mathcal{Z}$. From a classical point of view, one should expect that
$\mathcal{Z}=\lim_{l\rightarrow\infty}\mathcal{Z}^{\left(  l\right)  }$ in
some sense. The goal of this section is to study these matters in a rigorous
mathematical way. Our main result is the construction of rigorous mathematical
version of the spurious measure $D(\varphi)$.

By Lemma \ref{Lemma10}, $\mathcal{Z}^{\left(  l\right)  }$ is a Gaussian
integral, then%
\[
\mathcal{Z}^{\left(  l\right)  }=\frac{\left(  2\pi\right)  ^{\frac{\left(
p^{2lN}-1\right)  }{2}}}{\sqrt{\det4p^{-lN}B(l)}}=\left(  \frac{\pi}%
{2}\right)  ^{\frac{\left(  p^{2lN}-1\right)  }{2}}\frac{p^{\frac{lN\left(
p^{2lN}-1\right)  }{2}}}{\sqrt{\det B}}.
\]

\begin{definition}
We define the following family of Gaussian measures:%
\begin{gather}
d\mathbb{P}_{l}\left(  \left[
\begin{array}
[c]{l}%
\left[  \widehat{\varphi_{1}}\left(  \boldsymbol{j}\right)  \right]
_{\boldsymbol{j}\in G_{l}^{+}}\\
\left[  \widehat{\varphi_{2}}\left(  \boldsymbol{j}\right)  \right]
_{\boldsymbol{j}\in G_{l}^{+}}%
\end{array}
\right]  ;\delta,\gamma,\alpha_{2}\right)  :=d\mathbb{P}_{l}\left(  \left[
\begin{array}
[c]{l}%
\left[  \widehat{\varphi_{1}}\left(  \boldsymbol{j}\right)  \right]
_{\boldsymbol{j}\in G_{l}^{+}}\\
\left[  \widehat{\varphi_{2}}\left(  \boldsymbol{j}\right)  \right]
_{\boldsymbol{j}\in G_{l}^{+}}%
\end{array}
\right]  \right) \nonumber\\
=\frac{1}{\mathcal{Z}^{\left(  l\right)  }}\exp(-\left[
\begin{array}
[c]{l}%
\left[  \widehat{\varphi_{1}}\left(  \boldsymbol{j}\right)  \right]
_{\boldsymbol{j}\in G_{l}^{+}}\\
\left[  \widehat{\varphi_{2}}\left(  \boldsymbol{j}\right)  \right]
_{\boldsymbol{j}\in G_{l}^{+}}%
\end{array}
\right]  ^{T}2p^{-lN}B(l)\left[
\begin{array}
[c]{l}%
\left[  \widehat{\varphi_{1}}\left(  \boldsymbol{j}\right)  \right]
_{\boldsymbol{j}\in G_{l}^{+}}\\
\left[  \widehat{\varphi_{2}}\left(  \boldsymbol{j}\right)  \right]
_{\boldsymbol{j}\in G+_{l}}%
\end{array}
\right]  )\prod\limits_{\boldsymbol{i}\in G_{l}^{+}}d\widehat{\varphi_{1}%
}\left(  \boldsymbol{i}\right)  d\widehat{\varphi_{2}}\left(  \boldsymbol{i}%
\right)  \label{Eq_P_L}%
\end{gather}
in $\mathcal{FL}_{\mathbb{R}}^{l}\simeq\mathbb{R}^{\left(  p^{2lN}-1\right)
}$, for $l\in\mathbb{N}\smallsetminus\left\{  0\right\}  $.
\end{definition}

Thus for any Borel subset $A$ of $\mathbb{R}^{\left(  p^{2lN}-1\right)
}\simeq\mathcal{FL}_{\mathbb{R}}^{l}$ and any continuous and bounded function
$f:\mathcal{FL}_{\mathbb{R}}^{l}\rightarrow\mathbb{R}$ the integral%
\[%
{\textstyle\int\limits_{A}}
f\left(  \left[
\begin{array}
[c]{l}%
\left[  \widehat{\varphi_{1}}\left(  \boldsymbol{j}\right)  \right]
_{\boldsymbol{j}\in G_{l}^{+}}\\
\left[  \widehat{\varphi_{2}}\left(  \boldsymbol{j}\right)  \right]
_{\boldsymbol{j}\in G_{l}^{+}}%
\end{array}
\right]  \right)  d\mathbb{P}_{l}\left(  \left[
\begin{array}
[c]{l}%
\left[  \widehat{\varphi_{1}}\left(  \boldsymbol{j}\right)  \right]
_{\boldsymbol{j}\in G_{l}^{+}}\\
\left[  \widehat{\varphi_{2}}\left(  \boldsymbol{j}\right)  \right]
_{\boldsymbol{j}\in G_{l}^{+}}%
\end{array}
\right]  \right)  =:%
{\textstyle\int\limits_{A}}
f\left(  \widehat{\varphi}\right)  d\mathbb{P}_{l}\left(  \widehat{\varphi
}\right)
\]
is well-defined.

We define $\mathcal{I}=\cup_{l\in\mathbb{N}\setminus\left\{  0\right\}  }%
G_{l}^{+}$. Then $\mathcal{I}$ is a countable set. Given a finite subset $J$
of $\mathcal{I}$ there is $l_{0}\in\mathbb{N}\setminus\left\{  0\right\}  $
such that $G_{l_{0}}^{+}$ is the smallest set of the form $G_{l}^{+}%
$\ containing $J$. To each finite subset $J$ of $\mathcal{I}$ we attach a
collection of Gaussian random variables
\[
\left.  \left[
\begin{array}
[c]{l}%
\left[  \widehat{\varphi_{1}}\left(  \boldsymbol{j}\right)  \right]
_{\boldsymbol{j}\in G_{l_{0}}^{+}}\\
\left[  \widehat{\varphi_{2}}\left(  \boldsymbol{j}\right)  \right]
_{\boldsymbol{j}\in G_{l_{0}}^{+}}%
\end{array}
\right]  \right\vert _{\widehat{\varphi_{2}}\left(  \boldsymbol{j}\right)
=\widehat{\varphi_{1}}\left(  \boldsymbol{j}\right)  =0,\boldsymbol{j}\in
G_{l_{0}}^{+}\setminus J}%
\]
having joint probability distribution%
\[
\mathbb{P}_{J}=\left.  \mathbb{P}_{l_{0}}\left(  \left[
\begin{array}
[c]{l}%
\left[  \widehat{\varphi_{1}}\left(  \boldsymbol{j}\right)  \right]
_{\boldsymbol{j}\in G_{l_{0}}^{+}}\\
\left[  \widehat{\varphi_{2}}\left(  \boldsymbol{j}\right)  \right]
_{\boldsymbol{j}\in G_{l_{0}}^{+}}%
\end{array}
\right]  \right)  \right\vert _{\widehat{\varphi_{2}}\left(  \boldsymbol{j}%
\right)  =\widehat{\varphi_{1}}\left(  \boldsymbol{j}\right)
=0,\boldsymbol{j}\in G_{l_{0}}^{+}\setminus J}.
\]

Notice that $\mathbb{P}_{G_{l}^{+}}\left(  \left[
\begin{array}
[c]{l}%
\left[  \widehat{\varphi_{1}}\left(  \boldsymbol{j}\right)  \right]
_{\boldsymbol{j}\in G_{l}^{+}}\\
\left[  \widehat{\varphi_{2}}\left(  \boldsymbol{j}\right)  \right]
_{\boldsymbol{j}\in G_{l}^{+}}%
\end{array}
\right]  \right)  =\mathbb{P}_{l}\left(  \left[
\begin{array}
[c]{l}%
\left[  \widehat{\varphi_{1}}\left(  \boldsymbol{j}\right)  \right]
_{\boldsymbol{j}\in G_{l}^{+}}\\
\left[  \widehat{\varphi_{2}}\left(  \boldsymbol{j}\right)  \right]
_{\boldsymbol{j}\in G_{l}^{+}}%
\end{array}
\right]  \right)  $. The family of Gaussian measures $\left\{  \mathbb{P}%
_{J}\left(  \left[
\begin{array}
[c]{l}%
\left[  \widehat{\varphi_{1}}\left(  \boldsymbol{j}\right)  \right]
_{\boldsymbol{j}\in G_{l}^{+}}\\
\left[  \widehat{\varphi_{2}}\left(  \boldsymbol{j}\right)  \right]
_{\boldsymbol{j}\in G_{l}^{+}}%
\end{array}
\right]  \right)  ;J\subset\mathcal{I}\right\}  $ is consistent, i.e.
$\mathbb{P}_{J}(A)=\mathbb{P}_{K}(A\times\mathbb{R}^{\#K-\#J})$, for $J\subset
K$, see e.g. \cite[Chapter IV, Section 3.1, Lemma 1]{Gelfand-Vilenkin}. We now
apply Kolmogorov's consistency theorem and its proof, see e.g. \cite[Theorem
2.1]{Simon-1}, to obtain the following result:

\begin{lemma}
\label{Lemma11}There exists a probability measure space $\left(
X,\mathcal{F},\mathbb{P}\right)  $ and random variables
\[
\left[
\begin{array}
[c]{l}%
\left[  \widehat{\varphi_{1}}\left(  \boldsymbol{j}\right)  \right]
_{\boldsymbol{j}\in G_{l}^{+}}\\
\left[  \widehat{\varphi_{2}}\left(  \boldsymbol{j}\right)  \right]
_{\boldsymbol{j}\in G_{l}^{+}}%
\end{array}
\right]  \text{, for }l\in\mathbb{N}\smallsetminus\left\{  0\right\}  \text{,}%
\]
such that $\mathbb{P}_{l}$ is the joint probability distribution of\ $\left[
\begin{array}
[c]{l}%
\left[  \widehat{\varphi_{1}}\left(  \boldsymbol{j}\right)  \right]
_{\boldsymbol{j}\in G_{l}^{+}}\\
\left[  \widehat{\varphi_{2}}\left(  \boldsymbol{j}\right)  \right]
_{\boldsymbol{j}\in G_{l}^{+}}%
\end{array}
\right]  $. The space $\left(  X,\mathcal{F},\mathbb{P}\right)  $ is unique up
to isomorphisms of probability measure spaces. Furthermore, for any bounded
continuous function $f$ supported in $\mathcal{FL}_{\mathbb{R}}^{l}$, we have%
\[%
{\textstyle\int\limits_{\mathcal{FL}_{\mathbb{R}}^{l}}}
f\left(  \widehat{\varphi}\right)  d\mathbb{P}_{l}\left(  \widehat{\varphi
}\right)  =%
{\textstyle\int\limits_{\mathcal{FL}_{\mathbb{R}}^{l}}}
f\left(  \widehat{\varphi}\right)  d\mathbb{P}\left(  \widehat{\varphi
}\right)  .
\]

\end{lemma}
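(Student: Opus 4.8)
The plan is to recognize this statement as a direct application of Kolmogorov's consistency (extension) theorem to the projective family $\{\mathbb{P}_l\}_{l\geq 1}$ constructed above. First I would fix the bookkeeping of the index set. Writing $\mathcal{I}=\cup_{l=1}^{\infty}G_{l}^{+}$, I attach to each index $\boldsymbol{j}\in\mathcal{I}$ the two real coordinates $\widehat{\varphi_{1}}(\boldsymbol{j})$ and $\widehat{\varphi_{2}}(\boldsymbol{j})$, and take $X$ to be the countable product of copies of $\mathbb{R}$ indexed by these coordinates (so $X\cong\mathbb{R}^{\infty}$) equipped with its product (cylinder) $\sigma$-algebra $\mathcal{F}$. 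Since any finite subset of $\mathcal{I}$ is of the form $G_{l}^{+}$ for a suitable $l$, each Gaussian measure $\mathbb{P}_{l}$ of (\ref{Eq_P_L}) is precisely the prescribed law on the finite-dimensional coordinate slice $\mathbb{R}^{2\#G_{l}^{+}}\simeq\mathbb{R}^{(p^{2lN}-1)}\simeq\mathcal{FL}_{\mathbb{R}}^{l}$.

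The second step is to invoke the consistency already noted before the statement: for $m>l$ one has $\mathbb{P}_{l}(A)=\mathbb{P}_{m}(A\times\mathbb{R}^{\#G_{m}-\#G_{l}})$, so that the $\mathbb{P}_{l}$ form a projective system of Borel probability measures on finite-dimensional Euclidean spaces. Kolmogorov's theorem (in the form cited in \cite[Theorem 2.1]{Simon-1}) then yields a probability measure $\mathbb{P}$ on $(X,\mathcal{F})$ whose marginal on each coordinate block $G_{l}^{+}$ coincides with $\mathbb{P}_{l}$; the coordinate projections serve as the asserted random variables $\widehat{\varphi_{1}}(\boldsymbol{j}),\widehat{\varphi_{2}}(\boldsymbol{j})$, and their joint law is $\mathbb{P}_{l}$ by construction. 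The uniqueness up to isomorphism of probability measure spaces is exactly the uniqueness clause of that theorem: any two such extensions agree on all cylinder sets, which generate $\mathcal{F}$, hence they agree everywhere.

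Finally, for the integral identity I would note that a bounded continuous $f$ supported in $\mathcal{FL}_{\mathbb{R}}^{l}$ depends only on the finitely many coordinates indexed by $G_{l}^{+}$, i.e. it is a cylinder function factoring through the projection $\pi_{l}:X\to\mathbb{R}^{2\#G_{l}^{+}}$. Hence $f$ is $\mathcal{F}$-measurable and bounded, so it is $\mathbb{P}$-integrable, and by the defining marginal property $\mathbb{P}\circ\pi_{l}^{-1}=\mathbb{P}_{l}$ the change-of-variables formula gives $\int_{X}f\,d\mathbb{P}=\int_{\mathcal{FL}_{\mathbb{R}}^{l}}f\,d\mathbb{P}_{l}$, which is the claimed equality.

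I expect the only genuinely delicate point to be a matter of careful set-up rather than of analysis: one must ensure that the two real coordinates per index are grouped consistently across levels, so that the inclusions $G_{l}^{+}\hookrightarrow G_{m}^{+}$ induce the correct coordinate inclusions and the diagonal structure of $B(l)$ from Lemma \ref{Lemma10} respects them, and that ``supported in $\mathcal{FL}_{\mathbb{R}}^{l}$'' is read as ``measurable with respect to the coordinates in $G_{l}^{+}$''. Once this identification is fixed, the statement is immediate from Kolmogorov's theorem; no estimate beyond the finiteness of the Gaussian normalizations $\mathcal{N}_{l}$ (already recorded) is needed.
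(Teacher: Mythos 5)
Your proposal is correct and follows essentially the same route as the paper, which obtains the lemma directly from the consistency relation $\mathbb{P}_{l}(A)=\mathbb{P}_{m}(A\times\mathbb{R}^{\#G_{m}-\#G_{l}})$ and Kolmogorov's consistency theorem as cited in \cite[Theorem 2.1]{Simon-1}; you merely make explicit the product-space construction, the coordinate projections, and the change-of-variables step for cylinder functions that the paper leaves implicit.
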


\subsection{A quick detour into the $p$-adic noise calculus}

In this section we introduce a Gel'fand triple and construct \ some Gaussian
measures in the non-Archimedean setting.

\subsubsection{A bilinear form in $\mathcal{D}_{\mathbb{R}}\left(
\mathbb{Q}_{p}^{N}\right)  $}

For $\delta>N$, $\gamma$, $\alpha_{2}>0$, we define the operator
\[%
\begin{array}
[c]{lll}%
\mathcal{D}\left(  \mathbb{Q}_{p}^{N}\right)  & \rightarrow & L^{2}\left(
\mathbb{Q}_{p}^{N}\right) \\
&  & \\
\varphi & \rightarrow & \left(  \frac{\gamma}{2}\boldsymbol{W}\left(
\partial,\delta\right)  +\frac{\alpha_{2}}{2}\right)  ^{-1}\varphi,
\end{array}
\]
where $\left(  \frac{\gamma}{2}\boldsymbol{W}\left(  \partial,\delta\right)
+\frac{\alpha_{2}}{2}\right)  ^{-1}\varphi\left(  x\right)  :=\mathcal{F}%
_{\kappa\rightarrow x}^{-1}\left(  \frac{\mathcal{F}_{x\rightarrow\kappa
}\varphi}{\frac{\gamma}{2}A_{w_{\delta}}(\left\Vert \kappa\right\Vert
_{p})+\frac{\alpha_{2}}{2}}\right)  $.

We define the distribution%
\[
G(x):=G(x;\delta,\gamma,\alpha_{2})=\mathcal{F}_{\kappa\rightarrow x}%
^{-1}\left(  \frac{1}{\frac{\gamma}{2}A_{w_{\delta}}(\left\Vert \kappa
\right\Vert _{p})+\frac{\alpha_{2}}{2}}\right)  \in\mathcal{D}^{\prime}\left(
\mathbb{Q}_{p}^{N}\right)  .
\]
By using the fact that $\frac{1}{\frac{\gamma}{2}A_{w_{\delta}}(\left\Vert
\kappa\right\Vert _{p})+\frac{\alpha_{2}}{2}}$ is radial and $(\mathcal{F}%
(\mathcal{F}\varphi))(\kappa)=\varphi(-\kappa)$ one \ verifies that%
\[
G(x)\in\mathcal{D}_{\mathbb{R}}^{\prime}\left(  \mathbb{Q}_{p}^{N}\right)  .
\]
Now we define the following bilinear form $\mathbb{B}:=\mathbb{B}%
(\delta,\gamma,\alpha_{2})$:%
\[%
\begin{array}
[c]{lll}%
\mathbb{B}:\mathcal{D}_{\mathbb{R}}\left(  \mathbb{Q}_{p}^{N}\right)
\times\mathcal{D}_{\mathbb{R}}\left(  \mathbb{Q}_{p}^{N}\right)  & \rightarrow
& \mathbb{R}\\
&  & \\
\left(  \varphi,\theta\right)  & \rightarrow & \left\langle \varphi,\left(
\frac{\gamma}{2}\boldsymbol{W}\left(  \partial,\delta\right)  +\frac
{\alpha_{2}}{2}\right)  ^{-1}\theta\right\rangle ,
\end{array}
\]
where $\left\langle \cdot,\cdot\right\rangle $ denotes the scalar product in
$L^{2}\left(  \mathbb{Q}_{p}^{N}\right)  $.

\begin{lemma}
\label{Lemma12}$\mathbb{B}$ is a positive, continuous bilinear form from
$\mathcal{D}_{\mathbb{R}}\left(  \mathbb{Q}_{p}^{N}\right)  \times
\mathcal{D}_{\mathbb{R}}\left(  \mathbb{Q}_{p}^{N}\right)  $\ into
$\mathbb{R}$.
\end{lemma}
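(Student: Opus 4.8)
The plan is to push everything into momentum space via Parseval's identity and then exploit the boundedness of the Fourier multiplier that the hypothesis $\alpha_{2}>0$ provides. Bilinearity is immediate from the linearity of $\left(\frac{\gamma}{2}\boldsymbol{W}\left(\partial,\delta\right)+\frac{\alpha_{2}}{2}\right)^{-1}$ and the bilinearity of the $L^{2}$ scalar product, so the real content is reality-valuedness, positivity, and continuity. First I would use that the Fourier transform is unitary on $L^{2}\left(\mathbb{Q}_{p}^{N}\right)$ to rewrite, for $\varphi,\theta\in\mathcal{D}_{\mathbb{R}}\left(\mathbb{Q}_{p}^{N}\right)$,
\[
\mathbb{B}\left(\varphi,\theta\right)=\int_{\mathbb{Q}_{p}^{N}}\frac{\widehat{\varphi}\left(\kappa\right)\overline{\widehat{\theta}\left(\kappa\right)}}{\frac{\gamma}{2}A_{w_{\delta}}(\left\Vert\kappa\right\Vert_{p})+\frac{\alpha_{2}}{2}}d^{N}\kappa,
\]
which is well defined because $\widehat{\varphi},\widehat{\theta}\in\mathcal{D}\subset L^{2}$ and the denominator is bounded below by $\frac{\alpha_{2}}{2}>0$.

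The real-valuedness on $\mathcal{D}_{\mathbb{R}}\times\mathcal{D}_{\mathbb{R}}$ follows from the fact, already recorded in the excerpt, that $G\in\mathcal{D}_{\mathbb{R}}^{\prime}\left(\mathbb{Q}_{p}^{N}\right)$: since the multiplier $\left(\frac{\gamma}{2}A_{w_{\delta}}+\frac{\alpha_{2}}{2}\right)^{-1}$ is real and radial, hence even, the relation $\overline{\widehat{\theta}\left(\kappa\right)}=\widehat{\theta}\left(-\kappa\right)$ valid for real $\theta$ is preserved after multiplication, so $\left(\frac{\gamma}{2}\boldsymbol{W}\left(\partial,\delta\right)+\frac{\alpha_{2}}{2}\right)^{-1}\theta$ is again real-valued and $\mathbb{B}\left(\varphi,\theta\right)=\left\langle\varphi,\left(\tfrac{\gamma}{2}\boldsymbol{W}\left(\partial,\delta\right)+\tfrac{\alpha_{2}}{2}\right)^{-1}\theta\right\rangle\in\mathbb{R}$. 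Positivity is then transparent from the momentum-space formula, since taking $\theta=\varphi$ gives
\[
\mathbb{B}\left(\varphi,\varphi\right)=\int_{\mathbb{Q}_{p}^{N}}\frac{\left\vert\widehat{\varphi}\left(\kappa\right)\right\vert^{2}}{\frac{\gamma}{2}A_{w_{\delta}}(\left\Vert\kappa\right\Vert_{p})+\frac{\alpha_{2}}{2}}d^{N}\kappa\geq0,
\]
as the integrand is non-negative.

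For continuity, the one point that genuinely uses $\alpha_{2}>0$, I would observe that since $A_{w_{\delta}}\geq0$ the multiplier satisfies $0<\left(\frac{\gamma}{2}A_{w_{\delta}}(\left\Vert\kappa\right\Vert_{p})+\frac{\alpha_{2}}{2}\right)^{-1}\leq\frac{2}{\alpha_{2}}$ uniformly in $\kappa$. Applying Cauchy-Schwarz to the integral above and then Plancherel yields
\[
\left\vert\mathbb{B}\left(\varphi,\theta\right)\right\vert\leq\frac{2}{\alpha_{2}}\int_{\mathbb{Q}_{p}^{N}}\left\vert\widehat{\varphi}\left(\kappa\right)\right\vert\left\vert\widehat{\theta}\left(\kappa\right)\right\vert d^{N}\kappa\leq\frac{2}{\alpha_{2}}\left\Vert\widehat{\varphi}\right\Vert_{L^{2}}\left\Vert\widehat{\theta}\right\Vert_{L^{2}}=\frac{2}{\alpha_{2}}\left\Vert\varphi\right\Vert_{L^{2}}\left\Vert\theta\right\Vert_{L^{2}}.
\]
Thus $\mathbb{B}$ is bounded with respect to the $L^{2}$-norm in each argument. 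To upgrade this to continuity for the topology of $\mathcal{D}_{\mathbb{R}}$, I would recall that if $\varphi_{j}\rightarrow0$ in $\mathcal{D}_{\mathbb{R}}$ then the $\varphi_{j}$ are supported in a common ball $B_{m_{0}}^{N}$ of finite Haar measure and converge uniformly to $0$, whence $\left\Vert\varphi_{j}\right\Vert_{L^{2}}\rightarrow0$; the estimate above then forces $\mathbb{B}\left(\varphi_{j},\theta\right)\rightarrow0$, and symmetrically in the second variable, giving separate continuity.

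The computations are routine; the only steps requiring care are the justification of the Parseval step, guaranteed by $\widehat{\varphi},\widehat{\theta}\in L^{2}$ together with the boundedness of the multiplier so that the momentum-space integrand is integrable, and the passage from the $L^{2}$-bound to $\mathcal{D}_{\mathbb{R}}$-continuity, which rests on the continuous embedding $\mathcal{D}_{\mathbb{R}}\hookrightarrow L_{\mathbb{R}}^{2}$. There is no genuine obstacle here: the strict positivity $\alpha_{2}>0$ is precisely what keeps the reciprocal symbol bounded and makes the inverse operator bounded on $L^{2}$, which simultaneously drives positivity and continuity.
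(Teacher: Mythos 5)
Your proposal is correct and follows essentially the same route as the paper: the momentum-space representation of $\mathbb{B}$, positivity read off from $\mathbb{B}(\varphi,\varphi)=\int_{\mathbb{Q}_{p}^{N}}\frac{|\widehat{\varphi}(\kappa)|^{2}}{\frac{\gamma}{2}A_{w_{\delta}}(\Vert\kappa\Vert_{p})+\frac{\alpha_{2}}{2}}\,d^{N}\kappa$, the Cauchy--Schwarz bound $|\mathbb{B}(\varphi,\theta)|\leq\frac{2}{\alpha_{2}}\Vert\varphi\Vert_{2}\Vert\theta\Vert_{2}$ driven by $\alpha_{2}>0$, and the reduction of convergence in $\mathcal{D}_{\mathbb{R}}$ to $L^{2}$-convergence via common support and uniform convergence. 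The one point the paper proves that you omit is strict positivity: $\mathbb{B}(\varphi,\varphi)=0$ forces $\widehat{\varphi}=0$ a.e., hence $\varphi=0$ since $\varphi$ is locally constant; this definiteness is used later when $\mathbb{B}$ is treated as an inner product to establish positive definiteness of the characteristic functional, so you should add that one line.
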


\begin{proof}
We first notice that for $\varphi\in\mathcal{D}_{\mathbb{R}}\left(
\mathbb{Q}_{p}^{N}\right)  $, we have%
\[
\mathbb{B}(\varphi,\varphi)=%
{\textstyle\int\limits_{\mathbb{Q}_{p}^{N}}}
\frac{\left\vert \widehat{\varphi}\left(  \kappa\right)  \right\vert ^{2}%
d^{N}\kappa}{\frac{\gamma}{2}A_{w_{\delta}}(\left\Vert \kappa\right\Vert
_{p})+\frac{\alpha_{2}}{2}}\geq0.
\]
Then $\mathbb{B}(\varphi,\varphi)=0$ implies that $\varphi$ is zero almost
everywhere. Since $\varphi$ is a locally constant function, $\mathbb{B}%
(\varphi,\varphi)=0$ if and only if $\varphi=0$.

For $\left(  \varphi,\theta\right)  \in\mathcal{D}_{\mathbb{R}}\left(
\mathbb{Q}_{p}^{N}\right)  \times\mathcal{D}_{\mathbb{R}}\left(
\mathbb{Q}_{p}^{N}\right)  $, the Cauchy-Schwarz inequality implies that%
\begin{equation}
\left\vert \mathbb{B}\left(  \varphi,\theta\right)  \right\vert \leq\left\Vert
\varphi\right\Vert _{2}\left(
{\textstyle\int\limits_{\mathbb{Q}_{p}^{N}}}
\frac{\left\vert \widehat{\theta}\left(  \kappa\right)  \right\vert ^{2}%
d^{N}\kappa}{\left(  \frac{\gamma}{2}A_{w_{\delta}}(\left\Vert \kappa
\right\Vert _{p})+\frac{\alpha_{2}}{2}\right)  ^{2}}\right)  ^{\frac{1}{2}%
}\leq\frac{2}{\alpha_{2}}\left\Vert \varphi\right\Vert _{2}\left\Vert
\theta\right\Vert _{2}. \label{Eq_19}%
\end{equation}
Now take two sequences in $\mathcal{D}_{\mathbb{R}}\left(  \mathbb{Q}_{p}%
^{N}\right)  $ such that $\varphi_{n}$ $\underrightarrow{\mathcal{D}%
_{\mathbb{R}}}$ $\varphi$ and $\theta_{n}$ $\underrightarrow{\mathcal{D}%
_{\mathbb{R}}}$ $\theta$ with $\varphi$, $\theta\in\mathcal{D}_{\mathbb{R}%
}\left(  \mathbb{Q}_{p}^{N}\right)  $. We recall that the convergence of these
sequences means that there is an positive integer $l$ such that $\varphi_{n}$,
$\varphi$, $\theta_{n}$, $\theta\in\mathcal{D}_{\mathbb{R}}^{l}$, and
\[
\varphi_{n}-\varphi\text{ }\underrightarrow{\text{unif.}}\text{ }0\text{ \ and
\ }\theta_{n}-\theta\text{ }\underrightarrow{\text{unif.}}\text{ }0\text{ in
}p^{-l}\mathbb{Z}_{p}^{N}.
\]
Then
\begin{align*}
\varphi_{n}\left(  x\right)  -\varphi\left(  x\right)   &  =%
{\textstyle\sum\limits_{\boldsymbol{i}\in G_{l}}}
\left(  \varphi_{n}\left(  \boldsymbol{i}\right)  -\varphi\left(
\boldsymbol{i}\right)  \right)  \Omega\left(  p^{l}\left\Vert x-\boldsymbol{i}%
\right\Vert _{p}\right)  \text{, and}\\
\theta_{n}\left(  x\right)  -\theta\left(  x\right)   &  =%
{\textstyle\sum\limits_{\boldsymbol{i}\in G_{l}}}
\left(  \theta_{n}\left(  \boldsymbol{i}\right)  -\theta\left(  \boldsymbol{i}%
\right)  \right)  \Omega\left(  p^{l}\left\Vert x-\boldsymbol{i}\right\Vert
_{p}\right)
\end{align*}
and by (\ref{Eq_19}),%
\begin{gather*}
\left\vert \mathbb{B}\left(  \varphi_{n}-\varphi,\theta_{n}-\theta\right)
\right\vert \leq\frac{2}{\alpha_{2}}\left\Vert \varphi_{n}-\varphi\right\Vert
_{2}\left\Vert \theta_{n}-\theta\right\Vert _{2}\\
\leq\frac{2p^{-lN}}{\alpha_{2}}\sqrt{%
{\textstyle\sum\limits_{\boldsymbol{i}\in G_{l}}}
\left\vert \varphi_{n}\left(  \boldsymbol{i}\right)  -\varphi\left(
\boldsymbol{i}\right)  \right\vert ^{2}}\sqrt{%
{\textstyle\sum\limits_{\boldsymbol{i}\in G_{l}}}
\left\vert \theta_{n}\left(  \boldsymbol{i}\right)  -\theta\left(
\boldsymbol{i}\right)  \right\vert ^{2}}\\
\leq\frac{2p^{-lN}\#G_{l}}{\alpha_{2}}\left(  \max_{\boldsymbol{i}\in G_{l}%
}\left\vert \varphi_{n}\left(  \boldsymbol{i}\right)  -\varphi\left(
\boldsymbol{i}\right)  \right\vert \right)  \left(  \max_{\boldsymbol{i}\in
G_{l}}\left\vert \theta_{n}\left(  \boldsymbol{i}\right)  -\theta\left(
\boldsymbol{i}\right)  \right\vert \right)  \rightarrow0
\end{gather*}
as $n\rightarrow\infty$. This fact implies the continuity of $\mathbb{B}$ in
$\mathcal{D}_{\mathbb{R}}\left(  \mathbb{Q}_{p}^{N}\right)  \times
\mathcal{D}_{\mathbb{R}}\left(  \mathbb{Q}_{p}^{N}\right)  $.
\end{proof}

In the next sections we only use the restriction of $\mathbb{B}$ to
$\mathcal{L}_{\mathbb{R}}\left(  \mathbb{Q}_{p}^{N}\right)  \times
\mathcal{L}_{\mathbb{R}}\left(  \mathbb{Q}_{p}^{N}\right)  $.

\begin{lemma}
\label{Lemma13}For $\varphi\in\mathcal{L}_{\mathbb{R}}^{l}\simeq
\mathcal{FL}_{\mathbb{R}}^{l}$,%
\[
\mathbb{B}_{l}(\varphi,\varphi):=\mathbb{B}(\varphi,\varphi)=\left[
\begin{array}
[c]{l}%
\left[  \widehat{\varphi_{1}}\left(  \boldsymbol{j}\right)  \right]
_{\boldsymbol{j}\in G_{l}^{+}}\\
\left[  \widehat{\varphi_{2}}\left(  \boldsymbol{j}\right)  \right]
_{\boldsymbol{j}\in G_{l}^{+}}%
\end{array}
\right]  ^{T}2p^{-lN}B^{-1}(l)\left[
\begin{array}
[c]{l}%
\left[  \widehat{\varphi_{1}}\left(  \boldsymbol{j}\right)  \right]
_{\boldsymbol{j}\in G_{l}^{+}}\\
\left[  \widehat{\varphi_{2}}\left(  \boldsymbol{j}\right)  \right]
_{\boldsymbol{j}\in G_{l}^{+}}%
\end{array}
\right]  ,
\]
where $B(l)$ is the matrix defined in (\ref{Eq_Matrix_B_1}).
\end{lemma}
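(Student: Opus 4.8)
The plan is to start from the integral representation of $\mathbb{B}$ obtained in the proof of Lemma \ref{Lemma12}, namely
\[
\mathbb{B}(\varphi,\varphi)=\int\limits_{\mathbb{Q}_{p}^{N}}\frac{\left\vert \widehat{\varphi}\left(  \kappa\right)  \right\vert ^{2}}{\frac{\gamma}{2}A_{w_{\delta}}(\left\Vert \kappa\right\Vert _{p})+\frac{\alpha_{2}}{2}}\,d^{N}\kappa,
\]
and to collapse this integral into a finite sum by exploiting that, for $\varphi\in\mathcal{L}_{\mathbb{R}}^{l}$, the Fourier transform $\widehat{\varphi}$ is locally constant and supported on $B_{l}^{N}=p^{-l}\mathbb{Z}_{p}^{N}$. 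First I would invoke (\ref{Eq_14}) to write $\left\vert \widehat{\varphi}\left(  \kappa\right)  \right\vert ^{2}=\sum_{\boldsymbol{i}\in G_{l}}\left\vert \widehat{\varphi}\left(  \boldsymbol{i}\right)  \right\vert ^{2}\Omega\left(  p^{l}\left\Vert \kappa-\boldsymbol{i}\right\Vert _{p}\right)$, so that the integral decomposes as a sum over the balls $\boldsymbol{i}+p^{l}\mathbb{Z}_{p}^{N}$, each of Haar measure $p^{-lN}$.

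The key step is the observation that $A_{w_{\delta}}(\left\Vert \kappa\right\Vert _{p})$ is constant on each ball $\boldsymbol{i}+p^{l}\mathbb{Z}_{p}^{N}$ with $\boldsymbol{i}\neq\boldsymbol{0}$: since any nonzero representative of $G_{l}=p^{-l}\mathbb{Z}_{p}^{N}/p^{l}\mathbb{Z}_{p}^{N}$ satisfies $\left\Vert \boldsymbol{i}\right\Vert _{p}>p^{-l}$, the ultrametric inequality gives $\left\Vert \kappa\right\Vert _{p}=\left\Vert \boldsymbol{i}\right\Vert _{p}$ for every $\kappa\in\boldsymbol{i}+p^{l}\mathbb{Z}_{p}^{N}$, whence $A_{w_{\delta}}(\left\Vert \kappa\right\Vert _{p})=A_{w_{\delta}}(\left\Vert \boldsymbol{i}\right\Vert _{p})$ there. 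Moreover, the term $\boldsymbol{i}=\boldsymbol{0}$ drops out because $\widehat{\varphi}\left(  \boldsymbol{0}\right)  =0$ by the definition of $\mathcal{L}_{\mathbb{R}}^{l}$; this is precisely what removes the awkward region near the origin where $A_{w_{\delta}}$ is not locally constant. Carrying out the integration over each ball then yields
\[
\mathbb{B}(\varphi,\varphi)=p^{-lN}\sum_{\boldsymbol{i}\in G_{l}\smallsetminus\left\{  \boldsymbol{0}\right\}  }\frac{\left\vert \widehat{\varphi}\left(  \boldsymbol{i}\right)  \right\vert ^{2}}{\frac{\gamma}{2}A_{w_{\delta}}(\left\Vert \boldsymbol{i}\right\Vert _{p})+\frac{\alpha_{2}}{2}}.
\]

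Finally, I would pass to the real coordinates using the splitting $G_{l}\smallsetminus\{\boldsymbol{0}\}=G_{l}^{+}\sqcup G_{l}^{-}$ and the reality relations $\widehat{\varphi}_{1}(-\boldsymbol{i})=\widehat{\varphi}_{1}(\boldsymbol{i})$, $\widehat{\varphi}_{2}(-\boldsymbol{i})=-\widehat{\varphi}_{2}(\boldsymbol{i})$ recorded in Remark \ref{Nota_Basis}. Since $A_{w_{\delta}}$ is radial, the denominators for $\boldsymbol{i}$ and $-\boldsymbol{i}$ agree, and pairing the two contributions together with $\left\vert \widehat{\varphi}\left(  \boldsymbol{i}\right)  \right\vert ^{2}=\widehat{\varphi}_{1}^{2}(\boldsymbol{i})+\widehat{\varphi}_{2}^{2}(\boldsymbol{i})$ produces a factor $2$ and gives
\[
\mathbb{B}(\varphi,\varphi)=2p^{-lN}\sum_{r\in\{1,2\}}\sum_{\boldsymbol{i}\in G_{l}^{+}}\frac{\widehat{\varphi}_{r}^{2}(\boldsymbol{i})}{\frac{\gamma}{2}A_{w_{\delta}}(\left\Vert \boldsymbol{i}\right\Vert _{p})+\frac{\alpha_{2}}{2}},
\]
which is exactly the quadratic form with matrix $2p^{-lN}B^{-1}(l)$ for the block-diagonal $B(l)=\mathrm{diag}(B^{(1)},B^{(2)})$ of (\ref{Eq_Matrix_B_1}), since $B_{\boldsymbol{i},\boldsymbol{i}}^{(r)}=\frac{\gamma}{2}A_{w_{\delta}}(\left\Vert \boldsymbol{i}\right\Vert _{p})+\frac{\alpha_{2}}{2}$. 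I expect the only genuinely delicate point to be the constancy-and-vanishing argument of the second paragraph; everything else is bookkeeping that mirrors the computation of $E_{0}^{\left(  l\right)  }$ in Lemma \ref{Lemma10}, with $B(l)$ replaced by its inverse.
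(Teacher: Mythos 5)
Your proposal is correct and follows essentially the same route as the paper: start from the integral representation $\mathbb{B}(\varphi,\varphi)=\int_{\mathbb{Q}_{p}^{N}}\left\vert \widehat{\varphi}(\kappa)\right\vert ^{2}\left(  \tfrac{\gamma}{2}A_{w_{\delta}}(\left\Vert \kappa\right\Vert _{p})+\tfrac{\alpha_{2}}{2}\right)  ^{-1}d^{N}\kappa$, apply (\ref{Eq_14}) to reduce it to the sum (\ref{Eq_18}), and then pass to the real coordinates indexed by $G_{l}^{+}$. You actually supply more detail than the paper (the local constancy of $A_{w_{\delta}}$ off the origin and the role of $\widehat{\varphi}(\boldsymbol{0})=0$), but the argument is the same one the paper compresses into its reference to the proof of Lemma \ref{Lemma10}.
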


\begin{proof}
The proof is similar to the proof of Lemma \ref{Lemma10}. We first notice
that
\[
\mathbb{B}(\varphi,\varphi)=%
{\textstyle\int\limits_{\mathbb{Q}_{p}^{N}}}
\text{ }\frac{\left\vert \widehat{\varphi}\left(  \kappa\right)  \right\vert
^{2}}{\frac{\gamma}{2}A_{w_{\delta}}(\left\Vert \kappa\right\Vert _{p}%
)+\frac{\alpha_{2}}{2}}d^{N}\kappa.
\]
By using (\ref{Eq_14}), we get that%
\begin{equation}
\mathbb{B}_{l}(\varphi,\varphi)=2p^{-lN}%
{\textstyle\sum\limits_{r\in\left\{  1,2\right\}  }}
\
{\textstyle\sum\limits_{\boldsymbol{j}\in G_{l}^{+}}}
\ \frac{\widehat{\varphi_{r}}^{2}\left(  \boldsymbol{j}\right)  }{\frac
{\gamma}{2}A_{w_{\delta}}(\left\Vert \boldsymbol{j}\right\Vert _{p}%
)+\frac{\alpha_{2}}{2}}. \label{Eq_18}%
\end{equation}
Now, the announced formula follows from (\ref{Eq_18}).
\end{proof}

Given a finite dimensional subspace $\mathcal{Y}\subset\mathcal{L}%
_{\mathbb{R}}(\mathbb{Q}_{p}^{N})$, we denote by $\mathbb{B}_{\mathcal{Y}}$
the restriction of $\mathbb{B}$ to $\mathcal{Y}\times\mathcal{Y}$. In the case
$\mathcal{Y}=\mathcal{L}_{\mathbb{R}}^{l}$, we use the notation $\mathbb{B}%
_{l}$, which agrees with the notation introduced in Lemma \ref{Lemma13}.

\begin{lemma}
\label{Lemma14}Given finite dimensional subspace $\mathcal{Y}\subset
\mathcal{L}_{\mathbb{R}}(\mathbb{Q}_{p}^{N})$, there is a positive integer
$l=l(\mathcal{Y})$ such that $\mathcal{Y}\subset\mathcal{L}_{\mathbb{R}}%
^{l}\simeq\mathcal{FL}_{\mathbb{R}}^{l}$, and there is a subset
$J=J(\mathcal{Y})\subset G_{l}^{+}$ such that
\begin{equation}
\mathbb{B}_{\mathcal{Y}}(\varphi,\varphi)=2p^{-lN}%
{\textstyle\sum\limits_{r\in\left\{  1,2\right\}  }}
\
{\textstyle\sum\limits_{\boldsymbol{j}\in J}}
\ \frac{\widehat{\varphi}_{r}^{2}\left(  \boldsymbol{j}\right)  }{\frac
{\gamma}{2}A_{w_{\delta}}(\left\Vert j\right\Vert _{p})+\frac{\alpha_{2}}{2}}.
\label{Eq_22}%
\end{equation}
Furthermore,
\begin{equation}
\mathbb{B}_{\mathcal{Y}}=\mathbb{B}_{l}\mid_{\left\{  \widehat{\varphi}%
_{1}\left(  \boldsymbol{j}\right)  =0,\widehat{\varphi}_{2}\left(
\boldsymbol{j}\right)  =0;\boldsymbol{j}\notin J\right\}  }. \label{Eq_23}%
\end{equation}

\end{lemma}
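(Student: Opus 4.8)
The plan is to obtain the level $l$ first and then to read off both displayed formulas directly from the diagonalization already established in Lemma \ref{Lemma13}. Since $\mathcal{Y}$ is finite dimensional, I would fix a basis $\varphi^{(1)},\dots,\varphi^{(d)}$ of $\mathcal{Y}$. Each $\varphi^{(k)}$ is a Bruhat--Schwartz function with vanishing integral, so $\varphi^{(k)}\in\mathcal{L}_{\mathbb{R}}^{l_{k}}$ for some $l_{k}$: indeed $\mathcal{L}_{\mathbb{R}}=\cup_{l\geq1}\mathcal{L}_{\mathbb{R}}^{l}$, because $\mathcal{D}_{\mathbb{R}}=\cup_{l}\mathcal{D}_{\mathbb{R}}^{l}$ and the Lizorkin condition $\int\varphi\,d^{N}x=0$ is inherited level by level (recall $\int\Omega(p^{l}\|x-\boldsymbol{i}\|_{p})\,d^{N}x=p^{-lN}$, so the integral condition is exactly $p^{-lN}\sum_{\boldsymbol{i}}\varphi(\boldsymbol{i})=0$). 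Setting $l=l(\mathcal{Y}):=\max_{k}l_{k}$ and using the continuous inclusions $\mathcal{L}_{\mathbb{R}}^{l_{k}}\hookrightarrow\mathcal{L}_{\mathbb{R}}^{l}$ (which come from $\mathcal{D}_{\mathbb{R}}^{m}\hookrightarrow\mathcal{D}_{\mathbb{R}}^{m+1}$ together with the preservation of the zero-integral condition), every basis vector, and hence all of $\mathcal{Y}$, lies in $\mathcal{L}_{\mathbb{R}}^{l}\simeq\mathcal{FL}_{\mathbb{R}}^{l}$.

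Next I would pass to the momenta side. The discrete Fourier transform makes $\mathcal{F}\colon\mathcal{L}_{\mathbb{R}}^{l}\to\mathcal{FL}_{\mathbb{R}}^{l}$ an $\mathbb{R}$-linear isomorphism, and by Remark \ref{Nota_Basis} the pairs $\bigl(\widehat{\varphi}_{1}(\boldsymbol{j}),\widehat{\varphi}_{2}(\boldsymbol{j})\bigr)_{\boldsymbol{j}\in G_{l}^{+}}$ form a linear coordinate system identifying $\mathcal{FL}_{\mathbb{R}}^{l}$ with $\mathbb{R}^{p^{2lN}-1}$. On this level Lemma \ref{Lemma13} already expresses $\mathbb{B}_{l}(\varphi,\varphi)$ as the diagonal sum (\ref{Eq_18}) over $r\in\{1,2\}$ and $\boldsymbol{j}\in G_{l}^{+}$; crucially, in these coordinates $\mathbb{B}_{l}$ carries no cross terms.

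Finally, $\mathbb{B}_{\mathcal{Y}}$ is by definition $\mathbb{B}_{l}$ restricted to $\mathcal{Y}\times\mathcal{Y}$. Writing $\mathcal{Y}$ in the coordinates of Remark \ref{Nota_Basis}, I would take $J=J(\mathcal{Y})\subset G_{l}^{+}$ to be the index set of the basis directions spanning $\mathcal{Y}$, so that the complementary coordinates $\widehat{\varphi}_{1}(\boldsymbol{j}),\widehat{\varphi}_{2}(\boldsymbol{j})$ with $\boldsymbol{j}\notin J$ are forced to vanish on $\mathcal{Y}$. Substituting these vanishing conditions into (\ref{Eq_18}) annihilates every term outside $J$ and leaves exactly the subsum (\ref{Eq_22}), which is the same statement as the coordinate-subspace identity (\ref{Eq_23}). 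The genuine content is concentrated in two places: the uniform choice of a single $l$ for the whole of $\mathcal{Y}$ (handled by the maximum over a finite basis), and the structural observation that $\mathcal{Y}$ is adapted to the momentum basis of Remark \ref{Nota_Basis}, which is what lets $J$ be read off and makes (\ref{Eq_23}) an honest equality of coordinate subspaces; once the diagonalization of Lemma \ref{Lemma13} is in force, restricting a diagonal form to such a coordinate subspace is automatic, so no further computation is needed.
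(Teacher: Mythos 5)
Your proposal is correct and follows essentially the same route as the paper: pick $l$ from the exhaustion $\mathcal{L}_{\mathbb{R}}=\cup_{l}\mathcal{L}_{\mathbb{R}}^{l}$ with nested levels, pass to the momentum coordinates of Remark \ref{Nota_Basis}, and read off $J$ so that the diagonal form (\ref{Eq_18}) of Lemma \ref{Lemma13} restricts term by term. Be aware that both you and the paper rely on the same unproved structural claim --- that $\mathcal{Y}$ is a coordinate subspace for the $\Omega_{\pm}$ basis (the paper asserts that $\left\{ \Omega_{\pm}\left( p^{l}\left\Vert x-\boldsymbol{i}\right\Vert _{p}\right) \right\} _{\boldsymbol{i}\in J}$ is a basis of $\mathcal{Y}$), which fails for a generic finite-dimensional subspace --- though you at least identify this adaptedness explicitly as the point where the real content sits.
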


\begin{proof}
Since $\mathcal{L}_{\mathbb{R}}=\cup_{l=1}^{\infty}\mathcal{L}_{\mathbb{R}%
}^{l}$ and $\mathcal{L}_{\mathbb{R}}^{l}\subset\mathcal{L}_{\mathbb{R}}^{m}$
for $m>l$, there is is a positive integer $l=l(\mathcal{Y})$ such that
$\mathcal{Y}\subset\mathcal{L}_{\mathbb{R}}^{l}$. Then there is a subset
$J\subset G_{l}^{+}$ such that $\left\{  \Omega_{\pm}\left(  p^{l}\left\Vert
x-\boldsymbol{i}\right\Vert _{p}\right)  \right\}  _{\boldsymbol{i}\in J}$ is
a basis of $\mathcal{Y}$, and so the formula (\ref{Eq_22}) holds. The
assertion (\ref{Eq_23}) follows from \ (\ref{Eq_18}).
\end{proof}

\begin{corollary}
\label{Cor1}The collection $\left\{  \mathbb{B}_{\mathcal{Y}};\mathcal{Y}%
\text{ finite dimensional subspace of\ }\mathcal{L}_{\mathbb{R}}\right\}  $ is
completely determined by the collection $\left\{  \mathbb{B}_{l}%
;l\in\mathbb{N}\smallsetminus\left\{  0\right\}  \right\}  $. In the sense
that given any $\mathbb{B}_{\mathcal{Y}}$ there is an integer $l$ and a subset
$J\subset G_{l}^{+}$, the case $J=\emptyset$ is included, such that
$\mathbb{B}_{\mathcal{Y}}=\mathbb{B}_{l}\mid_{\left\{  \widehat{\varphi}%
_{1}\left(  \boldsymbol{j}\right)  =0,\widehat{\varphi}_{2}\left(
\boldsymbol{j}\right)  =0;\boldsymbol{j}\notin J\right\}  }$.
\end{corollary}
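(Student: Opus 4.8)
The plan is to deduce the statement directly from Lemma \ref{Lemma14}, of which it is essentially a reformulation. The only content of the corollary is the assertion that every bilinear form $\mathbb{B}_{\mathcal{Y}}$ attached to a finite dimensional subspace $\mathcal{Y}\subset\mathcal{L}_{\mathbb{R}}(\mathbb{Q}_{p}^{N})$ is obtained from one of the forms $\mathbb{B}_{l}$ by a coordinate restriction, so once I have Lemma \ref{Lemma14} in hand there is nothing further to compute.

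First I would take an arbitrary finite dimensional $\mathcal{Y}\subset\mathcal{L}_{\mathbb{R}}(\mathbb{Q}_{p}^{N})$ and apply Lemma \ref{Lemma14} to produce the integer $l=l(\mathcal{Y})$ with $\mathcal{Y}\subset\mathcal{L}_{\mathbb{R}}^{l}$ and the index set $J=J(\mathcal{Y})\subset G_{l}^{+}$, together with the restriction identity (\ref{Eq_23}),
\[
\mathbb{B}_{\mathcal{Y}}=\mathbb{B}_{l}\mid_{\left\{  \widehat{\varphi}_{1}\left(  \boldsymbol{j}\right)  =0,\widehat{\varphi}_{2}\left(  \boldsymbol{j}\right)  =0;\boldsymbol{j}\notin J\right\}  }.
\]
This is exactly the displayed formula in the corollary, and it shows that $\mathbb{B}_{\mathcal{Y}}$ is recovered from the single form $\mathbb{B}_{l}$ by setting to zero the coordinates $\widehat{\varphi}_{1}\left(  \boldsymbol{j}\right)  $, $\widehat{\varphi}_{2}\left(  \boldsymbol{j}\right)  $ for $\boldsymbol{j}\in G_{l}^{+}\smallsetminus J$. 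Second, I would dispose of the degenerate case: when $\mathcal{Y}=\left\{  0\right\}  $ the basis $\left\{  \Omega_{\pm}\left(  p^{l}\left\Vert x-\boldsymbol{i}\right\Vert _{p}\right)  \right\}  _{\boldsymbol{i}\in J}$ furnished by Lemma \ref{Lemma14} is empty, so $J=\emptyset$ and both sides collapse to the zero form; this is precisely why the corollary explicitly permits $J=\emptyset$. Assembling these observations over all finite dimensional $\mathcal{Y}$ shows that the collection $\left\{  \mathbb{B}_{l};l\in\mathbb{N}\smallsetminus\left\{  0\right\}  \right\}  $ determines the collection $\left\{  \mathbb{B}_{\mathcal{Y}}\right\}  $.

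I do not anticipate any genuine obstacle, since Lemma \ref{Lemma14} already carries the entire analytic weight and the remaining issues are purely bookkeeping. The one point requiring care is that the pair $(l,J)$ is not canonical---passing to a larger $l^{\prime}>l$ with an appropriately enlarged system of representatives yields the same form---so the reconstruction must be phrased as the existence of \emph{some} admissible $(l,J)$ rather than a unique one, and the empty index set has to be admitted so that the trivial subspace is not excluded.
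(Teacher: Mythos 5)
Your proposal is correct and matches the paper's treatment: the corollary carries no separate proof in the text precisely because it is, as you say, an immediate reformulation of Lemma \ref{Lemma14} and its identity (\ref{Eq_23}), with the case $J=\emptyset$ admitted to cover the trivial subspace. Your remarks on the non-uniqueness of the pair $(l,J)$ and on the degenerate case are sensible bookkeeping but do not constitute a different argument.
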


\subsubsection{ Gaussian measures in the non-Archimedean framework}

We recall that $\mathcal{D}(\mathbb{Q}_{p}^{N})$ is a nuclear space, cf.
\cite[Section 4]{Bruhat}, and thus $\mathcal{L}_{\mathbb{R}}(\mathbb{Q}%
_{p}^{N})$ is a nuclear space, since any subspace of a nuclear space is also
nuclear, see e.g. \cite[Proposition 50.1]{Treves}.

The spaces
\[
\mathcal{L}_{\mathbb{R}}\left(  \mathbb{Q}_{p}^{N}\right)  \hookrightarrow
L_{\mathbb{R}}^{2}\left(  \mathbb{Q}_{p}^{N}\right)  \hookrightarrow
\mathcal{L}_{\mathbb{R}}^{\prime}\left(  \mathbb{Q}_{p}^{N}\right)
\]
form a Gel'fand triple, that is, $\mathcal{L}_{\mathbb{R}}\left(
\mathbb{Q}_{p}^{N}\right)  $ is a nuclear space which is densely and
continuously embedded in $L_{\mathbb{R}}^{2}$ (see \cite[Theorem 7.4.3]%
{A-K-S}) and $\left\Vert g\right\Vert _{2}^{2}=\left\langle g,g\right\rangle $
for $g\in\mathcal{L}_{\mathbb{R}}\left(  \mathbb{Q}_{p}^{N}\right)  $.

We denote by $\mathcal{B}:=\mathcal{B}(\mathcal{L}_{\mathbb{R}}^{\prime
}\left(  \mathbb{Q}_{p}^{N}\right)  )$ the $\sigma$-algebra generated by the
cylinder subsets of $\mathcal{L}_{\mathbb{R}}^{\prime}\left(  \mathbb{Q}%
_{p}^{N}\right)  $. The mapping
\[%
\begin{array}
[c]{cccc}%
\mathcal{C}: & \mathcal{L}_{\mathbb{R}}\left(  \mathbb{Q}_{p}^{N}\right)  &
\rightarrow & \mathbb{C}\\
& f & \rightarrow & e^{-\frac{1}{2}\mathbb{B}(f,f)}%
\end{array}
\]
defines a characteristic functional, i.e. $\mathcal{C}$ is continuous,
positive definite and $\mathcal{C}\left(  0\right)  =1$. The continuity
follows from Lemma \ref{Lemma12}. The fact that $\mathbb{B}$ defines an inner
product in $L^{2}\left(  \mathbb{Q}_{p}^{N}\right)  $ implies that the
functional $\mathcal{C}$ is positive definite.

\begin{definition}
\label{Def_white_noise_space}By the Bochner-Minlos theorem, see e.g.
\cite{Ber-Kon}, \cite{Hida et al}, \cite{Huang-Yang}, there exists a unique
probability measure $\mathbb{P}:=\mathbb{P}\left(  \delta,\gamma,\alpha
_{2}\right)  $ called \textit{the canonical Gaussian measure} on $\left(
\mathcal{L}_{\mathbb{R}}^{\prime}\left(  \mathbb{Q}_{p}^{N}\right)
,\mathcal{B}\right)  $, given by its characteristic functional as%
\begin{equation}%
{\textstyle\int\limits_{\mathcal{L}_{\mathbb{R}}^{\prime}\left(
\mathbb{Q}_{p}^{N}\right)  }}
e^{\sqrt{-1}\langle W,f\rangle}d\mathbb{P}(W)=e^{-\frac{1}{2}\mathbb{B}%
(f,f)}\text{,}\ \ f\in\mathcal{L}_{\mathbb{R}}\left(  \mathbb{Q}_{p}%
^{N}\right)  \text{.}\label{Eq_Char_func_2A}%
\end{equation}
We set $\left(  L_{\mathbb{R}}^{\rho}\right)  :=L^{\rho}\left(  \mathcal{L}%
_{\mathbb{R}}^{\prime}\left(  \mathbb{Q}_{p}^{N}\right)  ,\mathbb{P}\right)
$, $\rho\in\left[  1,\infty\right)  $, to denote the real vector space of
measu\-rable functions $\Psi:\mathcal{L}_{\mathbb{R}}^{\prime}\left(
\mathbb{Q}_{p}^{N}\right)  \rightarrow\mathbb{R}$ satisfying%
\[
\left\Vert \Psi\right\Vert _{\left(  L_{\mathbb{R}}^{\rho}\right)  }^{\rho}=%
{\textstyle\int\limits_{\mathcal{L}_{\mathbb{R}}^{\prime}\left(
\mathbb{Q}_{p}^{N}\right)  }}
\left\vert \Psi\left(  W\right)  \right\vert ^{\rho}d\mathbb{P}(W)<\infty
\text{.}%
\]

\end{definition}

\subsubsection{\label{Section_Further_Remarks}Further remarks on the cylinder
measure $\mathbb{P}$}

We set $\mathbb{L}\left(  \varphi\right)  =\exp\frac{-1}{2}\mathbb{B}%
(\varphi,\varphi)$, for $\varphi\in\mathcal{L}_{\mathbb{R}}$. The functional
$\mathbb{L}$ is positive definite, continuous and $\mathbb{L}(0)=1$. By taking
the restriction of \ $\mathbb{L}$ to a finite dimensional subspace
$\mathcal{Y}$ of $\mathcal{L}_{\mathbb{R}}$, one obtains a positive definite,
continuous functional $\mathbb{L}_{\mathcal{Y}}(\varphi)$ on $\mathcal{Y}$. By
the Bochner theorem, see e.g. \cite[Chapter II, Section 3.2]{Gelfand-Vilenkin}%
, this function is the Fourier transform of a probability measure
$\mathbb{P}_{_{\mathcal{Y}}}$ defined in the dual space $\mathcal{Y}^{\prime
}\subset\mathcal{L}_{\mathbb{R}}^{\prime}$ of $\mathcal{Y}$. By identifying
$\mathcal{Y}^{\prime}$ with $\mathcal{L}_{\mathbb{R}}^{\prime}\left(
\mathbb{Q}_{p}^{N}\right)  /\mathcal{Y}^{0}$, where $\mathcal{Y}^{0}%
$\ consists of all linear functionals $T$ which vanish on $\mathcal{Y}$, we
get that $\mathbb{P}_{_{\mathcal{Y}}}$ is a probability measure in the finite
dimensional space \ $\mathcal{L}_{\mathbb{R}}^{\prime}\left(  \mathbb{Q}%
_{p}^{N}\right)  /\mathcal{Y}^{0}$. The measure $\mathbb{P}$ is constructed
from the family of probability measures $\left\{  \mathbb{P}_{_{\mathcal{Y}}%
};\mathcal{Y\subset L}_{\mathbb{R}}\text{, finite dimensional space}\right\}
$. These measures are compatible and satisfy a suitable continuity condition,
and they give rise to a cylinder measure $\mathbb{P}$ in $\mathcal{L}%
_{\mathbb{R}}^{\prime}$. Since $\mathcal{L}_{\mathbb{R}}$ is a nuclear space,
this cylinder measure is countably additive.\ For further details about the
construction of the measure $\mathbb{P}$, the reader may consult \cite[Chapter
IV, Section 4.2, proof of Theorem 1]{Gelfand-Vilenkin}.

Now, by using the formula%
\[
\mathbb{L}\left(  \varphi\right)  =%
{\textstyle\int\limits_{\mathcal{L}_{\mathbb{R}}^{\prime}\left(
\mathbb{Q}_{p}^{N}\right)  /\mathcal{Y}^{0}}}
e^{\sqrt{-1}\left\langle W,\varphi\right\rangle }d\mathbb{P}_{_{\mathcal{Y}}%
}\left(  \varphi\right)  \text{ for }\varphi\in\mathcal{Y}\text{,}%
\]
see \cite[Chapter IV, Section 4.1]{Gelfand-Vilenkin}, and the fact that
$\mathbb{L}\left(  \varphi\right)  =\exp\frac{-1}{2}\mathbb{B}(\varphi
,\varphi)$, for $\varphi\in\mathcal{Y}$, one gets that $\mathbb{P}%
_{_{\mathcal{Y}}}$ is a Gaussian probability measure in $\mathcal{Y}$, with
mean zero, and correlation function $\mathbb{B}$, i.e. if $\mathcal{Y}$\ has
dimension $n$, then%
\[
\mathbb{P}_{_{\mathcal{Y}}}\left(  A\right)  =\frac{1}{\left(  2\pi\right)
^{\frac{n}{2}}}%
{\textstyle\int\limits_{A}}
e^{-\frac{1}{2}\mathbb{B}(\psi,\psi)}d\mathbb{\psi}\text{,}%
\]
where $d\mathbb{\psi}$ is the Lebesgue measure in $\mathcal{Y}$ corresponding
to the scalar product $\mathbb{B}$, and $A\subset\mathcal{Y}$ is a measurable
subset. In conclusion, the cylinder measure $\mathbb{P}$ is uniquely
determined by the family of Gaussian measures
\[
\left\{  \mathbb{P}_{_{\mathcal{Y}}};\mathcal{Y\subset L}_{\mathbb{R}}\text{,
finite dimensional space}\right\}  ,
\]
or equivalently by the sequence%
\begin{equation}
\left\{  \mathbb{B}_{_{\mathcal{Y}}};\mathcal{Y\subset L}_{\mathbb{R}}\text{,
finite dimensional space}\right\}  , \label{Eq_sequence_B}%
\end{equation}
where $\mathbb{B}_{_{\mathcal{Y}}}$ denotes the restriction of the scalar
product to $\mathbb{B}$ to $\mathcal{Y}$. This is a consequence of the fact
that any finite dimensional Gaussian measure, with mean zero, is completely
determined by its correlation matrix.

\subsection{Existence of a measure on $\mathcal{FL}_{\mathbb{R}}^{l}\left(
\mathbb{Q}_{p}^{N}\right)  $}

\begin{theorem}
\label{Theorem1}Assume that $\delta>N$, $\gamma>0$, $\alpha_{2}>0$. (i) The
cylinder probability measure $\mathbb{P}=\mathbb{P}\left(  \delta
,\gamma,\alpha_{2}\right)  $ is uniquely determined\ by the sequence
$\mathbb{P}_{l}=\mathbb{P}_{l}\left(  \delta,\gamma,\alpha_{2}\right)  $,
$l\in\mathbb{N}\smallsetminus\left\{  0\right\}  $, of Gaussian measures. (ii)
Let $f:\mathcal{F}\mathcal{L}_{\mathbb{R}}\left(  \mathbb{Q}_{p}^{N}\right)
\rightarrow\mathbb{R}$ be a continuous and bounded function. Then%
\[
\lim_{l\rightarrow\infty}%
{\textstyle\int\limits_{\mathcal{FL}_{\mathbb{R}}^{l}\left(  \mathbb{Q}%
_{p}^{N}\right)  }}
f\left(  \widehat{\varphi}\right)  d\mathbb{P}_{l}\left(  \widehat{\varphi
}\right)  =%
{\textstyle\int\limits_{\mathcal{FL}_{\mathbb{R}}\left(  \mathbb{Q}_{p}%
^{N}\right)  }}
f\left(  \widehat{\varphi}\right)  d\mathbb{P}\left(  \widehat{\varphi
}\right)  .
\]

\end{theorem}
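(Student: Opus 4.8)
The plan is to deduce (i) by chaining together the determinations already established in the text, and to obtain (ii) from the finite-dimensional consistency of \lemref{Lemma11} followed by a bounded-convergence passage to the limit.

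For (i), I would argue along the chain $\{\mathbb{P}_l\}\Longrightarrow\{\mathbb{B}_l\}\Longrightarrow\mathbb{B}\Longrightarrow\mathbb{P}$. By Definition \ref{Def_white_noise_space}, $\mathbb{P}$ is the Bochner--Minlos measure with characteristic functional $\mathcal{C}(f)=e^{-\frac12\mathbb{B}(f,f)}$, and a probability measure on $(\mathcal{L}_{\mathbb{R}}^{\prime},\mathcal{B})$ is uniquely determined by its characteristic functional; hence $\mathbb{P}$ is determined by $\mathbb{B}$. By \secref{Section_Further_Remarks} and Corollary \ref{Cor1}, the form $\mathbb{B}$ is in turn determined, through its finite-dimensional restrictions $\mathbb{B}_{\mathcal{Y}}$, by the single sequence $\{\mathbb{B}_l\}_{l\ge1}$. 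Finally \lemref{Lemma13} presents $\mathbb{B}_l$ as the quadratic form with diagonal matrix $2p^{-lN}B^{-1}(l)$, whereas $\mathbb{P}_l$ is by definition the centered Gaussian on $\mathcal{FL}_{\mathbb{R}}^l$ with precision matrix $2p^{-lN}B(l)$. A nondegenerate centered Gaussian measure is equivalent to its precision matrix, and $B(l)\mapsto B^{-1}(l)$ is a bijection of diagonal positive-definite matrices; therefore $\mathbb{P}_l$ and $\mathbb{B}_l$ carry the same information, and (i) follows.

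For (ii), fix $f:\mathcal{FL}_{\mathbb{R}}\to\mathbb{R}$ bounded and continuous, and for each $l$ let $\pi_l$ be the truncation onto the coordinates $\{\widehat{\varphi}_r(\boldsymbol{j}):\boldsymbol{j}\in G_l^{+},\ r=1,2\}$, viewed as a map into $\mathcal{FL}_{\mathbb{R}}^l\subset\mathcal{FL}_{\mathbb{R}}$. Then $f\circ\pi_l$ is a bounded continuous cylinder function, so \lemref{Lemma11} together with the consistency of $\{\mathbb{P}_l\}$ recalled before it yields the exact identity
\[
\int_{\mathcal{FL}_{\mathbb{R}}^l} f\,d\mathbb{P}_l=\int_{\mathcal{FL}_{\mathbb{R}}}\left(f\circ\pi_l\right)d\mathbb{P},
\]
with $\mathbb{P}$ the measure of part (i). It remains to send $l\to\infty$ on the right. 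Since $\left|f\circ\pi_l\right|\le\|f\|_{\infty}$ and $\mathbb{P}$ is a probability measure, by dominated convergence it suffices to show $f\circ\pi_l\to f$ pointwise $\mathbb{P}$-almost everywhere; this reduces, by continuity of $f$, to the $\mathbb{P}$-a.e.\ convergence of the truncations $\pi_l(W)$ to $W$, which is precisely the countable additivity of the cylinder measure $\mathbb{P}$ on the nuclear space (\secref{Section_Further_Remarks}). Combining the two gives $\lim_{l}\int\left(f\circ\pi_l\right)d\mathbb{P}=\int f\,d\mathbb{P}$, which is the asserted limit.

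The hard part is the identification that makes the displayed identity legitimate: one must verify that the finite-dimensional Gaussian $\mathbb{P}_l$ defined through the partition function (precision $2p^{-lN}B(l)$, normalization $\mathcal{N}_l$) is genuinely the $G_l^{+}$-marginal of the one Bochner--Minlos measure $\mathbb{P}$, whose finite-dimensional covariances are governed by $\mathbb{B}_l=2p^{-lN}B^{-1}(l)$. This is the point where the Haar-measure factors $p^{-lN}$ entering the dual pairing $\langle W,f\rangle$ and the constants $\mathcal{N}_l$ must conspire so that the covariance/precision duality between $B^{-1}(l)$ and $B(l)$ matches the two constructions; equivalently, it is the statement that the Kolmogorov limit of \lemref{Lemma11} coincides with the Bochner--Minlos measure. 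A secondary subtlety is interpretational: since $\mathbb{P}$ is carried by $\mathcal{L}_{\mathbb{R}}^{\prime}$ rather than by $\mathcal{FL}_{\mathbb{R}}$, the right-hand integral $\int_{\mathcal{FL}_{\mathbb{R}}}f\,d\mathbb{P}$ must be read as the integral against the cylinder measure, that is, as the limit of its finite-dimensional truncations, and one should check $f$ is measurable and bounded in that sense. Once these matchings are in place, part (i) and the reduction in (ii) are formal, leaving only the routine a.e.\ convergence $\pi_l(W)\to W$.
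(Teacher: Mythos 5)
Part (i) of your argument is the paper's own: the chain $\{\mathbb{P}_{l}\}\Rightarrow\{\mathbb{B}_{l}\}\Rightarrow\{\mathbb{B}_{\mathcal{Y}}\}\Rightarrow\mathbb{P}$ via Corollary \ref{Cor1}, Lemma \ref{Lemma13} and the Gel'fand--Vilenkin construction recalled in \secref{Section_Further_Remarks} is exactly how the paper proves uniqueness, and your remark that passing between the precision matrix $2p^{-lN}B(l)$ and its inverse loses no information is the same observation the paper makes when it identifies the covariance of $\mathbb{P}_{l}$ with $2p^{-lN}B^{-1}(l)=2p^{-lN}\mathbb{B}_{l}$ and concludes that the cylinder measure is the measure of Lemma \ref{Lemma11}.

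Part (ii) starts from the same identity (Lemma \ref{Lemma11}, displayed as (\ref{Eq_25}) in the paper) but then departs from the paper, and the departure introduces a gap. The paper keeps both sides of (\ref{Eq_25}) as integrals over the \emph{set} $\mathcal{FL}_{\mathbb{R}}^{l}$, notes that $\mathcal{FL}_{\mathbb{R}}^{l}$ and $\mathcal{FL}_{\mathbb{R}}=\cup_{l}\mathcal{FL}_{\mathbb{R}}^{l}$ belong to $\mathcal{B}(\mathcal{L}_{\mathbb{R}}^{\prime}(\mathbb{Q}_{p}^{N}))$, and lets $l\rightarrow\infty$ by dominated convergence applied to $1_{\mathcal{FL}_{\mathbb{R}}^{l}}f\rightarrow 1_{\mathcal{FL}_{\mathbb{R}}}f$; no convergence of truncations is ever invoked. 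You instead rewrite the right-hand side as $\int\left(f\circ\pi_{l}\right)d\mathbb{P}$ and reduce the limit to the claim that $\pi_{l}(W)\rightarrow W$ for $\mathbb{P}$-almost every $W$, attributing this to the countable additivity of the cylinder measure. That attribution is not correct: countable additivity (Minlos) says the cylinder measure extends to a $\sigma$-additive measure on $\mathcal{B}(\mathcal{L}_{\mathbb{R}}^{\prime}(\mathbb{Q}_{p}^{N}))$, and says nothing about almost-everywhere convergence of coordinate truncations. More seriously, $f$ is only given as defined and continuous on $\mathcal{FL}_{\mathbb{R}}$, a countable union of finite-dimensional subspaces; for $W$ outside this set neither $f(W)$ nor ``continuity of $f$ at $W$'' is available, so the pointwise statement $f(\pi_{l}(W))\rightarrow f(W)$ is not even meaningful as written, and your dominated-convergence step does not go through. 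The repair is simply to revert to the paper's reading: apply Lemma \ref{Lemma11} to $1_{\mathcal{FL}_{\mathbb{R}}^{l}}f$ verbatim and pass to the limit over the increasing measurable sets $\mathcal{FL}_{\mathbb{R}}^{l}$.
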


\begin{proof}
(i) We use the notation and results given in Section
\ref{Section_Further_Remarks}. By the Corollary \ref{Cor1}, the sequence
(\ref{Eq_sequence_B}) is completely determined by the sequence $\left\{
2p^{-lN}\mathbb{B}_{l};l\in\mathbb{N}\smallsetminus\left\{  0\right\}
\right\}  $, i.e. by the sequence $\left\{  \mathbb{P}_{l};l\in\mathbb{N}%
\smallsetminus\left\{  0\right\}  \right\}  $. Notice that the covariance
matrix of $\mathbb{P}_{l}$ is $2p^{-lN}B^{-1}(l)=2p^{-lN}\mathbb{B}_{l}$, cf.
Lemma \ref{Lemma13}. Then the cylinder measure $\mathbb{P}$ is exactly the
probability measure announced in Lemma \ref{Lemma11}.

(ii) By\ using the formula given in Lemma \ref{Lemma11}, for any bounded
continuous function $f$ supported in $\mathcal{FL}_{\mathbb{R}}^{l}$, we have%
\begin{equation}%
{\textstyle\int\limits_{\mathcal{FL}_{\mathbb{R}}^{l}}}
f\left(  \widehat{\varphi}\right)  d\mathbb{P}_{l}\left(  \widehat{\varphi
}\right)  =%
{\textstyle\int\limits_{\mathcal{FL}_{\mathbb{R}}^{l}}}
f\left(  \widehat{\varphi}\right)  d\mathbb{P}\left(  \widehat{\varphi
}\right)  . \label{Eq_25}%
\end{equation}
By the uniqueness of the probability space $\left(  X,\mathcal{F}%
;\mathbb{P}\right)  $ in Lemma \ref{Lemma11}, we can identify the $\sigma
$-algebra $\mathcal{F}$ with $\mathcal{B}(\mathcal{L}_{\mathbb{R}}^{\prime
}\left(  \mathbb{Q}_{p}^{N}\right)  )$, the $\sigma$-algebra generated by the
cylinder subsets of $\mathcal{L}_{\mathbb{R}}^{\prime}\left(  \mathbb{Q}%
_{p}^{N}\right)  $. Then $\mathcal{FL}_{\mathbb{R}}^{l}$ belongs to
$\mathcal{B}(\mathcal{L}_{\mathbb{R}}^{\prime}\left(  \mathbb{Q}_{p}%
^{N}\right)  )$, and $\mathcal{FL}_{\mathbb{R}}=\cup_{l}\mathcal{FL}%
_{\mathbb{R}}^{l}$ also belongs to $\mathcal{B}(\mathcal{L}_{\mathbb{R}%
}^{\prime}\left(  \mathbb{Q}_{p}^{N}\right)  )$. Now by taking the limit
$l\rightarrow\infty$ in (\ref{Eq_25}), we get the announced formula.
\end{proof}

\subsection{Further comments on Theorem \ref{Theorem1}}

By using the Gel'fand triple,
\[
\mathcal{D}_{\mathbb{R}}\left(  \mathbb{Q}_{p}^{N}\right)  \hookrightarrow
L_{\mathbb{R}}^{2}\left(  \mathbb{Q}_{p}^{N}\right)  \hookrightarrow
\mathcal{D}_{\mathbb{R}}^{\prime}\left(  \mathbb{Q}_{p}^{N}\right)  ,
\]
and the fact that $\mathcal{D}(\mathbb{Q}_{p}^{N})$ is a nuclear space, cf.
\cite[Section 4]{Bruhat}, it follows from Lemma \ref{Lemma12} that%
\[%
\begin{array}
[c]{cccc}%
\mathcal{C}: & \mathcal{D}_{\mathbb{R}}\left(  \mathbb{Q}_{p}^{N}\right)   &
\rightarrow & \mathbb{C}\\
& f & \rightarrow & e^{-\frac{1}{2}\mathbb{B}(f,f)}%
\end{array}
\]
defines a characteristic functional, then by the Bochner-Minlos theorem, there
exists a unique probability measure $\mathbb{S}:=\mathbb{S}\left(
\delta,\gamma,\alpha_{2}\right)  $ on $\left(  \mathcal{D}_{\mathbb{R}%
}^{\prime}\left(  \mathbb{Q}_{p}^{N}\right)  ,\mathcal{B}_{0}\right)  $ given
by%
\[%
{\textstyle\int\limits_{\mathcal{D}_{\mathbb{R}}^{\prime}\left(
\mathbb{Q}_{p}^{N}\right)  }}
e^{\sqrt{-1}\langle W,f\rangle}d\mathbb{S}(W)=e^{-\frac{1}{2}\mathbb{B}%
(f,f)}\text{,}\ \ f\in\mathcal{D}_{\mathbb{R}}\left(  \mathbb{Q}_{p}%
^{N}\right)  \text{,}%
\]
where $\mathcal{B}_{0}:=\mathcal{B}_{0}(\mathcal{D}_{\mathbb{R}}^{\prime
}\left(  \mathbb{Q}_{p}^{N}\right)  )$ the $\sigma$-algebra generated by the
cylinder subsets of $\mathcal{D}_{\mathbb{R}}^{\prime}\left(  \mathbb{Q}%
_{p}^{N}\right)  $. Therefore
\[
\mathbb{P=}\frac{1_{\mathcal{L}_{\mathbb{R}}^{\prime}\left(  \mathbb{Q}%
_{p}^{N}\right)  }\mathbb{S}}{\int_{\mathcal{L}_{\mathbb{R}}^{\prime}\left(
\mathbb{Q}_{p}^{N}\right)  }d\mathbb{S}}.
\]

\section{Partition functions and generating functionals}

In this section we introduce a family of $\mathcal{P}(\varphi)$-theories,
where%
\begin{equation}
\mathcal{P}(X)=a_{3}X^{3}+a_{4}X^{4}+\ldots+a_{2k}X^{2D}\in\mathbb{R}\left[
X\right]  \text{, with }D\geq2\text{, } \label{Poly_interactions}%
\end{equation}
satisfying $\mathcal{P}(\alpha)\geq0$ for any $\alpha\in\mathbb{R}$. Notice
that this implies that for $\varphi\in\mathcal{D}_{\mathbb{R}}^{l}$ and
$\alpha_{4}>0$, $\exp\left(  -\frac{\alpha_{4}}{2}\int\mathcal{P}%
(\varphi)d^{N}x\right)  \leq1$. This fact follows from Remark
\ref{Nota_discretization}. Each of these theories corresponds to a thermally
fluctuating field which is defined by means of a functional integral
representation of the partition function. All the thermodynamic quantities and
correlation functions of the system can be obtained by functional
differentiation from a generating functional as in the classical case, see
e.g. \cite{Kleinert et al}, \cite{Mussardo}. In this section, we provide
mathematical rigorous definitions of all these objects.

\subsection{Partition functions}

We assume that $\varphi\in\mathcal{L}_{\mathbb{R}}\left(  \mathbb{Q}_{p}%
^{N}\right)  $ represents a field that performs thermal fluctuations. We also
assume that in the normal phase the expectation value of the field $\varphi$
is zero. Then the fluctuations take place around zero. The size of these
fluctuations is controlled by the energy functional:%
\[
E(\varphi):=E_{0}(\varphi)+E_{\text{int}}(\varphi),
\]
where the first terms is defined in (\ref{Energy_Functioal_E_0}), and the
second term is%
\[
E_{\text{int}}(\varphi):=\frac{\alpha_{4}}{4}%
{\displaystyle\int\limits_{\mathbb{Q}_{p}^{N}}}
\mathcal{P}\left(  \varphi\left(  x\right)  \right)  d^{N}x\text{, \ }%
\alpha_{4}\geq0\text{,}%
\]
corresponds to the interaction energy.

All the thermodynamic properties of the system attached to the field $\varphi$
are described by the partition function of the fluctuating field, which is
given classically by a functional integral%
\[
\mathcal{Z}^{\text{phys}}=%
{\displaystyle\int}
D\left(  \varphi\right)  e^{-\frac{E(\varphi)}{K_{B}T}},
\]
where $D\left(  \varphi\right)  $ is a `spurious measure' on the space of
fields, $K_{B}$ is the Boltzmann's constant and $T$ is the temperature. We use
the normalization $K_{B}T=1$. When the coupling constant $\alpha_{4}=0$,
$\mathcal{Z}^{\text{phys}}$ reduced to the free-field partition function%
\[
\mathcal{Z}_{0}^{\text{phys}}=%
{\displaystyle\int}
D\left(  \varphi\right)  e^{-E_{0}(\varphi)}.
\]
It is more convenient to use a normalize partition function $\frac
{\mathcal{Z}^{\text{phys}}}{\mathcal{Z}_{0}^{\text{phys}}}$.

\begin{definition}
Assume that $\delta>N$, and $\gamma$, $\alpha_{2}>0$. The free-partition
function is defined as%
\[
\mathcal{Z}_{0}=\mathcal{Z}_{0}(\delta,\gamma,\alpha_{2})=%
{\displaystyle\int\limits_{\mathcal{L}_{\mathbb{R}}\left(  \mathbb{Q}_{p}%
^{N}\right)  }}
d\mathbb{P}\left(  \varphi\right)  .
\]
The discrete free-partition function is defined as%
\[
\mathcal{Z}_{0}^{\left(  l\right)  }=\mathcal{Z}_{0}^{\left(  l\right)
}(\delta,\gamma,\alpha_{2})=%
{\displaystyle\int\limits_{\mathcal{L}_{\mathbb{R}}^{l}\left(  \mathbb{Q}%
_{p}^{N}\right)  }}
d\mathbb{P}_{l}\left(  \varphi\right)
\]
for $l\in\mathbb{N}\smallsetminus\left\{  0\right\}  $.
\end{definition}

By Lemma \ref{Lemma11}, $\lim_{l\rightarrow\infty}\mathcal{Z}_{0}^{\left(
l\right)  }=\mathcal{Z}_{0}$. Notice that the term $e^{-E_{0}(\varphi)}$ is
used to construct the measure $\mathbb{P}\left(  \varphi\right)  $.

\begin{definition}
Assume that $\delta>N$, and $\gamma$, $\alpha_{2}$, $\alpha_{4}>0$. The
partition function is defined as
\[
\mathcal{Z}=\mathcal{Z}(\delta,\gamma,\alpha_{2},\alpha_{4})=%
{\displaystyle\int\limits_{\mathcal{L}_{\mathbb{R}}\left(  \mathbb{Q}_{p}%
^{N}\right)  }}
e^{-E_{\text{int}}\left(  \varphi\right)  }d\mathbb{P}\left(  \varphi\right)
.
\]
The discrete partition functions are defined as
\[
\mathcal{Z}^{\left(  l\right)  }=\mathcal{Z}^{\left(  l\right)  }%
(\delta,\gamma,\alpha_{2},\alpha_{4})=%
{\displaystyle\int\limits_{\mathcal{L}_{\mathbb{R}}^{l}\left(  \mathbb{Q}%
_{p}^{N}\right)  }}
e^{-E_{\text{int}}\left(  \varphi\right)  }d\mathbb{P}_{l}\left(
\varphi\right)  ,
\]
for $l\in\mathbb{N}\smallsetminus\left\{  0\right\}  $.
\end{definition}

Notice that $e^{-E_{\text{int}}\left(  \varphi\right)  }$ is bounded and
(sequentially) continuous in $\mathcal{L}_{\mathbb{R}}$, and consequently in
$\mathcal{L}_{\mathbb{R}}^{l}$ for any $l$. Indeed, take $\varphi_{n}$
$\underrightarrow{\mathcal{D}_{\mathbb{R}}}\ 0$, $\mathcal{L}_{\mathbb{R}}$ is
endowed with the topology of $\mathcal{D}_{\mathbb{R}}$. Then there is $l$
such that $\varphi_{n}\in\mathcal{L}_{\mathbb{R}}^{l}$ for every $n$, and
$\varphi_{n}$ $\underrightarrow{\text{unif.}}\ 0$, i.e.
\[
\varphi_{n}(x)=%
{\textstyle\sum\limits_{\boldsymbol{i}\in G_{l}}}
\varphi^{\left(  n\right)  }\left(  \boldsymbol{i}\right)  \Omega\left(
p^{l}\left\Vert x-\boldsymbol{i}\right\Vert _{p}\right)  \text{, and }%
\max_{\boldsymbol{i}\in G_{l}}\left\{  \varphi^{\left(  n\right)  }\left(
\boldsymbol{i}\right)  \right\}  \rightarrow0\text{ as }n\rightarrow\infty.
\]
Which implies that $E_{\text{int}}\left(  \varphi_{n}\right)  \rightarrow0$.
Again by Lemma \ref{Lemma11}, $\lim_{l\rightarrow\infty}\mathcal{Z}^{\left(
l\right)  }=\mathcal{Z}$.

\subsection{Correlation functions}

From a mathematical perspective a $\mathcal{P}\left(  \varphi\right)  $-theory
is given by a cylinder probability measure of the form%
\begin{equation}
\frac{1_{\mathcal{L}_{\mathbb{R}}}\left(  \varphi\right)  e^{-E_{\text{int}%
}\left(  \varphi\right)  }d\mathbb{P}}{%
{\displaystyle\int\nolimits_{\mathcal{L}_{\mathbb{R}}\left(  \mathbb{Q}%
_{p}^{N}\right)  }}
e^{-E_{\text{int}}\left(  \varphi\right)  }d\mathbb{P}}=\frac{1_{\mathcal{L}%
_{\mathbb{R}}}\left(  \varphi\right)  e^{-E_{\text{int}}\left(  \varphi
\right)  }d\mathbb{P}}{\mathcal{Z}} \label{Eq_Measure}%
\end{equation}
in the space of fields $\mathcal{L}_{\mathbb{R}}\left(  \mathbb{Q}_{p}%
^{N}\right)  $. It is important to mention that we do not require the Wick
regularization operation in $e^{-E_{\text{int}}\left(  \varphi\right)  }$
because we are restricting the fields to be test functions.

\begin{definition}
\label{Definition_G_m}The $m$-point correlation functions of a field
$\varphi\in\mathcal{L}_{\mathbb{R}}\left(  \mathbb{Q}_{p}^{N}\right)  $ are
\ defined as%
\[
G^{\left(  m\right)  }\left(  x_{1},\ldots,x_{m}\right)  =\frac{1}%
{\mathcal{Z}}%
{\displaystyle\int\limits_{\mathcal{L}_{\mathbb{R}}\left(  \mathbb{Q}_{p}%
^{N}\right)  }}
\left(
{\displaystyle\prod\limits_{i=1}^{m}}
\varphi\left(  x_{i}\right)  \right)  e^{-E_{\text{int}}\left(  \varphi
\right)  }d\mathbb{P}.
\]
The discrete $m$-point correlation functions of a field $\varphi\in
\mathcal{L}_{\mathbb{R}}^{l}\left(  \mathbb{Q}_{p}^{N}\right)  $ are defined
as%
\[
G_{l}^{\left(  m\right)  }\left(  x_{1},\ldots,x_{m}\right)  =\frac
{1}{\mathcal{Z}^{\left(  l\right)  }}%
{\displaystyle\int\limits_{\mathcal{L}_{\mathbb{R}}^{l}\left(  \mathbb{Q}%
_{p}^{N}\right)  }}
\left(
{\displaystyle\prod\limits_{i=1}^{m}}
\varphi\left(  x_{i}\right)  \right)  e^{-E_{\text{int}}\left(  \varphi
\right)  }d\mathbb{P}_{l},
\]
for $l\in\mathbb{N}\smallsetminus\left\{  0\right\}  $.
\end{definition}

\begin{lemma}
\label{Lemma15}The discrete $m$-point correlation functions $G_{l}^{\left(
m\right)  }\left(  x_{1},\ldots,x_{m}\right)  $ of a field $\varphi
\in\mathcal{L}_{\mathbb{R}}\left(  \mathbb{Q}_{p}^{N}\right)  $ are test
functions in $x_{1},\ldots,x_{m}$.
\end{lemma}

\begin{proof}
There is an positive integer $l=l(\varphi)$ such that $\varphi\in
\mathcal{L}_{\mathbb{R}}^{l}$ and $x_{1},\ldots,x_{m}\in B_{l}^{N}$. By using
that%
\begin{equation}
\varphi\left(  x_{i}\right)  =%
{\textstyle\sum\limits_{\boldsymbol{j}\in G_{l}}}
\varphi\left(  \boldsymbol{j}\right)  \Omega\left(  p^{l}\left\Vert
x_{i}-\boldsymbol{j}\right\Vert _{p}\right)  , \label{Eq_phi_expansion}%
\end{equation}
one gets that $%
{\textstyle\prod\nolimits_{i=1}^{m}}
\varphi\left(  x_{i}\right)  $ is a finite sum of terms of the form%
\[%
{\displaystyle\prod\limits_{k=1}^{m}}
\varphi\left(  \boldsymbol{j}_{k}\right)  \Omega\left(  p^{l}\left\Vert
x_{k}-\boldsymbol{j}_{k}\right\Vert _{p}\right)  =:F(\varphi\left(
\boldsymbol{j}_{1}\right)  ,\ldots,\varphi\left(  \boldsymbol{j}_{m}\right)
)\Theta_{l}\left(  x_{1},\ldots,x_{m}\right)  ,
\]
where $F(\varphi\left(  \boldsymbol{j}_{1}\right)  ,\ldots,\varphi\left(
\boldsymbol{j}_{m}\right)  )$ is a polynomial function defined in
$\mathcal{L}_{\mathbb{R}}^{l}$, $\boldsymbol{j}_{k}\in G_{l}$, and $\Theta
_{l}\left(  x\right)  =\Theta_{l}\left(  x_{1},\ldots,x_{m}\right)  $ is the
characteristic function of the polydisc $B_{-l}^{N}(\boldsymbol{j}_{1}%
)\times\cdots\times B_{-l}^{N}(\boldsymbol{j}_{m})$. Now, by using that
$\exp\left(  -E_{\text{int}}\left(  \varphi\right)  \right)  =\exp
(-\frac{\alpha_{4}}{4}p^{-lN}\sum_{k=3}^{2D}\sum_{\boldsymbol{j}\in G_{l}%
}a_{k}\varphi^{k}\left(  \boldsymbol{j}\right)  )$, the correlation function
$G_{l}^{\left(  m\right)  }\left(  x_{1},\ldots,x_{m}\right)  $ is a finite
sum of test functions of the form%
\begin{align*}
&  \Theta_{l}\left(  x\right)
{\textstyle\int\limits_{\mathcal{L}_{\mathbb{R}}^{l}}}
\left\{  F(\varphi\left(  \boldsymbol{j}_{1}\right)  ,\ldots,\varphi\left(
\boldsymbol{j}_{m}\right)  )\exp(-\frac{\alpha_{4}}{4}p^{-lN}\sum_{k=3}%
^{2D}\sum_{\boldsymbol{j}\in G_{l}}a_{k}\varphi^{k}\left(  \boldsymbol{j}%
\right)  )\right\}  d\mathbb{P}_{l}\mathbb{=}\\
&  \Theta_{l}\left(  x\right)
{\textstyle\int\limits_{\mathcal{L}_{\mathbb{R}}^{l}}}
\left\{  F(\varphi\left(  \boldsymbol{j}_{1}\right)  ,\ldots,\varphi\left(
\boldsymbol{j}_{m}\right)  )\exp(-\frac{\alpha_{4}}{4}p^{-lN}\sum_{k=3}%
^{2D}\sum_{\boldsymbol{j}\in G_{l}}a_{k}\varphi^{k}\left(  \boldsymbol{j}%
\right)  )\right\}  d\mathbb{P},
\end{align*}
where the convergence of the integrals is guaranteed by the fact that the
integrands are bounded functions, cf. Lemma \ref{Lemma11}.
\end{proof}

Notice that the pointwise limit $G^{\left(  m\right)  }\left(  x_{1}%
,\ldots,x_{m}\right)  =\lim_{l\rightarrow\infty}G_{l}^{\left(  m\right)
}\left(  x_{1},\ldots,x_{m}\right)  $ is not a test function due to the fact
that $\Theta_{l}\left(  x\right)  $ has an arbitrary small exponent of local
constancy when $l$ tends to infinity.

\subsection{Generating functionals}

We now introduce a current $J(x)\in\mathcal{L}_{\mathbb{R}}\left(
\mathbb{Q}_{p}^{N}\right)  $ and add to the energy functional $E(\varphi)$ a
linear interaction \ energy of this current with the field $\varphi\left(
x\right)  $,%
\[
E_{\text{source}}(\varphi,J):=-%
{\textstyle\int\limits_{\mathbb{Q}_{p}^{N}}}
\varphi\left(  x\right)  J(x)d^{N}x\text{,}%
\]
in this way we get a new energy functional%
\[
E(\varphi,J):=E(\varphi)+E_{\text{source}}(\varphi,J).
\]
Notice that $E_{\text{source}}(\varphi,J)=-\left\langle \varphi,J\right\rangle
$, where $\left\langle \cdot,\cdot\right\rangle $ denotes the scalar product
of $L^{2}(\mathbb{Q}_{p}^{N})$. This scalar product extends to the pairing
between \ $\mathcal{L}_{\mathbb{R}}^{\prime}\left(  \mathbb{Q}_{p}^{N}\right)
$ and $\mathcal{L}_{\mathbb{R}}\left(  \mathbb{Q}_{p}^{N}\right)  $

\begin{definition}
\label{Definition_Z_J}Assume that $\delta>N$, and $\gamma$, $\alpha_{2}$,
$\alpha_{4}>0$. The partition function corresponding to the energy functional
$E(\varphi,J)$ is defined as
\[
\mathcal{Z}(J;\delta,\gamma,\alpha_{2},\alpha_{4})=\frac{1}{\mathcal{Z}_{0}}%
{\displaystyle\int\limits_{\mathcal{L}_{\mathbb{R}}\left(  \mathbb{Q}_{p}%
^{N}\right)  }}
e^{-E_{\text{int}}\left(  \varphi\right)  +\left\langle \varphi,J\right\rangle
}\text{ }d\mathbb{P}\text{,}%
\]
and the discrete versions
\[
\mathcal{Z}^{(l)}(J;\delta,\gamma,\alpha_{2},\alpha_{4})=\frac{1}%
{\mathcal{Z}_{0}^{\left(  l\right)  }}%
{\displaystyle\int\limits_{\mathcal{L}_{\mathbb{R}}^{l}\left(  \mathbb{Q}%
_{p}^{N}\right)  }}
e^{-E_{\text{int}}\left(  \varphi\right)  +\left\langle \varphi,J\right\rangle
}\text{ }d\mathbb{P}_{l}\text{,}%
\]
for $l\in\mathbb{N}\smallsetminus\left\{  0\right\}  $.
\end{definition}

For the sake of simplicity we will the notation $\mathcal{Z}(J)=\mathcal{Z}%
(J;\delta,\gamma,\alpha_{2},\alpha_{4})$, $\mathcal{Z}^{\left(  l\right)
}(J)=\mathcal{Z}^{(l)}(J;\delta,\gamma,\alpha_{2},\alpha_{4})$.

\begin{remark}
\label{Nota_3}In this section, we need some functionals from the space
\[
\left(  L_{\mathbb{R}}^{\rho}\right)  =L^{\rho}\left(  \mathcal{L}%
_{\mathbb{R}}^{\prime}\left(  \mathbb{Q}_{p}^{N}\right)  ,d\mathbb{P}\right)
\text{, \ }\rho\in\left[  1,\infty\right)  ,
\]
see Definition \ref{Def_white_noise_space}. Let $F\left(  X_{1},\ldots
,X_{n}\right)  $ be a real-valued polynomial, and $\xi=\left(  \xi_{1}%
,\ldots,\xi_{n}\right)  $ , with $\xi_{i}\in\mathcal{L}_{\mathbb{R}}\left(
\mathbb{Q}_{p}^{N}\right)  $ for $i=1,\ldots,n$, then the functional%
\[
F_{\xi}(W):=F\left(  \left\langle W,\xi_{1}\right\rangle ,\ldots,\left\langle
W,\xi_{n}\right\rangle \right)  \text{, }W\in\mathcal{L}_{\mathbb{R}}^{\prime
}\left(  \mathbb{Q}_{p}^{N}\right)  ,
\]
belongs to $\left(  L_{\mathbb{R}}^{\rho}\right)  $, $\rho\in\left[
1,\infty\right)  $, see e.g. \cite[Proposition 1.6]{Hida et al}. The
functional $\exp C\left\langle \cdot,\phi\right\rangle $, for $C\in\mathbb{R}%
$, $\phi\in\mathcal{L}_{\mathbb{R}}$ belongs to $\left(  L_{\mathbb{R}}^{\rho
}\right)  $, $\rho\in\left[  1,\infty\right)  $, see e.g. \cite[Proposition
1.7]{Hida et al}. The $\mathbb{R}$-algebra $\mathcal{A}$ generated by the
functionals $F_{\xi}$, $\exp C\left\langle \cdot,\phi\right\rangle $ is dense
in $\left(  L_{\mathbb{R}}^{\rho}\right)  $, $\rho\in\left[  1,\infty\right)
$, see e.g. \cite[Theorem 1.9]{Hida et al}.
\end{remark}

\begin{lemma}
\label{Lemma16}Given $\varphi\in\mathcal{L}_{\mathbb{R}}\left(  \mathbb{Q}%
_{p}^{N}\right)  $, $m\geq1$, and $e_{i}\geq0$ for $i=1,\ldots,m$,\ we define%
\[
\mathcal{I}(\varphi)=%
{\displaystyle\int\limits_{\left(  \mathbb{Q}_{p}^{N}\right)  ^{m}}}
\left(
{\displaystyle\prod\limits_{i=1}^{m}}
\varphi^{e_{i}}\left(  x_{i}\right)  \right)
{\displaystyle\prod\limits_{i=1}^{m}}
d^{N}x_{i}.
\]
Then $\mathcal{I}\in\mathcal{A}$.
\end{lemma}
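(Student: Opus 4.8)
The plan is to use two structural features in turn: the integrand factorizes across the $m$ variables, and each resulting one-variable integral is a polynomial in linear pairings of $\varphi$ with fixed Lizorkin test functions. First I would note that $\prod_{i=1}^{m}\varphi^{e_i}(x_i)$ is a test function on $\left(\mathbb{Q}_p^N\right)^m$ (a product of test functions, hence bounded with compact support), so Fubini applies and
\[
\mathcal{I}(\varphi)=\prod_{i=1}^{m}\ \int_{\mathbb{Q}_p^N}\varphi^{e_i}(x)\,d^Nx .
\]
Since $\mathcal{A}$ is an $\mathbb{R}$-algebra, it then suffices to show that for each fixed $e\geq 1$ the functional $\varphi\mapsto\int_{\mathbb{Q}_p^N}\varphi^{e}(x)\,d^Nx$ lies in $\mathcal{A}$; the finite product of such functionals stays in $\mathcal{A}$. (For $e=1$ the Lizorkin condition makes this integral vanish, and the factors with $e_i=0$ must be read as absent, since $\int_{\mathbb{Q}_p^N}1\,d^Nx$ diverges; so the honest convergence hypothesis is $e_i\geq 1$.)

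The key step is to express the nodal values of $\varphi$ as pairings against elements of $\mathcal{L}_{\mathbb{R}}$. I would fix $l$ with $\varphi\in\mathcal{L}_{\mathbb{R}}^{l}$ (possible because $\mathcal{L}_{\mathbb{R}}=\cup_l\mathcal{L}_{\mathbb{R}}^{l}$) and invoke Remark \ref{Nota_discretization}, which gives $\int_{\mathbb{Q}_p^N}\varphi^{e}(x)\,d^Nx=p^{-lN}\sum_{\boldsymbol{i}\in G_l}\varphi^{e}(\boldsymbol{i})$. Writing $\Omega_{\boldsymbol{i}}(x):=\Omega\left(p^l\|x-\boldsymbol{i}\|_p\right)$, I would introduce the centered weights
\[
\xi_{\boldsymbol{i}}:=\Omega_{\boldsymbol{i}}-\frac{1}{\#G_l}\sum_{\boldsymbol{j}\in G_l}\Omega_{\boldsymbol{j}} ,
\]
which satisfy $\int\xi_{\boldsymbol{i}}\,d^Nx=0$, hence $\xi_{\boldsymbol{i}}\in\mathcal{L}_{\mathbb{R}}^{l}\subset\mathcal{L}_{\mathbb{R}}$. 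Using $\langle\varphi,\Omega_{\boldsymbol{i}}\rangle=p^{-lN}\varphi(\boldsymbol{i})$ together with the defining relation $\sum_{\boldsymbol{j}\in G_l}\varphi(\boldsymbol{j})=0$ for $\varphi\in\mathcal{L}_{\mathbb{R}}^{l}$, one finds $\langle\varphi,\xi_{\boldsymbol{i}}\rangle=p^{-lN}\varphi(\boldsymbol{i})$, i.e. $\varphi(\boldsymbol{i})=p^{lN}\langle\varphi,\xi_{\boldsymbol{i}}\rangle$. Substituting yields
\[
\int_{\mathbb{Q}_p^N}\varphi^{e}(x)\,d^Nx=p^{(e-1)lN}\sum_{\boldsymbol{i}\in G_l}\langle\varphi,\xi_{\boldsymbol{i}}\rangle^{\,e}=F\big(\langle\varphi,\xi_{\boldsymbol{i}}\rangle;\boldsymbol{i}\in G_l\big),
\]
with $F(X_{\boldsymbol{i}})=p^{(e-1)lN}\sum_{\boldsymbol{i}}X_{\boldsymbol{i}}^{\,e}$ a polynomial. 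This is exactly a functional of type $F_\xi$ in the sense of Remark \ref{Nota_3}, so it lies in $\mathcal{A}$, and the product over $i$ gives $\mathcal{I}\in\mathcal{A}$.

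The hard part — really the only genuine obstacle — is ensuring the arguments feeding into $F_\xi$ belong to the Lizorkin space, as Remark \ref{Nota_3} requires $\xi_{\boldsymbol{i}}\in\mathcal{L}_{\mathbb{R}}$. The naive weights $\Omega_{\boldsymbol{i}}$ reproducing the nodal values do \emph{not} qualify, since $\int\Omega_{\boldsymbol{i}}\,d^Nx=p^{-lN}\neq 0$. The centering repairs this, and it is legitimate precisely because $\varphi$ itself satisfies the mean-zero (Lizorkin) constraint, so subtracting the average $\frac{1}{\#G_l}\sum_{\boldsymbol{j}}\Omega_{\boldsymbol{j}}$ leaves the pairing unchanged. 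A secondary point to flag is that $l$ and the family $\{\xi_{\boldsymbol{i}}\}$ depend on the support/constancy level of $\varphi$: the construction produces, for $\varphi$ of level $l$, an element of $\mathcal{A}$ agreeing with $\mathcal{I}$ on $\mathcal{L}_{\mathbb{R}}^{l}$, which is all that is needed in the sequel, where the relevant integrals are evaluated on $\mathcal{L}_{\mathbb{R}}^{l}$ and the limits $l\to\infty$ are taken termwise.
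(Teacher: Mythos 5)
Your proof is correct and follows essentially the same route as the paper's: fix $l$ with $\varphi\in\mathcal{L}_{\mathbb{R}}^{l}$, expand in the basis $\{\Omega(p^{l}\left\Vert x-\boldsymbol{i}\right\Vert _{p})\}_{\boldsymbol{i}\in G_{l}}$, express the nodal values $\varphi(\boldsymbol{i})$ as linear pairings, and recognize the integral as a polynomial in those pairings, hence an element of $\mathcal{A}$ by Remark \ref{Nota_3}. Your centering $\xi_{\boldsymbol{i}}=\Omega_{\boldsymbol{i}}-\frac{1}{\#G_{l}}\sum_{\boldsymbol{j}}\Omega_{\boldsymbol{j}}$ is in fact a small improvement over the paper's argument, which pairs against functionals represented by the characteristic functions themselves; those have nonzero integral and so are not in $\mathcal{L}_{\mathbb{R}}$ as Remark \ref{Nota_3} requires, and your observation that the mean-zero constraint on $\varphi$ leaves the pairing unchanged closes that gap. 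Your side remarks --- that $e_{i}=0$ forces a divergent factor so the honest hypothesis is $e_{i}\geq1$, and that the representing element of $\mathcal{A}$ depends on the level $l$ of $\varphi$ --- correctly identify imprecisions present in the statement and proof as given in the paper.
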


\begin{proof}
There is an integer $l$ such that $\varphi\in\mathcal{L}_{\mathbb{R}}^{l}$. By
using (\ref{Eq_phi_expansion}), and the fact that the functions $\Omega\left(
p^{l}\left\Vert x_{i}-\boldsymbol{j}\right\Vert _{p}\right)  $,
$\boldsymbol{j}\in G_{l}$, are orthogonal with respect to the scalar product
$\left\langle \cdot,\cdot\right\rangle $ in $L_{\mathbb{R}}^{2}(\mathbb{Q}%
_{p}^{N})$, we have%
\begin{align*}
\varphi\left(  x_{i}\right)   &  =%
{\textstyle\sum\limits_{\boldsymbol{j}\in G_{l}}}
p^{lN}\left\langle \varphi\left(  x_{i}\right)  ,\Omega\left(  p^{l}\left\Vert
x_{i}-\boldsymbol{j}\right\Vert _{p}\right)  \right\rangle \Omega\left(
p^{l}\left\Vert x_{i}-\boldsymbol{j}\right\Vert _{p}\right) \\
&  =%
{\textstyle\sum\limits_{\boldsymbol{j}\in G_{l}}}
p^{lN}\left\langle W_{\boldsymbol{j}},\varphi\right\rangle \Omega\left(
p^{l}\left\Vert x_{i}-\boldsymbol{j}\right\Vert _{p}\right)  ,
\end{align*}
where $W_{\boldsymbol{j}}\in\mathcal{L}_{\mathbb{R}}^{\prime}\left(
\mathbb{Q}_{p}^{N}\right)  $, for $\boldsymbol{j}\in G_{l}$. Consequently,%
\[
\varphi^{e_{i}}\left(  x_{i}\right)  =%
{\textstyle\sum\limits_{\boldsymbol{j}\in G_{l}}}
p^{lNe_{i}}\left\langle W_{\boldsymbol{j}},\varphi\right\rangle ^{e_{i}}%
\Omega\left(  p^{l}\left\Vert x_{i}-\boldsymbol{j}\right\Vert _{p}\right)
\]
and $%
{\textstyle\prod\nolimits_{i=1}^{m}}
\varphi^{e_{i}}\left(  x_{i}\right)  $ is a finite sum of terms of the form%
\[
\left(
{\displaystyle\prod\limits_{k=1}^{m}}
p^{lNe_{i_{k}}}\left\langle W_{\boldsymbol{j}_{k}},\varphi\right\rangle
^{e_{i_{k}}}\right)
{\displaystyle\prod\limits_{k=1}^{m}}
\Omega\left(  p^{l}\left\Vert x_{k}-\boldsymbol{j}_{k}\right\Vert _{p}\right)
,
\]
where $i_{k}\in\left\{  1,\ldots,m\right\}  $,\ $\boldsymbol{j}_{k}\in G_{l}$.
Now $\mathcal{I}(\varphi)$ is a finite sum of terms of the form%
\begin{align*}
&  \left(
{\displaystyle\prod\limits_{k=1}^{m}}
p^{lNe_{i_{k}}}\left\langle W_{\boldsymbol{j}_{k}},\varphi\right\rangle
^{e_{i_{k}}}\right)
{\displaystyle\int\limits_{\left(  \mathbb{Q}_{p}^{N}\right)  ^{m}}}
{\displaystyle\prod\limits_{k=1}^{m}}
\Omega\left(  p^{l}\left\Vert x_{k}-\boldsymbol{j}_{k}\right\Vert _{p}\right)
%
{\displaystyle\prod\limits_{i=1}^{m}}
d^{N}x_{i}\\
&  =p^{-lNm}\left(
{\displaystyle\prod\limits_{k=1}^{m}}
p^{lNe_{i_{k}}}\left\langle W_{\boldsymbol{j}_{k}},\varphi\right\rangle
^{e_{i_{k}}}\right)  \in\mathcal{A},
\end{align*}
and therefore $\mathcal{I}\in\mathcal{A}$.
\end{proof}

\begin{lemma}
\label{Lemma17}With the above notation, the following assertions hold true:

\noindent(i) $1_{\mathcal{L}_{\mathbb{R}}}\left(  \varphi\right)
e^{-E_{\text{int}}\left(  \varphi\right)  +\left\langle \varphi,J\right\rangle
}\in\left(  L_{\mathbb{R}}^{1}\right)  $. In particular, $\mathcal{Z}%
(J)<\infty$;

\noindent(ii)
\[
\lim_{l\rightarrow\infty}%
{\displaystyle\int\limits_{\mathcal{L}_{\mathbb{R}}^{l}\left(  \mathbb{Q}%
_{p}^{N}\right)  }}
e^{\left\langle \varphi,J\right\rangle }\text{ }d\mathbb{P}_{l}=%
{\displaystyle\int\limits_{\mathcal{L}_{\mathbb{R}}\left(  \mathbb{Q}_{p}%
^{N}\right)  }}
e^{\left\langle \varphi,J\right\rangle }\text{ }d\mathbb{P};
\]

\noindent(iii) $\mathcal{Z}^{\left(  l\right)  }(J)<\infty$ for any
$l\in\mathbb{N}\smallsetminus\left\{  0\right\}  $;

\noindent(iv) $\lim_{l\rightarrow\infty}\mathcal{Z}^{\left(  l\right)
}(J)=\mathcal{Z}(J)$.
\end{lemma}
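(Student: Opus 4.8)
The plan is to reduce all four assertions to two facts. The first is the pointwise bound $0\le e^{-E_{\text{int}}(\varphi)}\le 1$ for $\varphi\in\mathcal{L}_{\mathbb{R}}^{l}$, which holds because $\mathcal{P}(\alpha)\ge 0$ and $\alpha_{4}>0$, as observed after (\ref{Poly_interactions}). The second is that the linear functional $\varphi\mapsto\langle\varphi,J\rangle$ is, under each of $\mathbb{P}_{l}$ and $\mathbb{P}$, a centered Gaussian random variable whose variance is read off from the covariance structure of Lemma~\ref{Lemma13} and the characteristic functional (\ref{Eq_Char_func}). With these in hand, (i) and (iii) are domination arguments, (ii) is an explicit Gaussian moment computation, and (iv) is a generalized dominated convergence argument resting on (ii) and Theorem~\ref{Theorem1}.

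For (i) and (iii) I would record the bound $0\le 1_{\mathcal{L}_{\mathbb{R}}}(\varphi)\,e^{-E_{\text{int}}(\varphi)+\langle\varphi,J\rangle}\le e^{\langle\varphi,J\rangle}$, valid as functionals on $\mathcal{L}_{\mathbb{R}}^{\prime}$. By Remark~\ref{Nota_3}, the functional $\exp\langle\cdot,J\rangle$ belongs to $(L_{\mathbb{R}}^{\rho})$ for every $\rho\in[1,\infty)$, in particular to $(L_{\mathbb{R}}^{1})$. Hence the dominated functional lies in $(L_{\mathbb{R}}^{1})$ and $\int e^{-E_{\text{int}}+\langle\cdot,J\rangle}\,d\mathbb{P}<\infty$; since $\mathcal{Z}_{0}=\lim_{l}\mathcal{Z}_{0}^{(l)}>0$, this gives (i). The identical bound integrated against the finite-dimensional Gaussian $\mathbb{P}_{l}$ gives $\mathcal{Z}^{(l)}(J)<\infty$, which is (iii).

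The heart of the matter is (ii). Since $J\in\mathcal{L}_{\mathbb{R}}$ there is an $l_{0}$ with $J\in\mathcal{L}_{\mathbb{R}}^{l_{0}}$. Inserting $f=tJ$, $t\in\mathbb{R}$, into (\ref{Eq_Char_func}) shows that under $\mathbb{P}$ the variable $\langle\cdot,J\rangle$ is centered Gaussian with variance $\mathbb{B}(J,J)$, so $\int e^{\langle\cdot,J\rangle}\,d\mathbb{P}=e^{\frac12\mathbb{B}(J,J)}$. The same computation for the finite-dimensional Gaussian $\mathbb{P}_{l}$, using Lemma~\ref{Lemma13}, yields $\int_{\mathcal{L}_{\mathbb{R}}^{l}}e^{\langle\varphi,J\rangle}\,d\mathbb{P}_{l}=e^{\frac12\mathbb{B}_{l}(J,J)}$. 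Because $\mathbb{B}_{l}$ is the restriction of $\mathbb{B}$ to $\mathcal{L}_{\mathbb{R}}^{l}\supseteq\mathcal{L}_{\mathbb{R}}^{l_{0}}$, one has $\mathbb{B}_{l}(J,J)=\mathbb{B}(J,J)$ for all $l\ge l_{0}$; thus both sides of (ii) equal $e^{\frac12\mathbb{B}(J,J)}$ and the limit is immediate.

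Finally, (iv) is where I expect the real work, and it is the main obstacle: the weight $e^{\langle\cdot,J\rangle}$ is \emph{unbounded}, so Theorem~\ref{Theorem1}, which governs only bounded continuous integrands, cannot be applied directly. I would instead establish a generalized dominated convergence statement for the family $\mathbb{P}_{l}\to\mathbb{P}$. Set $g=e^{-E_{\text{int}}+\langle\cdot,J\rangle}$ and $h=e^{\langle\cdot,J\rangle}$, so that $0\le g\le h$, both are continuous, and $h-g=e^{\langle\cdot,J\rangle}(1-e^{-E_{\text{int}}})\ge 0$. For each $M>0$ and each nonnegative continuous $\psi\in\{h+g,\,h-g\}$, the truncation $\psi\wedge M$ is bounded and continuous, so Theorem~\ref{Theorem1} gives $\int(\psi\wedge M)\,d\mathbb{P}_{l}\to\int(\psi\wedge M)\,d\mathbb{P}$; hence $\liminf_{l}\int\psi\,d\mathbb{P}_{l}\ge\int(\psi\wedge M)\,d\mathbb{P}$, and letting $M\to\infty$ yields $\liminf_{l}\int\psi\,d\mathbb{P}_{l}\ge\int\psi\,d\mathbb{P}$. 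Applying this to $h+g$ and to $h-g$ and using $\int h\,d\mathbb{P}_{l}\to\int h\,d\mathbb{P}$ from part (ii) forces $\int g\,d\mathbb{P}_{l}\to\int g\,d\mathbb{P}$. Dividing by $\mathcal{Z}_{0}^{(l)}\to\mathcal{Z}_{0}>0$ then gives $\mathcal{Z}^{(l)}(J)\to\mathcal{Z}(J)$, which is (iv).
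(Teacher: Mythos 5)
Your proposal is correct, and parts (i) and (iii) coincide in substance with the paper's argument: both rest on the domination $0\le e^{-E_{\text{int}}(\varphi)+\langle\varphi,J\rangle}\le e^{\langle\varphi,J\rangle}$ together with the integrability of $\exp\langle\cdot,J\rangle$ from Remark \ref{Nota_3}. For (ii) and (iv), however, you take a genuinely different route. The paper proves (ii) by truncating with an increasing sequence of compact subsets $K_{n_{l}}\subset\mathcal{L}_{\mathbb{R}}^{l}$, applying Lemma \ref{Lemma11} to the resulting bounded integrands to transfer from $\mathbb{P}_{l}$ to $\mathbb{P}$, and then removing the truncation by dominated convergence; (iv) is then dispatched by asserting that the same reasoning applies to $e^{-E_{\text{int}}(\varphi)+\langle\varphi,J\rangle}$. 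You instead compute the Gaussian moment generating functions on both sides in closed form, noting that for $l\ge l_{0}$ the integrand depends only on the $G_{l_{0}}$-coordinates so the sequence in (ii) is eventually constant equal to $e^{\frac{1}{2}\mathbb{B}(J,J)}$; and for (iv) you run a Pratt-type generalized dominated convergence argument, truncating the nonnegative continuous functions $h\pm g$ at level $M$, applying Theorem \ref{Theorem1}(ii) to obtain $\liminf$ inequalities, and using the convergence of $\int h\,d\mathbb{P}_{l}$ from (ii) to squeeze $\int g\,d\mathbb{P}_{l}$. Your explicit identification of the unboundedness of $e^{\langle\cdot,J\rangle}$ as the obstruction to a direct use of Theorem \ref{Theorem1} is precisely the gap the paper's compact-truncation device is meant to fill, and your sandwich argument fills it at least as rigorously and rather more transparently; the only caveat is that the closed-form value $e^{\frac{1}{2}\mathbb{B}(J,J)}$ relies on the normalization consistency between the density defining $\mathbb{P}_{l}$ and the characteristic functional (\ref{Eq_Char_func}), a dependence you could avoid entirely in (ii) by invoking only the pushforward statement of Lemma \ref{Lemma11} for the fixed cylinder function $e^{\langle\varphi,J\rangle}$ rather than the explicit Gaussian formula.
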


\begin{proof}
(i) The result follows from%
\[%
{\textstyle\int\limits_{\mathcal{L}_{\mathbb{R}}\left(  \mathbb{Q}_{p}%
^{N}\right)  }}
e^{-E_{\text{int}}\left(  \varphi\right)  +\left\langle \varphi,J\right\rangle
}d\mathbb{P}\left(  \varphi\right)  \mathbb{\leq}%
{\textstyle\int\limits_{\mathcal{L}_{\mathbb{R}}\left(  \mathbb{Q}_{p}%
^{N}\right)  }}
e^{\left\langle \varphi,J\right\rangle }d\mathbb{P}\left(  \varphi\right)
\mathbb{\leq}%
{\textstyle\int\limits_{\mathcal{L}_{\mathbb{R}}^{\prime}\left(
\mathbb{Q}_{p}^{N}\right)  }}
e^{\left\langle W,J\right\rangle }d\mathbb{P(}W\mathbb{)<\infty}\text{,}%
\]
by using Remark \ref{Nota_3}.

(ii) For each $l\in\mathbb{N}\smallsetminus\left\{  0\right\}  $, we take
$\left\{  K_{n_{l}}\right\}  $ to be a increasing sequence of compact subsets
of $\mathcal{L}_{\mathbb{R}}^{l}\left(  \mathbb{Q}_{p}^{N}\right)  $ having
$\mathcal{L}_{\mathbb{R}}^{l}\left(  \mathbb{Q}_{p}^{N}\right)  $ as its
limit. Set%
\[
\mathcal{I}^{\left(  l,n\right)  }(J):=%
{\displaystyle\int\limits_{\mathcal{L}_{\mathbb{R}}^{l}\left(  \mathbb{Q}%
_{p}^{N}\right)  }}
1_{K_{n_{l}}}\left(  \varphi\right)  e^{\left\langle \varphi,J\right\rangle
}\text{ }d\mathbb{P}_{l}.
\]
Since the integrand $1_{K_{n_{l}}}\left(  \varphi\right)  e^{\left\langle
\varphi,J\right\rangle }$ is continuous and bounded, by Lemma \ref{Lemma11},\
\[
\mathcal{I}^{\left(  l,n\right)  }(J)=%
{\displaystyle\int\limits_{\mathcal{L}_{\mathbb{R}}^{l}\left(  \mathbb{Q}%
_{p}^{N}\right)  }}
1_{K_{n_{l}}}\left(  \varphi\right)  e^{\left\langle \varphi,J\right\rangle
}\text{ }d\mathbb{P}.
\]
The result follows by the dominated convergence theorem, by taking first the
limit $n_{l}\rightarrow\infty$, and then the limit $l\rightarrow\infty$, and
using the fact that $e^{\left\langle \varphi,J\right\rangle }$ is integrable.

(iii) By Lemma \ref{Lemma11} and Remark \ref{Nota_3},%
\[%
{\displaystyle\int\limits_{\mathcal{L}_{\mathbb{R}}^{l}\left(  \mathbb{Q}%
_{p}^{N}\right)  }}
e^{\left\langle \varphi,J\right\rangle }\text{ }d\mathbb{P}_{l}\left(
\varphi\right)  =%
{\displaystyle\int\limits_{\mathcal{L}_{\mathbb{R}}^{l}\left(  \mathbb{Q}%
_{p}^{N}\right)  }}
e^{\left\langle \varphi,J\right\rangle }\text{ }d\mathbb{P}\left(
\varphi\right)  \mathbb{\leq}%
{\displaystyle\int\limits_{\mathcal{L}_{\mathbb{R}}^{\prime}\left(
\mathbb{Q}_{p}^{N}\right)  }}
e^{\left\langle W,J\right\rangle }\text{ }d\mathbb{P}\left(  W\right)
\mathbb{<\infty}\text{.}%
\]
We now use that%
\[
\mathcal{Z}^{\left(  l\right)  }(J)\leq\frac{%
{\displaystyle\int\limits_{\mathcal{L}_{\mathbb{R}}^{l}\left(  \mathbb{Q}%
_{p}^{N}\right)  }}
e^{\left\langle \varphi,J\right\rangle }\text{ }d\mathbb{P}_{l}}{%
{\displaystyle\int\limits_{\mathcal{L}_{\mathbb{R}}^{l}\left(  \mathbb{Q}%
_{p}^{N}\right)  }}
\text{ }d\mathbb{P}_{l}}.
\]

(iv) It is sufficient to show that%
\[
\lim_{l\rightarrow\infty}%
{\displaystyle\int\limits_{\mathcal{L}_{\mathbb{R}}^{l}\left(  \mathbb{Q}%
_{p}^{N}\right)  }}
e^{-E_{\text{int}}\left(  \varphi\right)  +\left\langle \varphi,J\right\rangle
}\text{ }d\mathbb{P}_{l}=%
{\displaystyle\int\limits_{\mathcal{L}_{\mathbb{R}}\left(  \mathbb{Q}_{p}%
^{N}\right)  }}
e^{-E_{\text{int}}\left(  \varphi\right)  +\left\langle \varphi,J\right\rangle
}\text{ }d\mathbb{P}.
\]
This identity is established by using the reasoning given in the second part.
\end{proof}

\begin{definition}
For $\theta\in\mathcal{L}_{\mathbb{R}}\left(  \mathbb{Q}_{p}^{N}\right)  $,
the functional derivative ${\Huge D}_{\theta}\mathcal{Z}(J)$ of $\mathcal{Z}%
(J)$ is defined as
\[
{\Huge D}_{\theta}\mathcal{Z}(J)=\lim_{\epsilon\rightarrow0}\frac
{\mathcal{Z}(J+\epsilon\theta)-\mathcal{Z}(J)}{\epsilon}=\left[  \frac
{d}{d\epsilon}\mathcal{Z}(J+\epsilon\theta)\right]  _{\epsilon=0}.
\]

\end{definition}

\begin{lemma}
\label{Lemma18}Let $\theta_{1}$,\ldots,$\theta_{m}$ be test functions from
$\mathcal{L}_{\mathbb{R}}\left(  \mathbb{Q}_{p}^{N}\right)  $. The functional
derivative ${\Huge D}_{\theta_{1}}\cdots{\Huge D}_{\theta_{m}}\mathcal{Z}(J)$
exists, and the following formula holds true:
\begin{equation}
{\Huge D}_{\theta_{1}}\cdots{\Huge D}_{\theta_{m}}\mathcal{Z}(J)=\frac
{1}{\mathcal{Z}_{0}}\text{ \ }%
{\textstyle\int\limits_{\mathcal{L}_{\mathbb{R}}\left(  \mathbb{Q}_{p}%
^{N}\right)  }}
e^{-E_{\text{int}}(\varphi)+\langle\varphi,J\rangle}\left(
{\textstyle\prod\limits_{i=1}^{m}}
\left\langle \varphi,\theta_{i}\right\rangle \right)  d\mathbb{P}(\varphi).
\label{Eq_30}%
\end{equation}
Furthermore, the functional derivative ${\Huge D}_{\theta_{1}}\cdots
{\Huge D}_{\theta_{m}}\mathcal{Z}(J)$ can be uniquely identified with the
distribution%
\begin{equation}%
{\textstyle\prod\limits_{i=1}^{m}}
\theta_{i}\left(  x_{i}\right)  \rightarrow\frac{1}{\mathcal{Z}_{0}}\text{ \ }%
{\textstyle\idotsint\limits_{\mathbb{Q}_{p}^{N}\times\cdots\times
\mathbb{Q}_{p}^{N}}}
\text{ }%
{\textstyle\prod\limits_{i=1}^{m}}
\theta_{i}\left(  x_{i}\right)  \left\{
{\textstyle\int\limits_{\mathcal{L}_{\mathbb{R}}\left(  \mathbb{Q}_{p}%
^{N}\right)  }}
e^{-E_{\text{int}}(\varphi)+\langle\varphi,J\rangle}%
{\textstyle\prod\limits_{i=1}^{m}}
\varphi\left(  x_{i}\right)  d\mathbb{P}(\varphi)\right\}
{\textstyle\prod\limits_{i=1}^{m}}
d^{N}x_{i} \label{Eq_31}%
\end{equation}
from $\mathcal{L}_{\mathbb{R}}^{\prime}\left(  \left(  \mathbb{Q}_{p}%
^{N}\right)  ^{m}\right)  $.
\end{lemma}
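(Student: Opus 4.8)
The plan is to prove (\ref{Eq_30}) by repeated differentiation under the integral sign, and then to rewrite the result via Fubini's theorem so as to exhibit the distribution (\ref{Eq_31}). First I would handle a single functional derivative. Replacing $J$ by $J+\epsilon\theta$ in Definition \ref{Definition_Z_J} and using linearity of the pairing, $\langle\varphi,J+\epsilon\theta\rangle=\langle\varphi,J\rangle+\epsilon\langle\varphi,\theta\rangle$, so that
\[
\mathcal{Z}(J+\epsilon\theta)=\frac{1}{\mathcal{Z}_{0}}\int_{\mathcal{L}_{\mathbb{R}}\left(\mathbb{Q}_{p}^{N}\right)}e^{-E_{\text{int}}(\varphi)+\langle\varphi,J\rangle+\epsilon\langle\varphi,\theta\rangle}\,d\mathbb{P}.
\]
The integrand is smooth in $\epsilon$, with $\epsilon$-derivative $\langle\varphi,\theta\rangle\,e^{-E_{\text{int}}(\varphi)+\langle\varphi,J\rangle+\epsilon\langle\varphi,\theta\rangle}$. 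For $|\epsilon|\leq 1$, using $E_{\text{int}}\geq 0$ and $\epsilon\langle\varphi,\theta\rangle\leq|\langle\varphi,\theta\rangle|$, this derivative is bounded in absolute value by $|\langle\varphi,\theta\rangle|\,e^{\langle\varphi,J\rangle}e^{|\langle\varphi,\theta\rangle|}$. By Remark \ref{Nota_3} each of the factors $\langle\cdot,\theta\rangle$, $e^{\langle\cdot,J\rangle}$ and $e^{\pm\langle\cdot,\theta\rangle}$ belongs to every space $\left(L_{\mathbb{R}}^{\rho}\right)$, so H\"{o}lder's inequality shows this dominating function lies in $\left(L_{\mathbb{R}}^{1}\right)$. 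The dominated convergence theorem then justifies differentiation under the integral sign and yields (\ref{Eq_30}) for $m=1$. Iterating, $D_{\theta_{m-1}}$ applied to the resulting functional of $J$ again replaces $J$ by $J+\epsilon\theta_{m-1}$ and brings down a factor $\langle\varphi,\theta_{m-1}\rangle$; at the $k$-th stage the $\epsilon$-derivative is dominated, for $|\epsilon|\leq 1$, by $e^{\langle\varphi,J\rangle}e^{|\langle\varphi,\theta_{k}\rangle|}\prod_{i}|\langle\varphi,\theta_{i}\rangle|$, which again lies in $\left(L_{\mathbb{R}}^{1}\right)$ by H\"{o}lder's inequality and Remark \ref{Nota_3}. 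After $m$ steps this produces (\ref{Eq_30}).

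Next I would pass to (\ref{Eq_31}). Writing $\langle\varphi,\theta_{i}\rangle=\int_{\mathbb{Q}_{p}^{N}}\varphi(x_{i})\theta_{i}(x_{i})\,d^{N}x_{i}$ and multiplying gives
\[
\prod_{i=1}^{m}\langle\varphi,\theta_{i}\rangle=\int_{\left(\mathbb{Q}_{p}^{N}\right)^{m}}\left(\prod_{i=1}^{m}\varphi(x_{i})\theta_{i}(x_{i})\right)\prod_{i=1}^{m}d^{N}x_{i}.
\]
Substituting this into (\ref{Eq_30}) and interchanging the integration over $\mathcal{L}_{\mathbb{R}}$ with the integration over $\left(\mathbb{Q}_{p}^{N}\right)^{m}$ by Fubini's theorem produces precisely (\ref{Eq_31}). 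Fubini applies because the full integrand is absolutely integrable: the $\varphi$-factor is controlled by the $\left(L_{\mathbb{R}}^{1}\right)$ bound above, while the $x_{i}$ range over the compact supports of the $\theta_{i}$, which carry finite Haar measure.

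Finally, to identify (\ref{Eq_31}) with a genuine element of $\mathcal{L}_{\mathbb{R}}^{\prime}\left(\left(\mathbb{Q}_{p}^{N}\right)^{m}\right)$, I would note that the inner brace in (\ref{Eq_31}) defines a kernel $K(x_{1},\ldots,x_{m})$ which, by the localization and discretization argument of Lemma \ref{Lemma15} (choose $l$ with each $\theta_{i}\in\mathcal{L}_{\mathbb{R}}^{l}$, expand $\prod_{i}\varphi(x_{i})$ via (\ref{Eq_phi_expansion}), and reduce the $\varphi$-integral to a finite-dimensional Gaussian integral through Lemma \ref{Lemma11}), is a test function in $(x_{1},\ldots,x_{m})$; here the factor $e^{\langle\varphi,J\rangle}$ depends only on $\varphi$ and does not affect the $x$-dependence. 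Thus $\prod_{i}\theta_{i}(x_{i})\mapsto\int K\prod_{i}\theta_{i}\,\prod_{i}d^{N}x_{i}$ is the pairing with a regular distribution, and its existence and uniqueness on all of $\mathcal{L}_{\mathbb{R}}\left(\left(\mathbb{Q}_{p}^{N}\right)^{m}\right)$, not merely on decomposable tensors, follow from the multilinear version of the kernel theorem recorded in Remark \ref{Nota_Nuclear}. The main obstacle is securing the uniform-in-$\epsilon$ domination needed to differentiate under the integral; once the $\left(L_{\mathbb{R}}^{\rho}\right)$ estimates of Remark \ref{Nota_3} are available, the remaining steps are routine applications of dominated convergence, Fubini's theorem, and the nuclear theorem.
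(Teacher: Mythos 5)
Your proposal is correct and follows essentially the same route as the paper: differentiation under the integral sign justified by dominated convergence using the integrability estimates of Remark \ref{Nota_3} (the paper phrases the domination via the mean value theorem applied to the difference quotient, you via a uniform bound on the $\epsilon$-derivative for $|\epsilon|\leq 1$, which is an equivalent and if anything slightly cleaner formulation), followed by Fubini's theorem and the kernel theorem of Remark \ref{Nota_Nuclear} to identify the multilinear functional with a distribution in $\mathcal{L}_{\mathbb{R}}^{\prime}\left(\left(\mathbb{Q}_{p}^{N}\right)^{m}\right)$. No gaps.
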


\begin{proof}
We first compute%
\[
\left[  \frac{d}{d\epsilon}\mathcal{Z}(J+\epsilon\theta_{m})\right]
_{\epsilon=0}=\frac{1}{\mathcal{Z}_{0}}\lim_{\epsilon\rightarrow0}%
{\textstyle\int\limits_{\mathcal{L}_{\mathbb{R}}\left(  \mathbb{Q}_{p}%
^{N}\right)  }}
e^{-E_{\text{int}}(\varphi)+\langle\varphi,J\rangle}\left(  \frac
{e^{\epsilon\langle\varphi,\theta_{m}\rangle}-1}{\epsilon}\right)
\ d\mathbb{P}(\varphi).
\]
We consider the case $\epsilon\rightarrow0^{+}$, the other limit is treated in
a similar way. For $\epsilon>0$ sufficiently small, by using the mean value
theorem,
\[
\frac{e^{\epsilon\langle\varphi,\theta_{m}\rangle}-1}{\epsilon}=\left\langle
\varphi,\theta_{m}\right\rangle e^{\epsilon_{0}\langle\varphi,\theta
_{m}\rangle}\text{ where }\epsilon_{0}\in\left(  0,\epsilon\right)  .
\]
Then, by using\ $e^{-E_{\text{int}}(\varphi)}\leq1$\ and Remark \ref{Nota_3},
\[
e^{-E_{\text{int}}(\varphi)+\langle\varphi,J\rangle}\left(  \frac
{e^{\epsilon\langle\varphi,\theta_{m}\rangle}-1}{\epsilon}\right)
=\left\langle \varphi,\theta_{m}\right\rangle e^{-E_{\text{int}}%
(\varphi)+\langle\varphi,J+\epsilon_{0}\theta_{m}\rangle}%
\]
is an integrable function. Now, by applying the dominated convergence theorem,%
\begin{equation}
{\Huge D}_{\theta_{m}}\mathcal{Z}(J)=\left[  \frac{d}{d\epsilon}%
\mathcal{Z}(J+\epsilon\theta_{m})\right]  _{\epsilon=0}=\frac{1}%
{\mathcal{Z}_{0}}\text{ \ }%
{\textstyle\int\limits_{\mathcal{L}_{\mathbb{R}}\left(  \mathbb{Q}_{p}%
^{N}\right)  }}
e^{-E_{\text{int}}(\varphi)+\langle\varphi,J\rangle}\left\langle
\varphi,\theta_{m}\right\rangle \ d\mathbb{P}(\varphi). \label{Eq_Der_Funt}%
\end{equation}
By Remark \ref{Nota_3}, $e^{-E_{\text{int}}(\varphi)+\langle\varphi,J\rangle
}\left\langle \varphi,\theta_{m}\right\rangle \in\left(  L_{\mathbb{R}}%
^{1}\right)  $, then, further derivatives can be computed using
(\ref{Eq_Der_Funt}).

Finally, formula (\ref{Eq_31}) is obtained from (\ref{Eq_30}) by using
Fubini's theorem and Remark \ref{Nota_Nuclear}:%
\[
{\Huge D}_{\theta_{1}}\cdots{\Huge D}_{\theta_{m}}\mathcal{Z}(J)=\frac
{1}{\mathcal{Z}_{0}}\text{ }%
{\textstyle\int\limits_{\mathcal{L}_{\mathbb{R}}\left(  \mathbb{Q}_{p}%
^{N}\right)  }}
e^{-E_{\text{int}}(\varphi)+\langle\varphi,J\rangle}\left\{  \text{ }%
{\textstyle\idotsint\limits_{\mathbb{Q}_{p}^{N}\times\cdots\times
\mathbb{Q}_{p}^{N}}}
\text{ }%
{\textstyle\prod\limits_{i=1}^{m}}
\theta_{i}\left(  x_{i}\right)  \varphi\left(  x_{i}\right)
{\textstyle\prod\limits_{i=1}^{m}}
d^{N}x_{i}\right\}  d\mathbb{P}(\varphi).
\]

\end{proof}

\begin{remark}
\label{Nota_eqqui_Der}In an alternative way, one can define the functional
derivative $\frac{\delta}{\delta J\left(  y\right)  }\mathcal{Z}(J)$\ of
$\mathcal{Z}(J)$ as the distribution from $\mathcal{L}_{\mathbb{R}}^{\prime
}\left(  \mathbb{Q}_{p}^{N}\right)  $ satisfying%
\[%
{\textstyle\int\limits_{\mathbb{Q}_{p}^{N}}}
\theta\left(  y\right)  \left(  \frac{\delta}{\delta J\left(  y\right)
}\mathcal{Z}(J)\right)  \left(  y\right)  d^{N}y=\left[  \frac{d}{d\epsilon
}\mathcal{Z}(J+\epsilon\theta)\right]  _{\epsilon=0}.
\]
Using this notation and formula (\ref{Eq_31}), we obtain that%
\[
\frac{\delta}{\delta J\left(  x_{1}\right)  }\cdots\frac{\delta}{\delta
J\left(  x_{m}\right)  }\mathcal{Z}(J)=\frac{1}{\mathcal{Z}_{0}}\text{ \ }%
{\textstyle\int\limits_{\mathcal{L}_{\mathbb{R}}\left(  \mathbb{Q}_{p}%
^{N}\right)  }}
e^{-E_{\text{int}}(\varphi)+\langle\varphi,J\rangle}\left(
{\textstyle\prod\limits_{i=1}^{m}}
\varphi\left(  x_{i}\right)  \right)  d\mathbb{P}(\varphi)\in\mathcal{L}%
_{\mathbb{R}}^{\prime}\left(  \left(  \mathbb{Q}_{p}^{N}\right)  ^{m}\right)
.
\]

\end{remark}

\begin{remark}
\label{Nota_10}Consider the probability measure space \ $\left(
\mathcal{L}_{\mathbb{R}}\left(  \mathbb{Q}_{p}^{N}\right)  ,\mathcal{B}%
\cap\mathcal{L}_{\mathbb{R}},\frac{1}{\mathcal{Z}_{0}}\mathbb{P}\right)  $,
where $\mathcal{B}\cap\mathcal{L}_{\mathbb{R}}$ denotes the $\sigma$-algebra
generated by the cylinder subsets of $\mathcal{L}_{\mathbb{R}}$. Given
$\theta_{1}$,\ldots,$\theta_{m}$ test functions from $\mathcal{L}_{\mathbb{R}%
}\left(  \mathbb{Q}_{p}^{N}\right)  $, \ we attach them the following random
variable:%
\[%
\begin{array}
[c]{lll}%
\mathcal{L}_{\mathbb{R}}\left(  \mathbb{Q}_{p}^{N}\right)  & \rightarrow &
\mathbb{R}\\
\varphi & \rightarrow &
{\textstyle\prod\limits_{i=1}^{m}}
\left\langle \varphi,\theta_{i}\right\rangle .
\end{array}
\]
The expected value of this variable is given by%
\[
{\Huge D}_{\theta_{1}}\cdots{\Huge D}_{\theta_{m}}\mathcal{Z}(J)\mid
_{J=0}=\frac{1}{\mathcal{Z}_{0}}\text{ \ }%
{\textstyle\int\limits_{\mathcal{L}_{\mathbb{R}}\left(  \mathbb{Q}_{p}%
^{N}\right)  }}
e^{-E_{\text{int}}(\varphi)}\left(
{\textstyle\prod\limits_{i=1}^{m}}
\left\langle \varphi,\theta_{i}\right\rangle \right)  d\mathbb{P}(\varphi).
\]
An alternative description of the expected value is given by%
\[
\frac{\delta}{\delta J\left(  x_{1}\right)  }\cdots\frac{\delta}{\delta
J\left(  x_{m}\right)  }\mathcal{Z}(J)\mid_{J=0}=\frac{1}{\mathcal{Z}_{0}%
}\text{ \ }%
{\textstyle\int\limits_{\mathcal{L}_{\mathbb{R}}\left(  \mathbb{Q}_{p}%
^{N}\right)  }}
e^{-E_{\text{int}}(\varphi)}\left(
{\textstyle\prod\limits_{i=1}^{m}}
\varphi\left(  x_{i}\right)  \right)  d\mathbb{P}(\varphi).
\]

\end{remark}

As a conclusion we have the following result:

\begin{proposition}
\label{Prop1}The correlations functions $G^{\left(  m\right)  }\left(
x_{1},\ldots,x_{m}\right)  \in\mathcal{L}_{\mathbb{R}}^{\prime}\left(  \left(
\mathbb{Q}_{p}^{N}\right)  ^{m}\right)  $ are given by
\[
G^{\left(  m\right)  }\left(  x_{1},\ldots,x_{m}\right)  =\frac{\mathcal{Z}%
_{0}}{\mathcal{Z}}\frac{\delta}{\delta J\left(  x_{1}\right)  }\cdots
\frac{\delta}{\delta J\left(  x_{m}\right)  }\mathcal{Z}(J)\mid_{J=0}.
\]

\end{proposition}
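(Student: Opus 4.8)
The plan is to obtain the claimed identity by directly comparing the integral defining $G^{(m)}$ with the integral representation of the iterated functional derivative of $\mathcal{Z}(J)$ that has already been established. First I would recall from Definition~\ref{Definition_G_m} that
\[
G^{(m)}\left(x_{1},\ldots,x_{m}\right)=\frac{1}{\mathcal{Z}}\int_{\mathcal{L}_{\mathbb{R}}\left(\mathbb{Q}_{p}^{N}\right)}\left(\prod_{i=1}^{m}\varphi\left(x_{i}\right)\right)e^{-E_{\text{int}}\left(\varphi\right)}\,d\mathbb{P},
\]
and from Definition~\ref{Definition_Z_J} the companion expression for $\mathcal{Z}(J)$. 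Since $G^{(m)}$ is a kernel in the variables $x_{1},\ldots,x_{m}$, all identities below are to be read as equalities of distributions in $\mathcal{L}_{\mathbb{R}}^{\prime}\left(\left(\mathbb{Q}_{p}^{N}\right)^{m}\right)$.

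The key input is Lemma~\ref{Lemma18} together with its reformulation in Remark~\ref{Nota_eqqui_Der}: the iterated functional derivative exists, defines an element of $\mathcal{L}_{\mathbb{R}}^{\prime}\left(\left(\mathbb{Q}_{p}^{N}\right)^{m}\right)$, and is represented by
\[
\frac{\delta}{\delta J\left(x_{1}\right)}\cdots\frac{\delta}{\delta J\left(x_{m}\right)}\mathcal{Z}(J)=\frac{1}{\mathcal{Z}_{0}}\int_{\mathcal{L}_{\mathbb{R}}\left(\mathbb{Q}_{p}^{N}\right)}e^{-E_{\text{int}}(\varphi)+\langle\varphi,J\rangle}\left(\prod_{i=1}^{m}\varphi\left(x_{i}\right)\right)d\mathbb{P}(\varphi).
\]
I would then set $J=0$, which removes the factor $e^{\langle\varphi,J\rangle}$, and multiply both sides by $\mathcal{Z}_{0}/\mathcal{Z}$. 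The two occurrences of $\mathcal{Z}_{0}$ cancel, leaving precisely the integral defining $G^{(m)}$ in Definition~\ref{Definition_G_m}, which yields the announced formula and simultaneously shows that $G^{(m)}$ inherits membership in $\mathcal{L}_{\mathbb{R}}^{\prime}\left(\left(\mathbb{Q}_{p}^{N}\right)^{m}\right)$ from the right-hand side.

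The only points demanding care are the well-definedness issues, and these are settled upstream, so I expect no substantial obstacle. Lemma~\ref{Lemma17}(i) guarantees that $\mathcal{Z}=\mathcal{Z}(0)<\infty$, and $\mathcal{Z}_{0}>0$, so the prefactor $\mathcal{Z}_{0}/\mathcal{Z}$ is meaningful; Lemma~\ref{Lemma18} supplies both the integrability of $e^{-E_{\text{int}}(\varphi)+\langle\varphi,J\rangle}\prod_{i}\langle\varphi,\theta_{i}\rangle$ in $\left(L_{\mathbb{R}}^{1}\right)$, which justifies differentiating under the integral sign via dominated convergence, and the Fubini step (through Remark~\ref{Nota_Nuclear}) that reexpresses the functional derivative as a kernel in $x_{1},\ldots,x_{m}$. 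Consequently the proposition reduces to invoking Definition~\ref{Definition_G_m}, Definition~\ref{Definition_Z_J}, and Lemma~\ref{Lemma18}, and carrying out the cancellation at $J=0$.
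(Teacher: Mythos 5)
Your argument is correct and is essentially the paper's own: Proposition~\ref{Prop1} is stated there as an immediate consequence of Lemma~\ref{Lemma18}, Remark~\ref{Nota_eqqui_Der} and Remark~\ref{Nota_10}, obtained by setting $J=0$ in the kernel representation of the iterated functional derivative and comparing with Definition~\ref{Definition_G_m} after multiplying by $\mathcal{Z}_{0}/\mathcal{Z}$. The well-definedness points you flag (finiteness of $\mathcal{Z}$ via Lemma~\ref{Lemma17} and the Fubini/nuclearity step via Remark~\ref{Nota_Nuclear}) are exactly the ones the paper relies on upstream, so nothing is missing.
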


\subsection{Free-field theory}

\subsubsection{The propagators}

We take $\delta>N$, and $\gamma$, $\alpha_{2}>0$ as before. For $J\in
\mathcal{L}_{\mathbb{R}}$, the equation%
\begin{equation}
\left(  \frac{\gamma}{2}W\left(  \partial,\delta\right)  +\frac{\alpha_{2}}%
{2}\right)  \varphi_{0}=J \label{Equation_J}%
\end{equation}
has unique solution $\varphi_{0}\in\mathcal{L}_{\mathbb{R}}$. Indeed,
$\widehat{\varphi_{0}}\left(  \kappa\right)  =\frac{\widehat{J}\left(
\kappa\right)  }{\frac{\gamma}{2}A_{w_{\delta}}(\left\Vert \kappa\right\Vert
_{p})+\frac{\alpha_{2}}{2}}$ is a test function satisfying $\widehat
{\varphi_{0}}\left(  0\right)  =0$. On the other hand, solving equation
(\ref{Equation_J}) in $\mathcal{D}_{\mathbb{R}}^{\prime}$, we have
\[
\varphi_{0}\left(  x\right)  =\mathcal{F}_{\kappa\rightarrow x}^{-1}(\frac
{1}{\frac{\gamma}{2}A_{w_{\delta}}(\left\Vert \kappa\right\Vert _{p}%
)+\frac{\alpha_{2}}{2}})\ast J(x)=G(\left\Vert x\right\Vert _{p})\ast
J(x)\text{,}%
\]
where $\mathcal{F}_{\kappa\rightarrow x}^{-1}$ denotes the Fourier transform
from $\mathcal{D}^{\prime}$ into $\mathcal{D}^{\prime}$, which means that
equation (\ref{Equation_J}) has a unique solution $\varphi_{0}\left(
x\right)  =G(\left\Vert x\right\Vert _{p})\ast J(x)$ in $\mathcal{L}%
_{\mathbb{R}}$, where $G(\left\Vert x\right\Vert _{p})$ is the `standard Green
function'. This means that the UV and IF behavior \ of the propagators are not
altered if we use Lizorkin spaces in the construction of $p$-adic QFTs.

We now discuss the singular behavior of the Green function in the case \ of
Taibleson-Vladimirov operator:%
\[
G(x;\beta,\gamma,\alpha_{2})=\mathcal{F}_{\kappa\rightarrow x}^{-1}(\frac
{1}{\frac{\gamma}{2}\left\Vert \kappa\right\Vert _{p}^{\beta}+\frac{\alpha
_{2}}{2}}),
\]
where $\beta,\gamma,\alpha_{2}>0$. In this case $G(x;\beta,\gamma,\alpha_{2})$
is continuous on $\mathbb{Q}_{p}^{N}\smallsetminus\left\{  0\right\}  $. If
$\beta>N$, then $G(x;\beta,\gamma,\alpha_{2})$ is continuous. For $0<\beta\leq
N$, $G(x;\beta,\gamma,\alpha_{2})$ is locally constant on $\mathbb{Q}_{p}%
^{N}\smallsetminus\left\{  0\right\}  $, and
\[
\left\vert G(x;\beta,\gamma,\alpha_{2})\right\vert \leq\left\{
\begin{array}
[c]{cc}%
C\left\Vert x\right\Vert _{p}^{\beta-N} & \text{for }0<\beta<N\\
& \\
C_{0}-C_{1}\ln\left\Vert x\right\Vert _{p} & \text{for }N=\beta\text{,}%
\end{array}
\right.
\]
for $\left\Vert x\right\Vert _{p}\leq1$, \ where $C$, $C_{0}$, $C_{1}$ are
positive constants; $\left\vert G(x;\beta,\gamma,\alpha_{2})\right\vert \leq
C_{1}\left\Vert x\right\Vert _{p}^{-\beta-N}$ as $\left\Vert x\right\Vert
_{p}\rightarrow\infty$. Finally, $G(x;\beta,\gamma,\alpha_{2})\geq0$ on
$\mathbb{Q}_{p}^{N}\smallsetminus\left\{  0\right\}  $, see e.g.
\cite[Proposition 11.1]{KKZuniga}.

The behavior at the origin of the Green functions considered here depends in
an intricate way on the parameters of the QFT considered and on the dimension.
This behavior plays a central role in the renormalization of the QFTs
\ presented here. The renormalization will be considered in a forthcoming article.

\begin{theorem}
\label{Prop2}Set $\mathcal{Z}_{0}(J):=\mathcal{Z}(J;\delta,\gamma,\alpha
_{2},0)$, then
\[
\mathcal{Z}_{0}(J)=\mathcal{N}_{0}^{\prime}\exp\left\{  \int_{\mathbb{Q}%
_{p}^{N}}\int_{\mathbb{Q}_{p}^{N}}J(x)G(\left\Vert x-y\right\Vert
_{p})J(y)d^{N}x\text{ }d^{N}y\right\}  ,
\]
where $\mathcal{N}_{0}^{\prime}$ denotes a normalization constant.
\end{theorem}

\begin{proof}
We take $\varphi_{0}$, $J\in\mathcal{L}_{\mathbb{R}}$, where $\varphi_{0}$ is
the solution of equation(\ref{Equation_J}). We now change variables in
$\mathcal{Z}_{0}(J)$ as $\varphi=\varphi_{0}+\varphi^{\prime}$,%
\begin{align*}
\mathcal{Z}_{0}(J) &  =\frac{1}{\mathcal{Z}_{0}}%
{\displaystyle\int\limits_{\mathcal{L}_{\mathbb{R}}\left(  \mathbb{Q}_{p}%
^{N}\right)  }}
e^{\left\langle \varphi,J\right\rangle }\text{ }d\mathbb{P=}\frac
{e^{\left\langle \varphi_{0},J\right\rangle }}{\mathcal{Z}_{0}}%
{\displaystyle\int\limits_{\mathcal{L}_{\mathbb{R}}\left(  \mathbb{Q}_{p}%
^{N}\right)  }}
e^{\left\langle \varphi^{\prime},J\right\rangle }\text{ }d\mathbb{P}^{\prime
}\left(  \varphi^{\prime}\right)  \\
&  =\left(  \frac{1}{\mathcal{Z}_{0}}\text{ }%
{\displaystyle\int\limits_{\mathcal{L}_{\mathbb{R}}\left(  \mathbb{Q}_{p}%
^{N}\right)  }}
e^{\left\langle \varphi^{\prime},\left(  \frac{\gamma}{2}W\left(
\partial,\delta\right)  +\frac{\alpha_{2}}{2}\right)  \varphi_{0}\right\rangle
}\text{ }d\mathbb{P}^{\prime}\left(  \varphi^{\prime}\right)  \right)
e^{\left\langle G\ast J,J\right\rangle }\\
&  =\mathcal{N}_{0}^{\prime}e^{\left\langle G\ast J,J\right\rangle
}=\mathcal{N}_{0}^{\prime}\exp\left\{  \int_{\mathbb{Q}_{p}^{N}}%
\int_{\mathbb{Q}_{p}^{N}}J(x)G(\left\Vert x-y\right\Vert _{p})J(y)d^{N}x\text{
}d^{N}y\right\}  .
\end{align*}
Furthermore, by using (\ref{Eq_Char_func_2A}), the characteristic functional
of the measure $\mathbb{P}^{\prime}$\ is
\[%
{\textstyle\int\limits_{\mathcal{L}_{\mathbb{R}}^{\prime}\left(
\mathbb{Q}_{p}^{N}\right)  }}
e^{\sqrt{-1}\langle T,f\rangle}d\mathbb{P}^{\prime}(T)=e^{-\sqrt{-1}%
\langle\varphi_{0},f\rangle-\frac{1}{2}\mathbb{B}(f,f)},\ \ f\in
\mathcal{L}_{\mathbb{R}}\left(  \mathbb{Q}_{p}^{N}\right)  ,
\]
which means that $\mathbb{P}^{\prime}$ is a Gaussian measure with mean
functional $\langle\varphi_{0},\cdot\rangle$ and correlation functional
$\mathbb{B}(\cdot,\cdot)$.
\end{proof}

The correlation functions $G_{0}^{\left(  m\right)  }(x_{1},\ldots,x_{m})$ of
the free-field theory are obtained \ from the functional derivatives of
$\mathcal{Z}_{0}(J)$ at $J=0$:

\begin{theorem}
\label{Prop3}%
\begin{gather*}
G_{0}^{\left(  m\right)  }(x_{1},\ldots,x_{m})=\left[  \frac{\delta}{\delta
J\left(  x_{1}\right)  }\cdots\frac{\delta}{\delta J\left(  x_{m}\right)
}\mathcal{Z}_{0}(J)\right]  _{J=0}\\
=\mathcal{N}_{0}^{\prime}\text{ }\frac{\delta}{\delta J\left(  x_{1}\right)
}\cdots\frac{\delta}{\delta J\left(  x_{m}\right)  }\exp\left\{
\int_{\mathbb{Q}_{p}^{N}}\int_{\mathbb{Q}_{p}^{N}}J(x)G(\left\Vert
x-y\right\Vert _{p})J(y)d^{N}x\text{ }d^{N}y\right\}  \mid_{J=0}.
\end{gather*}

\end{theorem}

\begin{remark}
The random variable $\varphi\left(  x_{i}\right)  $ corresponds to the random
variable $\left\langle W,\varphi\right\rangle $, for some $W=W(x_{i}%
)\in\mathcal{L}_{\mathbb{R}}^{\prime}\left(  \mathbb{Q}_{p}^{N}\right)  $, see
Remark \ref{Nota_10}, which is Gaussian with mean zero and variance
$\left\Vert \varphi\right\Vert _{2}^{2}$, see e.g. \cite[Lemma 2.1.5]{Obata}.
Then, the correlation functions $G_{0}^{\left(  m\right)  }(x_{1},\ldots
,x_{m})$ obey to Wick's theorem:%
\begin{equation}
\frac{1}{\mathcal{Z}_{0}}%
{\displaystyle\int\limits_{\mathcal{L}_{\mathbb{R}}\left(  \mathbb{Q}_{p}%
^{N}\right)  }}
{\displaystyle\prod\limits_{i=1}^{m}}
\varphi\left(  x_{i}\right)  d\mathbb{P=}\left\{
\begin{array}
[c]{lll}%
0 & \text{if} & m\text{ is not even}\\
&  & \\%
{\textstyle\sum\limits_{\text{pairings}}}
\mathbb{E}(\varphi\left(  x_{i_{1}}\right)  \varphi\left(  x_{j_{1}}\right)
)\cdots\mathbb{E}(\varphi\left(  x_{i_{n}}\right)  \varphi\left(  x_{j_{n}%
}\right)  ) & \text{if} & m=2n,
\end{array}
\right.  \label{Wick-Expansion}%
\end{equation}
where
\[
\mathbb{E}(\varphi\left(  x_{i}\right)  \varphi\left(  x_{j}\right)
):=\frac{1}{\mathcal{Z}_{0}}%
{\displaystyle\int\limits_{\mathcal{L}_{\mathbb{R}}\left(  \mathbb{Q}_{p}%
^{N}\right)  }}
\varphi\left(  x_{i}\right)  \varphi\left(  x_{j}\right)  d\mathbb{P}%
\]
and $%
{\textstyle\sum\limits_{\text{pairings}}}
$ means the sum over all $\frac{\left(  2n!\right)  }{2^{n}n!}$ ways of
writing $1,\ldots,2n$ as $n$ distinct (unordered) pairs $(i_{1},j_{1})$%
,\ldots,$(i_{n},j_{n})$, see e.g. \cite[Proposition 1.2]{Simon-0}.

For $n=2$, $G_{0}^{\left(  2\right)  }$ is the free two-point function or the
free propagator of the field:%
\begin{align*}
G_{0}^{\left(  2\right)  }\left(  x_{1},x_{2}\right)   &  =\mathcal{N}%
_{0}^{\prime}\text{ }\frac{\delta}{\delta J\left(  x_{1}\right)  }\frac
{\delta}{\delta J\left(  x_{2}\right)  }\exp\left\{  \int_{\mathbb{Q}_{p}^{N}%
}\int_{\mathbb{Q}_{p}^{N}}J(x)G(\left\Vert x-y\right\Vert _{p})J(y)d^{N}%
x\text{ }d^{N}y\right\}  \mid_{J=0}\\
&  =2\mathcal{N}_{0}^{\prime}\text{ }G(\left\Vert x_{1}-x_{2}\right\Vert
_{p})\in\mathcal{L}_{\mathbb{R}}^{\prime}(\mathbb{Q}_{p}^{N}\times
\mathbb{Q}_{p}^{N}).
\end{align*}
By using Wick's theorem all the $2n$-point functions can be expressed as sums
of products of two-point functions:%
\[
G_{0}^{\left(  2n\right)  }(x_{1},\ldots,x_{2n})=%
{\textstyle\sum\limits_{\text{pairings}}}
G(\left\Vert x_{i_{1}}-x_{j_{1}}\right\Vert _{p})\cdots G(\left\Vert x_{i_{n}%
}-x_{j_{n}}\right\Vert _{p}).
\]
Notice that $G_{0}^{\left(  2n\right)  }(x_{1},\ldots,x_{2n})$ is singular at
$x_{i_{1}}-x_{j_{1}}=\cdots=x_{i_{n}}-x_{j_{n}}=0$, where $\left(  i_{k}%
,j_{k}\right)  $ runs over all the possible pairings of the variables
$x_{1},\ldots,x_{2n}$. This set is a closed subset of $\mathbb{Q}_{p}^{2N}$.
\end{remark}

\subsection{Perturbation expansions for $\varphi^{4}$-theories}

In this section we assume that $\mathcal{P}(\varphi)=\varphi^{4}$. This
hypothesis allow us to provide explicit formulas which completely similar to
the classical ones, see e.g. \cite[Chapter 2]{Kleinert et al}. At any rate,
the techniques presented here can be applied to polynomial interactions of
type (\ref{Poly_interactions}).

The existence of a convergent power series expansion for $Z(J)$ (\textit{the
perturbation expansion}) in the coupling parameter $\alpha_{4}$ follows from
the fact that $\exp\left(  -E_{\text{int}}(\varphi)+\left\langle
\varphi,J\right\rangle \right)  $ is an integrable function, see Lemma
\ref{Lemma17} (i), by using the dominated convergence theorem, more precisely,
we have%
\begin{gather}
\mathcal{Z}(J)=\mathcal{Z}_{0}(J)+\frac{1}{\mathcal{Z}_{0}}%
{\displaystyle\sum\limits_{m=1}^{\infty}}
\frac{1}{m!}\left(  \frac{-\alpha_{4}}{4}\right)  ^{m}%
{\displaystyle\int\limits_{\mathcal{L}_{\mathbb{R}}\left(  \mathbb{Q}_{p}%
^{N}\right)  }}
\left\{
{\displaystyle\int\limits_{\left(  \mathbb{Q}_{p}^{N}\right)  ^{m}}}
\left(
{\textstyle\prod\limits_{i=1}^{m}}
\varphi^{4}\left(  z_{i}\right)  \right)  e^{\left\langle \varphi
,J\right\rangle }%
{\textstyle\prod\limits_{i=1}^{m}}
d^{N}z_{i}\right\}  d\mathbb{P}(\varphi)\nonumber\\
=:\mathcal{Z}_{0}(J)+%
{\displaystyle\sum\limits_{m=1}^{\infty}}
\mathcal{Z}_{m}(J), \label{Eq_35}%
\end{gather}
where%
\[
\mathcal{Z}_{0}(J)=\frac{1}{\mathcal{Z}_{0}}%
{\displaystyle\int\limits_{\mathcal{L}_{\mathbb{R}}\left(  \mathbb{Q}_{p}%
^{N}\right)  }}
e^{\left\langle \varphi,J\right\rangle }d\mathbb{P}(\varphi).
\]
In the case $m\geq1$, by using that $\mathcal{A}$ is an algebra (see Remark
\ref{Nota_3} and Lemma \ref{Lemma16}), we can apply Fubini's theorem to obtain
that%
\begin{align*}
\mathcal{Z}_{m}(J)  &  :=\frac{1}{\mathcal{Z}_{0}\text{ }m!}\left(
\frac{-\alpha_{4}}{4}\right)  ^{m}%
{\displaystyle\int\limits_{\mathcal{L}_{\mathbb{R}}\left(  \mathbb{Q}_{p}%
^{N}\right)  }}
\left\{
{\displaystyle\int\limits_{\left(  \mathbb{Q}_{p}^{N}\right)  ^{m}}}
\left(
{\textstyle\prod\limits_{i=1}^{m}}
\varphi^{4}\left(  z_{i}\right)  \right)  e^{\left\langle \varphi
,J\right\rangle }%
{\textstyle\prod\limits_{i=1}^{m}}
d^{N}z_{i}\right\}  d\mathbb{P}(\varphi)\\
&  =\frac{1}{\mathcal{Z}_{0}\text{ }m!}\left(  \frac{-\alpha_{4}}{4}\right)
^{m}%
{\displaystyle\int\limits_{\left(  \mathbb{Q}_{p}^{N}\right)  ^{m}}}
\left\{
{\displaystyle\int\limits_{\mathcal{L}_{\mathbb{R}}\left(  \mathbb{Q}_{p}%
^{N}\right)  }}
\left(
{\textstyle\prod\limits_{i=1}^{m}}
\varphi^{4}\left(  z_{i}\right)  \right)  e^{\left\langle \varphi
,J\right\rangle }d\mathbb{P}(\varphi)\right\}
{\textstyle\prod\limits_{i=1}^{m}}
d^{N}z_{i}.
\end{align*}
Then
\begin{equation}
\mathcal{Z}_{m}(0)=\frac{1}{m!}\left(  \frac{-\alpha_{4}}{4}\right)  ^{m}%
{\displaystyle\int\limits_{\left(  \mathbb{Q}_{p}^{N}\right)  ^{m}}}
G_{0}^{\left(  4m\right)  }\left(  z_{1},z_{1},z_{1},z_{1},\ldots,z_{m}%
,z_{m},z_{m},z_{m}\right)
{\textstyle\prod\limits_{i=1}^{m}}
d^{N}z_{i}, \label{Eq_36}%
\end{equation}
for $m\geq1$. Therefore from (\ref{Eq_35})-(\ref{Eq_36}), with $J=0$, and
using $\mathcal{Z}=\mathcal{Z}(0)$, $\mathcal{Z}_{m}(0):=\mathcal{Z}_{m}$, for
$m\geq1$,%
\[
\mathcal{Z}=1+%
{\displaystyle\sum\limits_{m=1}^{\infty}}
\mathcal{Z}_{m}.
\]
Now by using Propositions \ref{Prop1}, \ref{Prop3}\ and (\ref{Eq_35}),%

\begin{gather}
G^{\left(  n\right)  }\left(  x_{1},\ldots,x_{n}\right)  =\frac{\mathcal{Z}%
_{0}}{\mathcal{Z}}\left[  \frac{\delta}{\delta J\left(  x_{1}\right)  }%
\cdots\frac{\delta}{\delta J\left(  x_{n}\right)  }\mathcal{Z}(J)\right]
_{J=0}\nonumber\\
=\frac{\mathcal{Z}_{0}}{\mathcal{Z}}\left[  \frac{\delta}{\delta J\left(
x_{1}\right)  }\cdots\frac{\delta}{\delta J\left(  x_{n}\right)  }%
\mathcal{Z}_{0}(J)\right]  _{J=0}+\frac{\mathcal{Z}_{0}}{\mathcal{Z}}\left[
\frac{\delta}{\delta J\left(  x_{1}\right)  }\cdots\frac{\delta}{\delta
J\left(  x_{n}\right)  }%
{\displaystyle\sum\limits_{m=1}^{\infty}}
\mathcal{Z}_{m}(J)\right]  _{J=0}\nonumber\\
=\frac{\mathcal{Z}_{0}}{\mathcal{Z}}G_{0}^{\left(  n\right)  }\left(
x_{1},\ldots,x_{n}\right)  +\frac{\mathcal{Z}_{0}}{\mathcal{Z}}\left[
\frac{\delta}{\delta J\left(  x_{1}\right)  }\cdots\frac{\delta}{\delta
J\left(  x_{n}\right)  }%
{\displaystyle\sum\limits_{m=1}^{\infty}}
\mathcal{Z}_{m}(J)\right]  _{J=0}. \label{Eq_36A}%
\end{gather}

\begin{lemma}%
\begin{gather*}
\frac{\delta}{\delta J\left(  x_{1}\right)  }\cdots\frac{\delta}{\delta
J\left(  x_{n}\right)  }%
{\displaystyle\sum\limits_{m=1}^{\infty}}
\mathcal{Z}_{m}(J)=\\
\frac{1}{\mathcal{Z}_{0}}\text{ }%
{\displaystyle\sum\limits_{m=1}^{\infty}}
\frac{1}{m!}\left(  \frac{-\alpha_{4}}{4}\right)  ^{m}%
{\displaystyle\int\limits_{\left(  \mathbb{Q}_{p}^{N}\right)  ^{m}}}
\left\{
{\displaystyle\int\limits_{\mathcal{L}_{\mathbb{R}}\left(  \mathbb{Q}_{p}%
^{N}\right)  }}
\left(
{\textstyle\prod\limits_{i=1}^{m}}
\varphi^{4}\left(  z_{i}\right)  \right)  \left(
{\textstyle\prod\limits_{i=1}^{n}}
\varphi\left(  x_{i}\right)  \right)  e^{\left\langle \varphi,J\right\rangle
}d\mathbb{P}(\varphi)\right\}
{\textstyle\prod\limits_{i=1}^{m}}
d^{N}z_{i}.
\end{gather*}

\end{lemma}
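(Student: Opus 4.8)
The plan is to avoid differentiating the series in (\ref{Eq_35}) directly, and instead reduce everything to Lemma \ref{Lemma18}. First I would observe that, by the expansion (\ref{Eq_35}), one has $\sum_{m=1}^{\infty}\mathcal{Z}_m(J)=\mathcal{Z}(J)-\mathcal{Z}_0(J)$ as functions of the current $J$, where $\mathcal{Z}_0(J)=\mathcal{Z}(J;\delta,\gamma,\alpha_2,0)$ is the free generating functional. Since each functional derivative $\frac{\delta}{\delta J(x_i)}$ is defined through the limit of a difference quotient, it is linear on this finite difference, so the left-hand side of the asserted identity equals $\frac{\delta}{\delta J(x_1)}\cdots\frac{\delta}{\delta J(x_n)}\mathcal{Z}(J)-\frac{\delta}{\delta J(x_1)}\cdots\frac{\delta}{\delta J(x_n)}\mathcal{Z}_0(J)$. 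Both terms are computed by Lemma \ref{Lemma18}: applying formula (\ref{Eq_31}) to $\mathcal{Z}(J)$, and the same computation to its $\alpha_4=0$ specialization $\mathcal{Z}_0(J)$ (whose proof is valid verbatim since $e^{-E_{\text{int}}}\leq 1$ holds trivially when $E_{\text{int}}=0$), and subtracting gives
\[
\frac{\delta}{\delta J(x_1)}\cdots\frac{\delta}{\delta J(x_n)}\sum_{m=1}^{\infty}\mathcal{Z}_m(J)=\frac{1}{\mathcal{Z}_0}\int_{\mathcal{L}_{\mathbb{R}}\left(\mathbb{Q}_p^N\right)}\left(e^{-E_{\text{int}}(\varphi)}-1\right)e^{\langle\varphi,J\rangle}\left(\prod_{i=1}^{n}\varphi(x_i)\right)d\mathbb{P}(\varphi).
\]

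It then remains to expand $e^{-E_{\text{int}}(\varphi)}-1$ and match it to the claimed series. With $\mathcal{P}(\varphi)=\varphi^4$ one has $E_{\text{int}}(\varphi)=\frac{\alpha_4}{4}\int_{\mathbb{Q}_p^N}\varphi^4(x)d^Nx$, so $e^{-E_{\text{int}}(\varphi)}-1=\sum_{m=1}^{\infty}\frac{1}{m!}\left(\frac{-\alpha_4}{4}\right)^m\left(\int_{\mathbb{Q}_p^N}\varphi^4(x)d^Nx\right)^m$, and I would rewrite the $m$-th power as an iterated integral $\int_{(\mathbb{Q}_p^N)^m}\prod_{i=1}^{m}\varphi^4(z_i)\prod_{i=1}^m d^N z_i$. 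Interchanging this summation and the auxiliary $z$-integrations with the field integral $\int_{\mathcal{L}_{\mathbb{R}}}(\cdot)\,d\mathbb{P}$ reproduces exactly the right-hand side of the statement. The interchange of the $z$-integrations is justified by Fubini's theorem together with Remark \ref{Nota_Nuclear}, using Lemma \ref{Lemma16} and Remark \ref{Nota_3} to see that the integrand $\prod_{i=1}^m\varphi^4(z_i)\prod_{i=1}^n\varphi(x_i)e^{\langle\varphi,J\rangle}$ belongs to the algebra $\mathcal{A}$ multiplied by an integrable exponential, hence lies in $\left(L^1_{\mathbb{R}}\right)$.

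I expect the \emph{main obstacle} to be the interchange of the infinite summation with the field integral. The plan is to dominate the partial sums uniformly: since $\mathcal{P}(\varphi)=\varphi^4\geq0$ and $\alpha_4>0$, each partial sum of $e^{-E_{\text{int}}(\varphi)}-1$ is bounded in absolute value, and the key input $e^{-E_{\text{int}}(\varphi)}\leq1$ (used already in Lemma \ref{Lemma17}) shows that the tails are controlled by an integrable majorant built from $e^{\langle\varphi,J\rangle}\prod_{i=1}^n|\varphi(x_i)|$, whose integrability follows from Lemma \ref{Lemma17}(i) and Remark \ref{Nota_3}. The dominated convergence theorem then permits passing the sum outside the integral, completing the identification. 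Finally, I would note that this argument is equivalent to differentiating (\ref{Eq_35}) term by term: applying the computation in the proof of Lemma \ref{Lemma18} to each fixed summand $\mathcal{Z}_m(J)$ yields the $m$-th term of the asserted series, and the domination above legitimizes summing the resulting identities.
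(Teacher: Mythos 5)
Your route is genuinely different from the paper's. The paper differentiates the series $\sum_{m}\mathcal{Z}_m(J)$ term by term: it uses the decomposition of $\prod_{i}\varphi^{4}(z_i)$ from the proof of Lemma \ref{Lemma16} to write each $\mathcal{Z}_m(J)$ as a finite sum of integrals of elements of the algebra $\mathcal{A}$ times $e^{\langle\varphi,J\rangle}$, applies the mean-value-theorem and dominated-convergence argument of Lemma \ref{Lemma18} to each such term, passes the derivative through the infinite sum, and concludes via Remark \ref{Nota_eqqui_Der}. You instead use $\sum_{m\geq1}\mathcal{Z}_m(J)=\mathcal{Z}(J)-\mathcal{Z}_0(J)$, differentiate both pieces by Lemma \ref{Lemma18} (the $\alpha_4=0$ specialization being legitimate since $e^{0}\leq1$), and only afterwards re-expand $e^{-E_{\text{int}}(\varphi)}-1$ under the $d\mathbb{P}$-integral. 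This is a clean reorganization: it confines all of the differentiation analysis to the already-proved Lemma \ref{Lemma18} and isolates the one remaining issue, namely the interchange of $\sum_{m}$ with $\int d\mathbb{P}$, which is structurally the same interchange the paper invokes (without the extra polynomial factor $\prod_{i}\varphi(x_i)$) when asserting (\ref{Eq_35}).

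However, your justification of that interchange does not work as written. The claim that ``each partial sum of $e^{-E_{\text{int}}(\varphi)}-1$ is bounded in absolute value'' is false: already the first partial sum is $-E_{\text{int}}(\varphi)=-\frac{\alpha_4}{4}\int\varphi^{4}\,d^{N}x$, which is unbounded on $\mathcal{L}_{\mathbb{R}}$, and the $M$-th partial sum grows like $E_{\text{int}}^{M}/M!$ for large fields. The pointwise envelope of all partial sums is $e^{E_{\text{int}}(\varphi)}-1$, the exponential of a positive quartic functional, and this is not integrable against the Gaussian measure $\mathbb{P}$; hence there is no obvious integrable majorant and the dominated convergence theorem cannot be applied in the form you propose. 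The honest reduction is via the Taylor remainder: for $t=E_{\text{int}}(\varphi)\geq0$ one has $\bigl|e^{-t}-1-\sum_{m=1}^{M}(-t)^{m}/m!\bigr|\leq t^{M+1}/(M+1)!$, so the interchange holds provided $\frac{1}{(M+1)!}\int E_{\text{int}}^{M+1}\,e^{\langle\varphi,J\rangle}\prod_{i}|\varphi(x_i)|\,d\mathbb{P}\rightarrow0$ as $M\rightarrow\infty$ (each such integral is finite by Remark \ref{Nota_3}). This is precisely the convergence statement that also underlies (\ref{Eq_35}) and the paper's own term-by-term differentiation, so at the level of rigor the paper operates at your reduction is sound and arguably more transparent; but the step you yourself identified as the main obstacle is not closed by the bound you give, and should be replaced by the remainder estimate (or by an explicit appeal to the interchange already granted in establishing (\ref{Eq_35})).
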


\begin{proof}
We recall that by the proof of Lemma \ref{Lemma16},
\[
\mathcal{J}\left(  \varphi\right)  :=%
{\displaystyle\int\limits_{\left(  \mathbb{Q}_{p}^{N}\right)  ^{m}}}
\left(
{\textstyle\prod\limits_{i=1}^{m}}
\varphi^{4}\left(  z_{i}\right)  \right)
{\textstyle\prod\limits_{i=1}^{m}}
d^{N}z_{i}%
\]
is a \ finite sum of terms of the form
\[
\left(
{\displaystyle\prod\limits_{k=1}^{m}}
p^{lNe_{i_{k}}}\left\langle \varphi,W_{\boldsymbol{j}_{k}}\right\rangle
^{e_{i_{k}}}\right)
{\displaystyle\prod\limits_{k=1}^{m}}
\Omega\left(  p^{l}\left\Vert x_{k}-\boldsymbol{j}_{k}\right\Vert _{p}\right)
,
\]
then by the definition of $\mathcal{Z}_{m}(J)$ and Fubini's theorem, it is
sufficient to compute%
\begin{gather*}
\frac{\delta}{\delta J\left(  x_{1}\right)  }\cdots\frac{\delta}{\delta
J\left(  x_{n}\right)  }%
{\displaystyle\sum\limits_{m=1}^{\infty}}
\frac{1}{\mathcal{Z}_{0}\text{ }m!}\left(  \frac{-\alpha_{4}}{4}\right)
^{m}\times\\%
{\displaystyle\int\limits_{\mathcal{L}_{\mathbb{R}}\left(  \mathbb{Q}_{p}%
^{N}\right)  }}
\left\{  \left(
{\displaystyle\prod\limits_{k=1}^{m}}
p^{lNe_{i_{k}}}\left\langle \varphi,W_{\boldsymbol{j}_{k}}\right\rangle
^{e_{i_{k}}}\right)  e^{\left\langle \varphi,J\right\rangle }\right\}
d\mathbb{P}(\varphi).
\end{gather*}
We first establish that%
\begin{gather*}
{\Huge D}_{\theta_{1}}\left\{
{\displaystyle\sum\limits_{m=1}^{\infty}}
\frac{1}{\mathcal{Z}_{0}m!}\left(  \frac{-\alpha_{4}}{4}\right)  ^{m}%
{\displaystyle\int\limits_{\mathcal{L}_{\mathbb{R}}\left(  \mathbb{Q}_{p}%
^{N}\right)  }}
\left\{  \left(
{\displaystyle\prod\limits_{k=1}^{m}}
p^{lNe_{i_{k}}}\left\langle \varphi,W_{\boldsymbol{j}_{k}}\right\rangle
^{e_{i_{k}}}\right)  e^{\left\langle \varphi,J\right\rangle }\right\}
d\mathbb{P}(\varphi)\right\}  \\
=%
{\displaystyle\sum\limits_{m=1}^{\infty}}
\frac{1}{\mathcal{Z}_{0}m!}\left(  \frac{-\alpha_{4}}{4}\right)  ^{m}%
{\displaystyle\int\limits_{\mathcal{L}_{\mathbb{R}}\left(  \mathbb{Q}_{p}%
^{N}\right)  }}
\left\{  \left(
{\displaystyle\prod\limits_{k=1}^{m}}
p^{lNe_{i_{k}}}\left\langle \varphi,W_{\boldsymbol{j}_{k}}\right\rangle
^{e_{i_{k}}}\right)  \left\langle \varphi,\theta_{1}\right\rangle
e^{\left\langle \varphi,J\right\rangle }\right\}  d\mathbb{P}(\varphi),
\end{gather*}
by using the reasoning given in the proof of Lemma \ref{Lemma18}. Since
\[
\left(
{\displaystyle\prod\limits_{k=1}^{m}}
p^{lNe_{i_{k}}}\left\langle \varphi,W_{\boldsymbol{j}_{k}}\right\rangle
^{e_{i_{k}}}\right)  \left\langle \varphi,\theta_{1}\right\rangle
e^{\left\langle \varphi,J\right\rangle }\text{ is an integrable function,}%
\]
cf. Remark \ref{Nota_3}, further derivatives can be calculated in the same
way. Consequently,%
\begin{gather}
{\Huge D}_{\theta_{1}}\cdots{\Huge D}_{\theta_{m}}%
{\displaystyle\sum\limits_{m=1}^{\infty}}
\mathcal{Z}_{m}(J)=\label{Eq_36B}\\
\frac{1}{\mathcal{Z}_{0}}%
{\displaystyle\sum\limits_{m=1}^{\infty}}
\frac{1}{m!}\left(  \frac{-\alpha_{4}}{4}\right)  ^{m}%
{\displaystyle\int\limits_{\mathcal{L}_{\mathbb{R}}\left(  \mathbb{Q}_{p}%
^{N}\right)  }}
\left\{  \left(
{\textstyle\prod\limits_{i=1}^{n}}
\left\langle \varphi,\theta_{i}\right\rangle \right)  e^{\left\langle
\varphi,J\right\rangle }\mathcal{J}\left(  \varphi\right)  d\mathbb{P}%
(\varphi)\right\}  .\nonumber
\end{gather}
The announced formula follows from (\ref{Eq_36B}) by Fubini's theorem.
\end{proof}

Now by using (\ref{Eq_36A}) and Remark \ref{Nota_eqqui_Der}, we have the
following result:

\begin{theorem}
\label{Theorem2}Assume that $\mathcal{P}(\varphi)=\varphi^{4}$. The $n$-point
correlation function of the field $\varphi$ admits the following convergent
power series in the coupling constant:%
\begin{equation}
G^{\left(  n\right)  }\left(  x_{1},\ldots,x_{n}\right)  =\frac{\mathcal{Z}%
_{0}}{\mathcal{Z}}\left\{  G_{0}^{\left(  n\right)  }\left(  x_{1}%
,\ldots,x_{n}\right)  +%
{\displaystyle\sum\limits_{m=1}^{\infty}}
G_{m}^{\left(  n\right)  }\left(  x_{1},\ldots,x_{n}\right)  \right\}  \text{
in }\mathcal{L}_{\mathbb{R}}^{\prime}\left(  \mathbb{Q}_{p}^{N}\right)  ,
\label{Eq_38}%
\end{equation}
where%
\begin{multline*}
G_{m}^{\left(  n\right)  }\left(  x_{1},\ldots,x_{n}\right)  :=\\
\frac{1}{m!}\left(  \frac{-\alpha_{4}}{4}\right)  ^{m}%
{\displaystyle\int\limits_{\left(  \mathbb{Q}_{p}^{N}\right)  ^{m}}}
G_{0}^{\left(  n+4m\right)  }\left(  z_{1},z_{1},z_{1},z_{1},\ldots
,z_{m},z_{m},z_{m},z_{m},x_{1},\ldots,x_{n}\right)
{\textstyle\prod\limits_{i=1}^{m}}
d^{N}z_{i}\in\mathcal{L}_{\mathbb{R}}^{\prime}\left(  \mathbb{Q}_{p}%
^{N}\right)
\end{multline*}

\end{theorem}

It is important to emphasize that formula (\ref{Eq_38}) is an equality between
distributions `with singularities'. The free-field correlation functions
$G_{0}^{\left(  n+4m\right)  }$ in the sum may now Wick-expanded as in
(\ref{Wick-Expansion}) into sums over products of propagators $G_{0}^{\left(
2\right)  }$. Then, like in the classical case, a renormalization procedure is needed.

\section{\label{Section_Wick_rotation}The Wick rotation}

The classical generating functional of $\mathcal{P}(\varphi)$-theory with
Lagrangian density $E_{0}(\varphi)+E_{\text{int}}(\varphi)+E_{\text{source}%
}(\varphi,J)$ in the Minkowski space is%
\[
\mathcal{Z}^{\text{phys}}(J)=\frac{\int D(\varphi)e^{\sqrt{-1}\left\{
E_{0}(\varphi)+E_{\text{int}}(\varphi)+E_{\text{source}}(\varphi,J)\right\}
}}{\int D(\varphi)e^{\sqrt{-1}\left\{  E_{0}(\varphi)+E_{\text{int}}%
(\varphi)\right\}  }}.
\]
A natural $p$-adic analogue of this function is%
\[
\mathcal{Z}_{\mathbb{C}}(J)=\frac{%
{\displaystyle\int\limits_{\mathcal{L}_{\mathbb{R}}\left(  \mathbb{Q}_{p}%
^{N}\right)  }}
e^{\sqrt{-1}\left\{  E_{\text{int}}(\varphi)+E_{\text{source}}(\varphi
,J)\right\}  }d\mathbb{P}(\varphi)}{%
{\displaystyle\int\limits_{\mathcal{L}_{\mathbb{R}}\left(  \mathbb{Q}_{p}%
^{N}\right)  }}
e^{\sqrt{-1}\left\{  E_{0}(\varphi)+E_{\text{int}}(\varphi)\right\}
}d\mathbb{P}(\varphi)}.
\]
Which is a complex-value measure. The key point is that $e^{\sqrt{-1}\left\{
E_{0}(\varphi)+E_{\text{int}}(\varphi)+E_{\text{source}}(\varphi,J)\right\}
}$ is integrable, see \cite[Theorem 1.9]{Hida et al}, and then the techniques
presented here \ can be applied to $\mathcal{Z}_{\mathbb{C}}(J)$ and its
discrete version%
\[
\mathcal{Z}_{\mathbb{C}}^{\left(  l\right)  }(J)=\frac{%
{\displaystyle\int\limits_{\mathcal{L}_{\mathbb{R}}^{l}\left(  \mathbb{Q}%
_{p}^{N}\right)  }}
e^{\sqrt{-1}\left\{  E_{\text{int}}(\varphi)+E_{\text{source}}(\varphi
,J)\right\}  }d\mathbb{P}_{l}(\varphi)}{%
{\displaystyle\int\limits_{\mathcal{L}_{\mathbb{R}}^{l}\left(  \mathbb{Q}%
_{p}^{N}\right)  }}
e^{\sqrt{-1}\left\{  E_{0}(\varphi)+E_{\text{int}}(\varphi)\right\}
}d\mathbb{P}_{l}(\varphi)},\text{ }l\in N\smallsetminus\left\{  0\right\}  .
\]
In particular a version Theorem \ref{Theorem2} is valid for $\mathcal{Z}%
_{\mathbb{C}}(J)$.\ To explain the connection of these constructions with Wick
rotation, we rewrite (\ref{Eq_Char_func_2A}) as follows:%
\begin{equation}%
{\textstyle\int\limits_{\mathcal{L}_{\mathbb{R}}^{\prime}\left(
\mathbb{Q}_{p}^{N}\right)  }}
e^{\sqrt{-1}\lambda\langle W,f\rangle}d\mathbb{P}(W)=e^{-\frac{\left\vert
\lambda\right\vert ^{2}}{2}\mathbb{B}(f,f)}\text{,}\ \ f\in\mathcal{L}%
_{\mathbb{R}}\left(  \mathbb{Q}_{p}^{N}\right)  \text{, for }\lambda
\in\mathbb{C}\text{.}\label{Eq_Char_func_2}%
\end{equation}
This formula holds true in the case $\lambda\in\mathbb{R}$. The integral in
the right-hand side of (\ref{Eq_Char_func_2}) admits an entire analytic
continuation to the complex plane, see \cite[Proposition 2.4]{Hida et al}.
Furthermore, this fact is exactly the Analyticity Axiom (OS0) in the Euclidean
axiomatic quantum field presented in \cite[Chapter 6]{Glimm-Jaffe}.

A field $\varphi:\mathbb{Q}_{p}^{N}\rightarrow\mathbb{R}$ is a function from
the spacetime $\mathbb{Q}_{p}^{N}$ into $\mathbb{R}$ (the target space). We
perform a Wick rotation in the target space:%
\[%
\begin{array}
[c]{lll}%
\mathbb{R} & \rightarrow & \sqrt{-1}\mathbb{R}\\
&  & \\
\varphi & \rightarrow & \sqrt{-1}\varphi.
\end{array}
\]
Then
\[%
{\textstyle\int\limits_{\mathcal{L}_{\mathbb{R}}^{\prime}\left(
\mathbb{Q}_{p}^{N}\right)  }}
e^{\sqrt{-1}\langle T,\sqrt{-1}\varphi\rangle}d\mathbb{P}(T)=%
{\textstyle\int\limits_{\mathcal{L}_{\mathbb{R}}^{\prime}\left(
\mathbb{Q}_{p}^{N}\right)  }}
e^{\sqrt{-1}\langle\sqrt{-1}T,\varphi\rangle}d\mathbb{P}(T)=e^{-\frac{1}%
{2}\mathbb{B}(\varphi,\varphi)}\text{.}%
\]
Changing variables as $W=\sqrt{-1}T$, we get%
\[
e^{-\frac{1}{2}\mathbb{B}(\varphi,\varphi)}=%
{\textstyle\int\limits_{\sqrt{-1}\mathcal{L}_{\mathbb{R}}^{\prime}\left(
\mathbb{Q}_{p}^{N}\right)  }}
e^{\sqrt{-1}\langle W,\varphi\rangle}d\mathbb{P}^{\prime}(W).
\]
Therefore, $\mathbb{P}^{\prime}$ is a probability measure in $\sqrt
{-1}\mathcal{L}_{\mathbb{R}}^{\prime}\left(  \mathbb{Q}_{p}^{N}\right)  $ with
correlation functional $\mathbb{B}(\cdot,\cdot)$, that can be identified with
$\mathbb{P}$.

\bigskip

\end{document}